\newcommand{\bm}[1]{\mbox{\boldmath $#1$}}
\DeclareFontFamily{OT1}{rsfs}{}
\DeclareFontShape{OT1}{rsfs}{m}{n}{ <-7> rsfs5 <7-10> rsfs7 <10->
rsfs10}{} \DeclareMathAlphabet{\mycal}{OT1}{rsfs}{m}{n}
\newcommand{\mnotex}[1]
{\protect{\stepcounter{mnotecount}}$^{\mbox{\footnotesize $\bullet$\themnotecount}}$ 
\marginpar{
\raggedright\tiny\em
$\!\!\!\!\!\!\,\bullet $\themnotecount: #1} }
\newtheorem{Tma}{Theorem}
\newtheorem{Prop}{Proposition}
\newtheorem{Def}{Definition}
\newtheorem{Lem}{Lemma}
\newtheorem{Cor}{Corollary}
\newtheorem{remark}{Remark}
\theoremstyle{remark} 
\def\Kkzero{K^k_0}
\def\defi{:=}
\def\scri{{\mycal I}}
\def\Scalgam{\mathrm{Scal}^{\gamSig}}
\def\Eing{\mathrm{Ein}^g}
\def\Ricg{\mathrm{Ric}^g}
\def\Riemg{\mathrm{Riem}^g}
\def\tr{\mathrm{tr}}
\def\N{\Omega}
\def\thk{\theta_k}
\def\thl{\theta_{\ell}}
\def\thlvarphi{\theta_{\ellvarphi}}
\def\Sig{S}
\def\la{\langle}
\def\ra{\rangle}
\def\ellvarphi{\ell^{\varphi}}
\def\Dlam{D_{\lambda}}
\def\s{s}
\def\mm{M(\Siglam,\ell)}
\def\mmvarphi{M(\Sig,\ell^{\varphi})}
\def\mmlam{M(\Sig_{\lambda},\ell)}
\def\mmlamp{M(\Sig_{\lambda},\ell^\star)}
\def\D{D}
\def\G{G}
\def\Siglam{\Sig_{\lambda}}
\def\Sigmu{\Sig_{\mu}}
\def\gamSiglam{\gamma_{\Siglam}}
\def\gamSigmu{\gamma_{\Sigmu}}
\def\gamSigzero{\gamma_{\Sig_0}}
\def\volSiglam{\bm{\eta_{\Siglam}}}
\def\volSigzero{\bm{\eta_{\Sig_0}}}
\def\volunitdos{\bm{\eta_{\q}}}
\def\M{{\mathcal M}}
\def \Mcal {\mathcal{M}^{1,3}}
\def\gMm{g}
\def\gamSig{\gamma_{\Sig}}
\def\volSig{\bm{\eta_{\Sig}}}
\def\qh{\hat{q}}
\def\q{{\mathring{q}}}
\def\gammatilde{\tilde{\gamma}}
\def\s{s}
\def\sone{\s^{(1)}_{\ell}{}}
\def\stilde{\tilde{\s}}
\def\thlone{\theta_{\ell}^{(1)}}
\def\thltilde{\tilde{\theta}_{\ell}}
\def\thkone{\theta_{k}^{(1)}}
\def\thktwo{\theta_{k}^{(2)}}
\def \esfdos{\mathbb{S}^2}
\def \rq{R_{\hat{q}}}
\def\Sh{\hat{S}}
\def\volsh {\bm{\eta_{\hat{q}}}} 
\def\ct {c} 
\def \f {{f}}
\def\esf {\mathring{\nabla}}
\def\Scal{\mathrm{Scal}}
\def\K{\mathcal K}
\def\c {{\thkone}}
\def \cdos {{\thktwo}}
\def\a {{\thlone}}
\def\Id{\mathrm{\bm{Id}}}
\def\A{\thlone}
\def\del {\bm{\delta}}
\def\JournalPrep#1#2#3{#1, ``#2'', #3.}
\def\Journal#1#2#3#4#5#6{#1, ``#2'', {\em #3} {\bf #4}, #5 (#6).}
\def\JDG{\em J. Diff. Geom.}
\def\CQG{\em Class. Quantum Grav.}
\def\JPA{\em J. Phys. A: Math. Gen.}
\def\PRD{{\em Phys. Rev.} {\bf D}}
\def\CMP{\em Commun. Math. Phys.}
\def\PREP{\em Phys. Rep.}
\def\ANYAS{\em Ann. N. Y. Acad. Sci.}
\def\AM{\em Ann. Math.}
\def\DMJ{\em Duke Math. J.}
\begin{document}

\title{On the Penrose inequality along null hypersurfaces}

\author{Marc Mars$^1$ and Alberto Soria$^2$ \\
Facultad de Ciencias, Universidad de Salamanca,\\
Plaza de la Merced s/n, 37008 Salamanca, Spain \\
$^1$ \,marc@usal.es,  $^2$ \,asoriam@usal.es, }

\maketitle

\begin{abstract}
The null Penrose inequality, i.e. the Penrose inequality in terms
of the Bondi energy, is studied by introducing a funtional
on surfaces and studying its properties along a null
hypersurface $\Omega$ extending to past null infinity. We prove a general 
Penrose-type inequality which involves
the limit at infinity of the Hawking energy along a
specific class of geodesic foliations called {\it Geodesic Asymptotic Bondi}
(GAB), which are shown to always exist.
Whenever, this foliation approaches large  spheres, this inequality
becomes the null Penrose inequality and we recover the results
of Ludvigsen-Vickers and Bergqvist. By exploiting further properties of 
the functional along general geodesic foliations, we introduce
an approach to the null Penrose inequality
called {\it Renormalized Area Method} and find a set of
two conditions which implies the validity of the
null Penrose inequality. One of the conditions involves a limit at infinity and
the other a condition on the spacetime curvature along the flow. We investigate
their range of applicability in two particular but interesting cases, namely
the shear-free and vacuum case, where the null Penrose inequality
is known to hold from the results by Sauter, and the case of null
shells propagating in the Minkowski spacetime. Finally, a general inequality
bounding the area of the quasi-local black hole in terms of an asymptotic
quantity intrinsic of $\Omega$ is derived.

\end{abstract}

\section{Introduction}

The Penrose inequality in asymptotically flat spacetimes satisfying
the dominant energy condition conjectures
that the total energy measured by any observer is bounded below in terms
of the area of suitable spacelike surfaces related to quasi-local black holes.
This conjecture has received much attention since its formulation in 
\cite{Penrose1973} and constitutes an important open problem in gravitation.
It has been proved in full generality only in spherical symmetry 
\cite{MalecMurchadha1994, Hayward1996}
and for time symmetric hypersurfaces \cite{HuiskenIlmanen2001},
\cite{Bray2001} (extended to space dimensions up to seven in \cite{BrayLee}). 
For a review of the results prior to 2009
the reader is referred to \cite{Mars2009}.

In recent years one of the several lines of research that have been
pursued involves the Penrose inequality in the null case. Here, the
total energy of the spacetime is the Bondi energy $E_B$ measured
by an asymptotically inertial observer at a cut $S_{\infty}$
of past null infinity $\scri^{-}$ and the quasi-local black hole is
a spacelike surface $S_0$ with the two properties of (i) having
non-positive future outer null expansion (i.e.
it is a {\it weakly outer trapped surface} )
and (ii) the outgoing, past directed
null hypersurface $\Omega$ starting at $S_0$ extends smoothly
all the way to infinity and intersects $\scri^{-}$ at $S_{\infty}$.
Since, in such a setup, $S_0$ has smaller area than any
other surface embedded in $\Omega$ to the past of $S_0$, the Penrose inequality has the form
\begin{equation}
\label{PIa}
E_B \geq \sqrt{\frac{|S_0|}{16\pi}}
\end{equation}
and does not require invoking minimal area enclosures of the quasi-local
black hole surface $S_0$, as in the general case.
This version of the Penrose inequality is often referred
to as the {\it null Penrose inequality}. It has been proved only in a few
special cases, including the case when $\Omega$ is shear-free
and vacuum by Sauter \cite{Sauter2008}. Using a non-linear
perturbation argument around a spherically symmetric null hypersurface
in the Schwarzschild spacetime (which is indeed shear-free), Alexakis 
has been able
to prove \cite{Alexakis}
the null Penrose inequality for vacuum spacetimes
close enough (in a suitable sense) to the Schwarzschild exterior 
spacetime.  The null Penrose inequality contains, as
a particular case, the original formulation due to Penrose involving
null shells of dust propagating in the Minkowski spacetime. This problem
has also received attention recently, both for shells in Minkowski
\cite{Wang1}, \cite{MarsSoria2012} as well as for a related conjecture in the
Schwarzschild spacetime  \cite{BrendleWang}.

A proof of the general null Penrose inequality was claimed 
by Ludvigsen and Vickers \cite{LudvigsenVickers1983}. However,
a gap was found by Bergqvist \cite{Bergqvist1997}
who, at the same time,
substantially streamlined the argument. Since 
Ludvigsen-Vickers \& Bergqvist's argument is relevant for this paper
let us describe it in some detail. Their method was based on two facts. The
first one was the existence of a quasi-local  object defined
on surfaces which enjoyed monotonicity properties along past directed
null geodesic foliations. This functional was introduced by
Bergqvist \cite{Bergqvist1997} ant it has been called
sometimes Bergqvist mass in the literature \cite{Mars2009},
\cite{MarsSoria2012}. The second fact was a suitable upper bound for the
area of the weakly outer trapped surface $S_0$.  Establishing  this bound
involved that the geodesic null foliation $\{ S_{r} \}$ of $\Omega$
starting at $S_{r_0} = S_0$ (where $r_0 \in \mathbb{R}^+$ and
the range of $r$ is $[r_0,\infty)$)
dragging $S_0$ to 
past null infinity satisfied two additional properties. The first
one was that 
the future null expansion $\theta_k$ of $S_{r}$
along the future null generator $k$ tangent to $\Omega$ admits
an expansion of the form
\begin{equation}
\label{LVBexpansion}
\theta_k=\frac{-2}{r}+O\left(\frac{1}{r^3}\right), 
\end{equation}   
i.e. with vanishing coefficient in the term $r^{-2}$. The second
one was that the rescaled metric $r^{-2} \gamma(r)$ (where 
$\gamma(r)$ is the induced metric of $S_r$) approaches a round metric
on the sphere when $r \rightarrow + \infty$ (one says that
$\{ S_{r} \}$ approaches large spheres). The main result by
Ludvigsen-Vickers is that under these circumstances the 
Penrose inequality (\ref{PIa}) follows. Lugvigsen and Vickers took for
granted that a geodesic foliation $\{ S_r \}$ satisfying these two
properties always exists. Bergqvist noted that under the assumption
(\ref{LVBexpansion}) it was not at all clear that the condition
that the metric $r^{-2} \gamma(r)$ approaches a round sphere needs to be 
satisfied. This was the gap in 
the original paper \cite{LudvigsenVickers1983}.
In \cite{MarsSoria2012} we investigated the Penrose
inequality for dust null shells in Minkowski and proved its validity
for a large class of surfaces. It turns out that the class of 
surfaces $S_0$ for which the Ludvigsen-Vickers method  applies
is very restrictive \cite{MarsSoriaProc}. On any given past directed
outward null hypersurface $\Omega$ in the Minkowski spacetime extending
smoothly all the way to past null infinity, there was only a one-parameter
family of surfaces for which the Ludvigsen-Vickers \& Bergqvist argument applies. Our method
in \cite{MarsSoria2012} 
was based on geodesic foliations approaching large
spheres but did not rely on the condition (\ref{LVBexpansion}). 
The arguments however were tailored to the Minkowski spacetime where
the null dust shell propagates. It makes sense to try and
extend the ideas of \cite{MarsSoria2012}, which in turn
were motivated by Bergqvist's approach, to find sufficient conditions
for the null Penrose
inequality in general asymptotically flat spacetimes satisfying
the dominant energy condition. This is one of the main objectives
of the present paper.

The second main objective is complementary to the previous one. Instead
of relaxing condition (\ref{LVBexpansion}) and keeping the assumption
that $\{ S_{r} \}$ approaches large spheres, it is natural
to consider the setup 
when (\ref{LVBexpansion})
is kept and we relax the condition of approaching large
spheres. Geodesic foliations with this property are
named ``Geodesic Asymptotically Bondi'' in this paper (or GAB
for short). A motivation for this name will be given later.
GAB foliations turn out to always exists and be (geometrically)
unique given any cross section $S_0$ in a past asymptotically
flat null hypersurface. Our main result in this setting is a
Penrose type inequality which relates the
area of any weakly outer trapped surface $S_0$ and the limit at infinity of the Hawking energy along the GAB foliation
associated to $S_0$. More precisely (see below for the precise
definitions).
\begin{Tma}[\bf A Penrose type inequality for GAB foliations]
\label{main1}
Let $\Omega$ be a past asymptotically flat null hypersurface
in a spacetime $(\M,g)$ satisfying the dominant energy condition. Let
$S_0$ be a spacelike cross section of $\Omega$. If $S_0$ is a
weakly outer trapped surface, then
\begin{equation*}
\sqrt{\frac{|S_0|}{16\pi}} \leq 
\underset{\lambda \to \infty}{\lim}  m_H(S_{\lambda}), 
\end{equation*} 
where $m_H(S)$ denotes the Hawking energy of $S$ and
$\{ S_{\lambda} \}$ is the GAB foliation associated to $S_0$.
\end{Tma}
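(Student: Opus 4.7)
The plan is to combine a monotonicity statement for the Hawking energy along the GAB foliation with a lower bound at the initial cross section $S_0$ that comes from the weakly outer trapped hypothesis. Invoking the existence and (geometric) uniqueness of GAB foliations already established in the body of the paper, I would let $\{S_\lambda\}_{\lambda\in[\lambda_0,\infty)}$, with $S_{\lambda_0}=S_0$, be the GAB foliation associated to $S_0$. Write $k$ for the affinely parametrized future-directed null generator of $\Omega$ (so that $\theta_k=-2/\lambda+O(\lambda^{-3})$ as $\lambda\to\infty$ by definition of GAB) and $\ell$ for the future-directed transverse null vector normalized by $\langle k,\ell\rangle=-1$.

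I would then differentiate
\begin{equation*}
m_H(S_\lambda) \;=\; \sqrt{\frac{|S_\lambda|}{16\pi}}\left(1+\frac{1}{16\pi}\int_{S_\lambda}\theta_k\theta_\ell\, dA\right)
\end{equation*}
in $\lambda$ using the Raychaudhuri equation for $\theta_k$, the cross-focusing equation for $\theta_\ell$, the Gauss equation relating the intrinsic scalar curvature of $S_\lambda$ to $\theta_k\theta_\ell$ and the ambient curvature, and one integration by parts on $S_\lambda$ absorbing a divergence involving the connection 1-form (``torsion'') of the null frame. The result decomposes into a manifestly non-negative shear contribution $|\sigma_k|^2$, a non-negative matter contribution $G_{\mu\nu}k^\mu k^\nu\ge 0$ furnished by the dominant energy condition, and a remainder whose asymptotic size is controlled by the coefficient of $\lambda^{-2}$ in the expansion of $\theta_k$. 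The defining property of a GAB foliation is precisely that this coefficient vanishes, which is what prevents the remainder from spoiling the sign. Hence $m_H(S_\lambda)$ is non-decreasing in $\lambda$, so in particular $m_H(S_0)\le \lim_{\lambda\to\infty}m_H(S_\lambda)$.

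For the lower bound at $S_0$: the Raychaudhuri equation for the affine generator $k$ combined with the DEC yields $d\theta_k/d\lambda\ge \theta_k^2/2$, so $\theta_k$ is non-decreasing in $\lambda$; since by the GAB property $\theta_k\to 0$ at infinity, it follows that $\theta_k\le 0$ on all of $\Omega$. Combined with the weakly outer trapped hypothesis $\theta_\ell(S_0)\le 0$ this gives $\theta_k\theta_\ell\ge 0$ on $S_0$, and therefore $m_H(S_0)\ge\sqrt{|S_0|/16\pi}$. Chaining the two estimates proves the theorem.

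The heart of the proof---and the main obstacle---is the monotonicity step. Along an arbitrary geodesic null foliation the Hawking energy is \emph{not} monotone, because of a sign-indefinite term involving the connection 1-form of the null frame. In the Ludvigsen--Vickers/Bergqvist argument this obstruction is eliminated by the stronger large-sphere convergence assumption. The novelty here is to show that the milder GAB condition---the vanishing of the $\lambda^{-2}$-coefficient of $\theta_k$---is exactly what suffices to kill the indefinite term in the derivative of $m_H$; verifying this cancellation carefully is the delicate part of the argument.
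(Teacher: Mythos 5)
Your final step is fine: under the DEC the Raychaudhuri equation gives $\frac{d\theta_k}{d\lambda}\geq\frac{1}{2}\theta_k^2\geq 0$, and since $\theta_k\to 0^-$ at infinity one gets $\theta_k\leq 0$ on all of $\Omega$, so $\theta_k\theta_\ell\geq 0$ on a weakly outer trapped $S_0$ and $m_H(S_0)\geq\sqrt{|S_0|/16\pi}$. The genuine gap is the central claim that $m_H(S_\lambda)$ is monotone non-decreasing along the GAB foliation. You do not prove this; you only assert that the GAB condition ``is exactly what suffices to kill the indefinite term'' and defer the verification. That verification cannot succeed as stated: the GAB property is a purely \emph{asymptotic} condition (constancy of the coefficient $\thkone$ in the expansion $\theta_k=-2/\lambda+\thkone/\lambda^2+o(\lambda^{-2})$, which can be shifted to zero only by moving the origin of $\lambda$ off $S_0$, since the paper shows $\thkone>0$ always), whereas the sign-indefinite terms in $\frac{d}{d\lambda}m_H(S_\lambda)$ --- the connection one-form $\s_\ell$ contributions and the terms measuring the failure of $\theta_k$ to be constant on each leaf --- live at every finite $\lambda$ and are not controlled by any asymptotic normalization. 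Indeed, the paper's own analysis of the closely related functional $M(S_\lambda,\ell)$ (which differs from $m_H$ only in how $\theta_\ell$ enters) shows explicitly, e.g.\ in the shear-free vacuum case, that such functionals fail to be monotone even along foliations approaching large spheres; nothing in the GAB hypothesis repairs this for $m_H$.

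The paper's route is structurally different and avoids any monotonicity claim for $m_H$ or for $M(S_\lambda,\ell^\star)$. It splits $M(S,\ell^\star)=D(S,\ell^\star)+M_b(S,\ell^\star)$, where $M_b$ \emph{is} monotone under the DEC for any geodesic foliation, and then bounds the non-monotone piece $D(S_\lambda,\ell^\star)$ by its limit at infinity via an auxiliary functional $F(S_\lambda)=|S_\lambda|/(8\pi\rq^2\lambda+\int_{\Sh}\thkone\volsh)^2$. The mechanism that makes $F$ monotone for GAB foliations is the pointwise inequality $\frac{1}{-\theta_k}-\frac{\lambda}{2}\leq\frac{\thkone}{4}$ (Raychaudhuri plus the known limit of the left-hand side), which matches the required bound $\frac{1}{16\pi\rq^2}\int_{\Sh}\thkone\volsh$ precisely because $\thkone$ is constant. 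Finally, the limit of $M(S_\lambda,\ell^\star)$ equals $\lim m_H(S_\lambda)$ for GAB foliations by a Gauss--Bonnet cancellation. If you want to salvage your approach you would need to either prove the monotonicity of $m_H$ along GAB foliations (which I do not believe holds) or replace it with an inequality of the form ``initial value $\leq$ asymptotic value'' for a suitable combination of functionals, which is exactly what the $D+M_b$ decomposition accomplishes.
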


In combination with a study of the limit of the
Hawking energy along general foliations $\{ S_{\lambda} \}$
of asymptotically flat null hypersurfaces $\Omega$ carried
out in \cite{MarsSoria2015}, this theorem provides
an interesting
Penrose-type inequality with potentially useful applications. 
This theorem immediately extends Ludvigsen-Vickers \& Bergqvist result
because when $\{ S_{\lambda} \}$ approaches large spheres one automatically
has $m_H(S_{\lambda}) \longrightarrow E_B$, where
$E_B$ is the Bondi energy at the cut at $\scri^{-}$ defined
by $\Omega$ and measured by the observer defined by
$\{ S_{\lambda} \}$.

The key object in this paper is the functional on surfaces 
\begin{equation*}
M(S,\ell)=\sqrt{\frac{|S|}{16\pi}}-\frac{1}{16\pi}\int_{S}\theta_{\ell} \bm{\eta_{S}},
\end{equation*}
which has the property that
\begin{equation*}
\sqrt{\frac{|S|}{16\pi}}\leq  M(S,\ell)
\end{equation*}
whenever $S$ is a weakly outer trapped surface. It also has the property
that its limit at infinity along foliations approaching large spheres
is the Bondi energy.
The main objective of 
this paper is to bound $M(S,\ell)$ from above by its limit at infinity.
For that, the
monotonicity properties of $M(S_{\lambda},\ell)$ along suitable foliations will be
studied. Although in general, this object is not monotonic, it can be split
in two pieces, $M_b(S_{\lambda},\ell)$ and $D(S_{\lambda},\ell)$, where the
first one is closely related to an object first introduced by
Bergqvist in \cite{Bergqvist1997} and turns out to be monotonically
increasing provided the dominant energy condition holds. Thus, discussing
under which conditions $D(S_{\lambda},\ell)$ is bounded above by its limit 
becomes a problem of interest. We consider various approaches
to such an inequality and analyze their range of applicability
by applying them to two particular but relevant cases, namely
the case when $\Omega$ is shear-free and vacuum
(where, as mentioned, the null Penrose inequality is known to hold 
by other methods \cite{Sauter2008})
and the case of null shells propagating in the Minkowski spacetime. 
The latter
will allow us in particular to provide a link between the 
analysis here and the one in \cite{MarsSoria2012}.

This paper is organized as follows. In Section 2,
after introducing our terminology, we define
the functional of surfaces $M(S,\ell)$ and study its
monotonicity properties, as well as its limit at infinity. For the limit
we use the notion of 
{\bf past asymptotically flat} null hypersurface, introduced
in \cite{MarsSoria2015} to study the limit of the Hawking energy
$m_H(S)$ at infinity. This allows us to relate the limits of $M(S,\ell)$ 
and $m_H(S)$ along geodesic foliations.
In Section 3 we introduce the concept of
{\bf Geodesic Asymptotic Bondi (GAB) foliation},
and study its existence and uniqueness properties. We split 
$M(S,\ell)$ into $M_b(S,\ell)$ and $D(S,\ell)$
and prove that for GAB foliations $D(S_{\lambda},\ell)$ is bounded above by its limit 
at infinity, from which Theorem \ref{main1}  follows. The upper bound of
$D(S_{\lambda},\ell)$ is obtained by studying the monotonicity properties of yet
another functional $F(S_{\lambda})$.  In Section 4 we investigate various sufficient
conditions implying the property that $D(S_{\lambda},\ell)$
is bounded above by its limit, and hence the Penrose inequality along null
hypersurfaces. For this, a slightly stronger notion of 
asymptotic flatness will be required.
We first try to generalize the  method valid
for GAB foliations to more general settings and discuss the difficulties
that arise. We then concentrate on the so-called {\bf 
Renormalized Area Method}, where the null Penrose inequality is
approached via studying the monotonicity properties of
$D(S_{\lambda},\ell)$ itself. The main result here is Theorem 
\ref{penroseineqtma}
where two conditions are spelled out from which the null Penrose
inequality follows. Section 5 is devoted to studying
the shear-free vacuum case as an interesting test bed for the previous ideas.
After showing that one of the two conditions in Theorem 
\ref{penroseineqtma} fails to hold, we nevertheless find an argument
proving the null Penrose inequality using the properties
of $M(S_{\lambda},\ell)$. This not only provides an alternative proof of
Sauter's theorem, but also yields an explicit formula for the Bondi
energy in terms of the geometry of any chosen cross section $S_0$
of the null hypersurface. Section 6 is devoted to studying 
the renormalized area method 
in the Minkowski spacetime. This allows us to recover the results in 
\cite{MarsSoria2012} in a much more direct and efficient way. Concerning the
application of Theorem \ref{main1} to the Minkowski setting, we derive 
a general inequality (Theorem  \ref{IneqMink})
valid for any closed spacelike surface 
in Minkowski for which its outer past null cone extends smoothly to 
past null infinity. In the final section, we quit the 
method involving $M(S,\ell)$ and exploit some results derived
along the way to show a general inequality
bounding the area of a closed spacelike surface embedded
in a past asymptotically flat null hypersurface $\Omega$ in terms of an
asymptotic quantity intrinsic to $\Omega$.

\section{A functional on two-surfaces}

Let ($\mathcal{M},g$) be a time-oriented spacetime of dimension four.
Given a closed (i.e. compact and without boundary) orientable, spacelike,
codimension-two surface $S$ in $(\mathcal{M},g)$, its normal bundle
$NS$  admits a global basis of future directed null vectors $k$ and $\ell$.
The second fundamental form is $\vec{K}(X,Y)=-(\nabla_{X}Y)^\bot$, where $X$ and $Y$ are tangent vectors to $S$,
and $\nabla$ is the covariant derivative in $(\mathcal{M},g)$. 
The null curvatures $K^{k}(X,Y)$ and $K^{\ell}(X,Y)$ are defined by $K^{k}(X,Y)=\langle k,
K(X,Y)\rangle$ (and similarly for $\ell$) and the null-expansions, denoted  by $\theta_{k}$ 
and $\theta_{\ell}$, are the traces of the null curvatures with respect to the induced metric $\gamma$. A key object in this paper is the following functional on $S$
\begin{align}
M(S,\ell) = \sqrt{\frac{|\Sig|}{16 \pi}} - \frac{1}{16 \pi} \int_{\Sig} \theta_{\ell} \volSig
\label{functional}
\end{align}
where $|\Sig|$ is the area of $\Sig$ and $\volSig$ the metric volume form
of $\Sig$. This quantity has geometric units of length so one may be tempted
to assign to it a physical interpretation of quasi-local mass of $S$. However, $M(S,\ell)$
is not  truly a quasi-local quantity on the surface because
it depends on the choice of null normal $\ell$, which cannot be uniquely fixed a priori in the absence
of additional geometric structure. Note, however, that a weakly outer trapped surface
$\Sig_0$  satisfies, by definition, $\theta_{\ell} \leq 0$ irrespectively of the scaling of $\ell$, and hence
\begin{align*}
\sqrt{\frac{|\Sig|}{16 \pi}} \leq M(S,\ell).
\end{align*}
So, if $M(S,\ell)$ enjoyed good monotonicity properties under suitable flows
and its value on very large surfaces in an asymptotically flat
context could be related to the total mass of the spacetime, this object
would be potentially useful to address the Penrose inequality and play perhaps
a similar role as the Hawking energy does in the time-symmetric context. 

It turns out that for null flows $M(S,\ell)$ satisfies an interesting evolution equation. In
order to describe it, let
$\Omega$ be a smooth, connected null hypersurface embedded in ($\mathcal{M},g$) with null
normal $k$ and 
admitting a global cross section $S_0$
(i.e. a smooth embedded spacelike
surface intersected precisely once by every inextendible curve along the 
null generators tangent to $k$).  We want to investigate the derivative of $M(S_{\mu},\ell)$ with respect to $\mu$,
where $S_{\mu}$ is a foliation of $\Omega$ by cross sections. In order to maintain the generality
we do not make any assumption on the null generator $k$ satisfying $k(\mu) = -1$ (other
than being nowhere zero) or on the choice of null normal $\ell$ to $S_{\mu}$ (other
than being transverse to $S_{\mu}$). 
In order to compute the derivative of $M(S_{\mu},\ell)$ we need the following well-known
identities: let $\gamSig(\mu)$ be the induced metric of $S_{\mu}$ and define $Q_k: \Omega \mapsto
\mathbb{R}$ by 
\begin{equation*}
\nabla_k k = Q_{k} k
\end{equation*}
so that $Q_k$ vanishes if and only
if the null geodesic generator $k$ is chosen to be affinely
parametrized. Given a smooth positive function $\varphi: \Omega \mapsto \mathbb{R}^+$,
there is a unique choice of null normal $\ell$ to $S_{\mu}$ (denoted by 
$\ellvarphi$) satisfying 
\begin{equation*}
\la k,\ellvarphi \ra = -\varphi.
\end{equation*}
The choice $\varphi =2$ will be relevant later and we will denote
$\ell^{\varphi=2}$ simply by $\ell$ from now on.
Let us decompose $K^k$ into its trace and trace-free part as $K^k_{AB}=\frac{1}{2}\theta_k\gamma_{AB}+\Pi^k_{AB}$. 
The following evolution equations are standard, see e.g. \cite{GourgoulhonJaramillo}
\begin{align}
k (\gamSig ) & = 2 K^k \label{Evolmetric} \\
k (\thk) = & Q_{k} \thk -\frac{1}{2}{\thk}^2 - \Pi^k_{AB} \Pi^{k\, AB} - \Ricg (k,k) \label{Ray} \\
k (\thlvarphi ) = & \left ( \frac{1}{\varphi} k (\varphi) - Q_{k} \right ) \thlvarphi +
\Eing (k,\ellvarphi ) - \frac{\varphi}{2} \left ( \Scalgam - \la \vec{H}, \vec{H} \ra
\right ) 
+ \varphi \left ( -\mbox{div}_{\Sig} \s_{\ellvarphi}  + |\s_{\ellvarphi}|^2_{\gamSig} \right ) 
\label{kthl2}
\end{align}
where $\Ricg$ and $\Eing$ are, respectively, the
Ricci and Einstein tensors of $(\M,g)$,
$D$ is the Levi-Civita covariant derivative of $(S_{\mu},\gamSigmu)$,
$\Scalgam$ the corresponding curvature scalar, $\vec{H}$ is the mean
curvature vector 
\begin{equation*}
\vec{H} = - \frac{1}{\varphi} \left ( \thk \ellvarphi
+ \theta_{\ellvarphi} k \right ),
\end{equation*}
and the connection
of the normal bundle $\s_{\ellvarphi}$ is the one-form on $S_{\mu}$ defined by 
\begin{equation*}
\s_{\ellvarphi}(X) = \frac{1}{\varphi} \la   \nabla_{X} k ,\ellvarphi \ra, \quad 
\quad X \in \mathfrak{X}(S_{\mu}).
\end{equation*}
The evolution of $M(S_{\mu},\ellvarphi)$ in this general setting is given 
in the following lemma.
\begin{Lem}
\label{monot}
Let $\N$ be a null hypersurface embedded in a spacetime $(\M^4,\gMm)$.
Assume that $\N$ has topology $\Sig \times \mathbb{R}$
with the null generator tangent to the $\mathbb{R}$ factor. Consider a
foliation $\{ \Sig_{\mu}\}$ of $\N$ by spacelike hypersurfaces, all
diffeomorphic to $\Sig$. Let $k$ be the future null generator
satisfying $k(\mu) =-1$ and $\ellvarphi$ the null normal
to $\Sig_{\mu}$ satisfying $\la k,\ellvarphi \ra = -\varphi$. Then
\begin{align}
\frac{ d M(S_\mu,\ellvarphi)}{d \mu}
= \frac{1}{\sqrt{64\pi|S_\mu|}}
\int_{S_\mu} (-\thk) \bm{\eta_{S_\mu}}
+  \frac{1}{16 \pi} 
\int_{S_\mu} & \left [ \Eing(\ell,k) - \frac{\varphi}{2} \Scalgam
+ \varphi \left ( - \mbox{div}_{S_\mu} \s_{\ellvarphi} + |\s_{\ellvarphi}|^2_{\gamma_{S_\mu}}
\right )  \right . \nonumber \\
& \left . + \left ( \frac{1}{\varphi} k(\varphi) - Q_{k} \right ) \theta_{\ellvarphi}
\right ] 
\bm{\eta_{S_\mu}} \label{Var1}
\end{align}
where $Q_k$ is defined by $\nabla_k k = Q_k k$. If, moreover, $\varphi$ is constant and $k$ is 
geodesic ($Q_k=0$) then
\begin{align}
\frac{ d M(S_\mu,\ellvarphi)}{d \mu}
=  \frac{1}{\sqrt{64\pi|S_{\mu}|}}
\int_{S_{\mu}} (-\thk) \bm{\eta_{S_\mu}}
- \frac{\varphi \, \chi(\Sig)}{8}
+  \frac{1}{16 \pi} 
\int_{S_\mu} & \left ( \Eing(\ellvarphi,k) + \varphi |\s_{\ellvarphi}|^2_{\gamma_{S_{\mu}}}\right )
\bm{\eta_{S_\mu}}
\label{Var2}
\end{align}
where $\chi(\Sig)$ is the Euler characteristic of $\Sig$.
\end{Lem}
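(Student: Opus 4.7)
The plan is to differentiate the two pieces of $M(S_\mu, \ell^\varphi)$ separately and then use the evolution equation \eqref{kthl2} together with the geometric identity relating $\langle \vec{H}, \vec{H}\rangle$ to $\theta_k \theta_{\ell^\varphi}$. The crucial preliminary observation is that because $k(\mu) = -1$, the vector field tangent to $\Omega$ that advances the foliation parameter is $\partial_\mu = -k$; consequently the standard identity $k(\bm{\eta_{S_\mu}}) = \theta_k \, \bm{\eta_{S_\mu}}$ (which follows from tracing \eqref{Evolmetric}) translates to $\partial_\mu \bm{\eta_{S_\mu}} = -\theta_k \, \bm{\eta_{S_\mu}}$.

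First, I would compute the derivative of the area term. Since $|S_\mu| = \int_{S_\mu}\bm{\eta_{S_\mu}}$, the identity above immediately gives
\begin{equation*}
\frac{d}{d\mu}\sqrt{\frac{|S_\mu|}{16\pi}} \;=\; \frac{1}{\sqrt{64\pi|S_\mu|}}\int_{S_\mu}(-\theta_k)\,\bm{\eta_{S_\mu}},
\end{equation*}
which already accounts for the first term in both \eqref{Var1} and \eqref{Var2}.

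Next, I would differentiate $\int_{S_\mu}\theta_{\ell^\varphi}\,\bm{\eta_{S_\mu}}$. The derivative decomposes as $\int_{S_\mu}\bigl[-k(\theta_{\ell^\varphi}) - \theta_{\ell^\varphi}\theta_k\bigr]\bm{\eta_{S_\mu}}$, where $-k(\theta_{\ell^\varphi})$ comes from transporting the integrand along $\partial_\mu = -k$ and $-\theta_{\ell^\varphi}\theta_k$ comes from the evolution of the volume form. Now I would substitute the expression for $k(\theta_{\ell^\varphi})$ from \eqref{kthl2}, and in parallel compute $\langle \vec{H}, \vec{H}\rangle$ directly from the formula for $\vec H$: using $\langle k,k\rangle = \langle \ell^\varphi,\ell^\varphi\rangle = 0$ and $\langle k,\ell^\varphi\rangle = -\varphi$, one finds $\langle \vec{H}, \vec{H}\rangle = -\tfrac{2}{\varphi}\theta_k\theta_{\ell^\varphi}$. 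This is the key algebraic point: it makes the term $\tfrac{\varphi}{2}\langle \vec H, \vec H\rangle$ in \eqref{kthl2} become $-\theta_k\theta_{\ell^\varphi}$, which precisely cancels the volume-form contribution $-\theta_{\ell^\varphi}\theta_k$. After this cancellation and multiplication by $-1/(16\pi)$, the remaining terms reproduce the bracket in \eqref{Var1} verbatim.

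Finally, for \eqref{Var2} I would specialize to $\varphi$ constant and $Q_k = 0$. The prefactor $\tfrac{1}{\varphi}k(\varphi) - Q_k$ in front of $\theta_{\ell^\varphi}$ then vanishes. The divergence term $\int_{S_\mu}\mathrm{div}_{S_\mu}\s_{\ell^\varphi}\,\bm{\eta_{S_\mu}}$ drops out because $S_\mu$ is closed, by Stokes' theorem. The scalar curvature term is handled by the Gauss--Bonnet theorem, $\int_{S_\mu}\mathrm{Scal}^{\gamma_{S_\mu}}\bm{\eta_{S_\mu}} = 4\pi\chi(S)$, which yields the constant contribution $-\varphi\chi(S)/8$. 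Combining everything produces \eqref{Var2}. The whole argument is essentially a careful bookkeeping of signs and a single algebraic simplification $\tfrac{\varphi}{2}\langle \vec H,\vec H\rangle = -\theta_k\theta_{\ell^\varphi}$; the main ``obstacle'' is just tracking the sign convention induced by $k(\mu) = -1$, after which every step is a direct substitution.
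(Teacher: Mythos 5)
Your proposal is correct and follows essentially the same route as the paper's proof: differentiate the two pieces of $M(S_\mu,\ell^{\varphi})$ using $k(\bm{\eta_{S_\mu}})=\theta_k\bm{\eta_{S_\mu}}$ and the evolution equation \eqref{kthl2}, exploit $\la \vec{H},\vec{H}\ra=-\tfrac{2}{\varphi}\theta_k\theta_{\ell^{\varphi}}$ to cancel the $\theta_k\theta_{\ell^{\varphi}}$ contributions, and then apply Gauss--Bonnet (and Stokes for the divergence term) in the constant-$\varphi$, geodesic case. Your write-up is in fact slightly more explicit than the paper's, which leaves the vanishing of $\int_{S_\mu}\mbox{div}_{S_\mu}\s_{\ell^{\varphi}}\,\bm{\eta_{S_\mu}}$ implicit.
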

 
\begin{proof}
We drop all reference to $\mu$ for simplicity. The volume
form satisfies
\begin{align*}
k ( \volSig ) = \thk \volSig
\end{align*}
so that the variation along $-k$ of $M(S,\ellvarphi)$ is, using (\ref{kthl2}),
\begin{align*}
(-k) (M(S,\ellvarphi)) =  \frac{1}{\sqrt{64\pi|\Sig|}}
\int_S (-\thk) \volSig
+  \frac{1}{16 \pi} 
\int_{S} & \left [ \Eing(\ellvarphi,k) - \frac{\varphi}{2} \Scalgam
+ \varphi \left ( - \mbox{div}_{\Sig} \s_{\ellvarphi} + |\s_{\ellvarphi}|^2_{\gamSig}
\right )  \right . \\
& \left . + \left ( \frac{1}{\varphi} k(\varphi) - Q_{k} \right ) \theta_{\ellvarphi} 
\right ]
\volSig
\end{align*}
where we have used $\la \vec{H}, \vec{H} \ra = - \frac{2}{\varphi} 
\thk \theta_{\ellvarphi}$.
This is precisely (\ref{Var1}). When $\varphi = \mbox{const}$ and $Q_k=0$, 
(\ref{Var2}) follows directly from (\ref{Var1}) as a consequence of the 
Gauss-Bonnet theorem
$\int_{\Sig} \Scalgam \volSig = 4 \pi \chi(\Sig)$.
\end{proof}

Our purpose in deriving the general variation formula (\ref{Var1}) is
to show that 
indeed $\varphi = \mbox{const}$ and $Q_k =0$ are the only clear situations
leading to a (nearly) monotonic behaviour. Indeed, the divergence 
term $\mbox{div}_{\Sig} \s_{\ellvarphi}$ has no sign which strongly
suggests the choice 
$\varphi = \mbox{const}$. 
The term in $\theta_{\ellvarphi}$, which again has no 
sign a priori, suggest 
making the choice $Q_{k} =0$  (the seemingly more general condition
of making $\varphi$ constant only within the leaves and 
 $Q_k = \varphi^{-1} k(\varphi)$ is simply a reparametrization
of the previous one).

Under the dominant energy condition (DEC) on $(\mathcal{M},g)$ 
(namely, $-\Eing(X,\cdot)$ future causal $\forall X$ future causal),
this lemma implies that if $S$ is connected and non-spherical, then
$M(S_{\mu},\ellvarphi))$ is monotonically increasing along any geodesic flow
for any past expanding (i.e.
with $\theta_{-k} \geq 0$) null hypersurface. We will reserve the
symbol $\lambda$ for foliations $\{ S_{\lambda} \}$ associated
to geodesic generators $k$. 

For the Penrose inequality in an asymptotically
flat context, the spherical topology is the relevant one.
In this setting, $\mmlam$ is not always monotonic.
However, under certain circumstances
one can relate its value on the initial surface and its asymptotic
value at infinity. In fact, obtaining such relations will be the main theme
of this paper. We first need to specify our asymptotic 
conditions. We adopt here the same definitions as in \cite{MarsSoria2015}, where a detailed analysis of the
limit  of the Hawking energy along null flows was obtained. For the sake
of completeness, we briefly repeat the main definitions. The reader is referred
to \cite{MarsSoria2015} for further details.

We  make the global assumption that $\Omega = \mathbb{S}^2 \times \mathbb{R}$
with the geodesic null generator $k$ tangent along the 
$\mathbb{R}$-factor. Implicit in this condition is that, fixed
a cross section $S_0$ of $\Omega$ (necessarily of spherical topology),
the integral curve of $k$
starting at $p \in S_0$
has maximal domain $(-\infty,\lambda_{+}(p))$, i.e. the null
generators are past complete. 
After possibly removing portions of $\Omega$ lying to the future of $S_0$
we can assume
that $\Omega$ is foliated by the  level sets $\{ S_\lambda \}$
of the function $\lambda \in \mathcal{F}(\Omega)$ defined
by $k(\lambda) = -1$, $\lambda|_{S_0} =0$. Obviously, all such $S_{\lambda}$ 
are of spherical topology. 
The function $\lambda$ is called {\bf level set function of $k$}.
A null hypersurface $\Omega$ satisfying these properties is called
{\bf extending to past null infinity}.
The definition of asymptotic flatness involves so-called
Lie constant transversal tensors. A {\bf transversal} tensor
is a covariant tensor on $\Omega$ completely orthogonal to $k$. 
Any such tensor $T$ is in one-to-one correspondence with a family $T(\lambda)$
of covariant tensors on $S_{\lambda}$.  A transversal
tensor is {\bf positive definite} if each $T(\lambda)$ has this
property. 
A transversal tensor $T$ is {\bf Lie constant} iff $\pounds_{k} T =0$. 
Concerning decay at infinity, the symbol $T = o_n(\lambda^{-q})$ means
$\lambda^{i+q} (\pounds_{k})^i T = o(1)$ ( $i=0,1,\cdots,n$), 
i.e. it has a vanishing limit
as $\lambda \rightarrow \infty$ 
in a Lie-propagated basis $\{X_A\}$ of $T \Siglam$. $T=o^X_n(\lambda^{-q})$
means $\lambda^q \pounds_{X_{A_1}} \cdots \pounds_{X_{A_i}} T = o (1)$ for all 
$i$-tuple of indices $\{A_1, \cdots A_i\}$ and $i=0,1,\cdots,n$. Similar definitions hold for 
$O_n(\lambda^{-q})$ and $O^X_n(\lambda^{-q})$. 

The definition of asymptotic flatness we use imposes conditions only along
$\Omega$ and reads (cf. Definition 1 and Proposition 3 in
\cite{MarsSoria2015})
\begin{Def}
\label{AF}
A null hypersurface $\Omega$ in a spacetime $(\mathcal{M}^4,g)$ is
{\bf past asymptotically flat} if it 
extends to past null infinity and
there exists a choice of cross section
$S_0$ and null geodesic
generator $k$ with corresponding level set function $\lambda$ satisfying:
\begin{itemize}
\item[(i)]  There exist two symmetric 2-covariant transversal
and Lie constant  tensor fields $\qh$ (positive definite)
and $h$ 
such that
$\gammatilde := \gamma - \lambda^2 \qh - \lambda h$
is $\gammatilde = o_1(\lambda) \cap o^{X}_{2}(\lambda)$ 
\item[(ii)] There exists a transversal, Lie constant
one-form $\sone$ such that
$\stilde_{\ell} := \s_{\ell} - \frac{\sone}{\lambda} $
is $\stilde_{\ell} = o_1(\lambda^{-1})$.
\item [(iii)] Denote the Gauss curvature of $\qh$ by
$\K_{\qh}$. There exists a Lie constant function  
$\thlone$ such that
$\thltilde := \theta_{\ell} - \frac{2 \K_{\qh}}{\lambda} - \frac{\A}{\lambda^2}$ 
is $\thltilde = o(\lambda^{-2})$.
\item[(iv)]
The function $\Riemg(X_A,X_B,X_C,X_D)$
along $\Omega$ is such that 
$\lim_{\lambda\rightarrow \infty} \frac{1}{\lambda^2}\Riemg(X_A,X_B,X_C,X_D)$ exists while
its double trace satisfies
$2\Eing(k,\ell)- \, \Scal^g - \frac{1}{2} \Riemg(\ell,k,\ell,k) = 
o (\lambda^{-2})$. 
\end{itemize}
\end{Def}


We can now analyze the limit of $\mmvarphi$ at infinity. From item (i) in Definition \ref{AF}, it follows
that the volume form $\volSiglam$ of each $S_\lambda$ satisfies
\begin{equation}
\label{volsrlamform}
\volSiglam=\left( \lambda^2+\c\lambda+o(\lambda)  \right)\volsh,
\end{equation}
where the Lie constant function $\c$ is defined by the expansion
\begin{equation}
\label{thetakdev}
\theta_k=\frac{-2}{\lambda}+\frac{\c}{\lambda^2}+o(\lambda^{-2})
\end{equation}
which is a consequence of Definition \ref{AF} (see \cite{MarsSoria2015}).
The expressions become simpler if we introduce the area radius at infinity
as
\begin{equation*}
\rq^2:=\frac{1}{4\pi}\int_{\hat{S}}\volsh,
\end{equation*}
where $\hat{S}$ represents the surface $S$ endowed with the metric $\qh$. The area $\Siglam$ has the following expansion
\begin{equation*}
|\Siglam|=\int_{\Siglam}\volSiglam=\int_{\Sh} \left( \lambda^2+\c\lambda+o(\lambda)  \right)\volsh=4\pi\rq^2\lambda^2+\left( \int_{\hat{S}}\c\volsh  \right) \lambda+o(\lambda)
\end{equation*}
and therefore
\begin{equation}
\sqrt{|\Siglam|}=\sqrt{4\pi\rq^2}\,\lambda+\frac{\int_{\hat{S}}\c\volsh
}{2\sqrt{4\pi\rq^2}}+o(1). \label{firsttermmm}
\end{equation}
We next compute the asymptotic behaviour of the second term in $\mmvarphi$.
Using 
item (iii) in Definition \ref{AF} and noticing that
$\thlvarphi = \frac{\varphi}{2} \thl$ (because of the  scaling relation 
$\ellvarphi = \frac{\varphi}{2} \ell$), it follows (we are assuming $\varphi$
constant here and in what follows)
\begin{align}
\int_{\Siglam}\thlvarphi(\lambda) \volSiglam &=
\int_{\hat{S}}\left(\frac{\varphi \K_{\qh}}\lambda+\frac{\varphi \a}{2 \lambda^2}+o(\lambda^2)   \right)\left( \lambda^2+\c\lambda+o(\lambda)  \right)\volsh
\nonumber \\
 & =
4\pi \varphi \lambda+
\int_{\hat{S}} \left( \varphi \K_{\qh}\c+ 
\frac{\varphi}{2}\a      \right) \volsh + o(1).
\label{secterm} 
\end{align}
Combining (\ref{firsttermmm}) and (\ref{secterm})  into (\ref{functional})
gives
\begin{equation*}
\mmvarphi= \left ( \frac{\rq}{2} - \frac{\varphi}{4} 
\right ) \lambda+ 
\frac{1}{16 \pi}
\int_{\Sh}\left(\c\left(\frac{1}{\rq}
- \varphi \K_{\qh}
\right)- \frac{\varphi}{2} \a \right)\volsh +o(1).
\end{equation*}
This expression has a finite limit at 
infinity if and  only if the scaling of $\ellvarphi$ is chosen so that
$\varphi=2\rq$. This vector will be denoted by $\ell^{\star}$ and its relation
to the canonical $\ell$ is $\ell^{\star}=\rq\ell$. With this choice,
\begin{equation}
\label{dlimmm}
\underset{\lambda \to \infty}{\lim} \mmlamp =
\frac{1}{16 \pi}
\int_{\Sh}\left(\c\left(\frac{1}{\rq}
- 2 \rq \K_{\qh}
\right)- \rq \a \right)\volsh.
\end{equation}
It is useful to relate this limit 
to the corresponding limit of the Hawking energy along $\{S_\lambda\}$.
For any closed spacelike surface $S$ with spherical topology embedded
in a four dimensional spacetime, the Hawking energy is defined by 
\begin{equation*} 
m_H(S)=\sqrt{\frac{|S|}{16\pi}} \left (
1-\frac{1}{16\pi}\int_{S}\la \vec{H}, \vec{H} \ra \bm{\eta_{S}} \right ).
\end{equation*}
The limit of $m_H(S_{\lambda})$ was investigated in detail in
\cite{MarsSoria2015}. In particular, Theorem 5 in \cite{MarsSoria2015} gives
\begin{equation}
\label{hawklimm}
\underset{\lambda \to \infty}{\lim} m_H(S_\lambda)=\frac{-\rq}{16\pi} \int_{\hat{S}} \left(\K_{\qh}\c+\a \right)\volsh.
\end{equation}
Combining (\ref{dlimmm}) and (\ref{hawklimm}), the following
Proposition is proved:
\begin{Prop}
\label{limits}
With the choice $\ell^{\star} = \rq \ell$, the limits of $\mmlamp$ and $m_H(S_\lambda)$ are related by
\begin{equation}
\label{limmixt}
\underset{\lambda \to \infty}{\lim} \mmlamp=
\underset{\lambda \to \infty}{\lim} m_H(S_\lambda)+\frac{1}{16\pi}\int_{\Sh}\c\left( \frac{1}{\rq}-\rq \K_{\qh}    \right) \volsh.
\end{equation}
\end{Prop}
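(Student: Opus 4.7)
The plan is to obtain (\ref{limmixt}) by straight algebraic subtraction of the two asymptotic formulae that have just been established in the paragraphs preceding the statement, namely equation (\ref{dlimmm}) for $\underset{\lambda \to \infty}{\lim} \mmlamp$ and equation (\ref{hawklimm}) for $\underset{\lambda \to \infty}{\lim} m_H(S_\lambda)$ (the latter quoted from Theorem~5 of \cite{MarsSoria2015}). No new analytic ingredient is needed; the claim is purely a combinatorial consequence once both limits are in hand.

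First I would form the difference of the two integrands over $\Sh$. The terms involving $\a$ enter (\ref{dlimmm}) with coefficient $-\rq$ and enter (\ref{hawklimm}) with coefficient $-\rq$, so in the combination $\underset{\lambda\to\infty}{\lim} \mmlamp - \underset{\lambda\to\infty}{\lim} m_H(S_\lambda)$ these contributions cancel exactly. The remaining terms involve only $\c$, whose coefficient collects to
\begin{equation*}
\frac{1}{\rq} - 2\rq\K_{\qh} - \bigl(-\rq\K_{\qh}\bigr) = \frac{1}{\rq} - \rq\K_{\qh},
\end{equation*}
which is precisely the factor multiplying $\c$ inside the integral in (\ref{limmixt}). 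Rewriting the identity as $\lim \mmlamp = \lim m_H(S_\lambda) + (\text{residual})$ yields the statement.

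The result is therefore essentially immediate; the only potential obstacle is bookkeeping of signs and verifying that the $\a$ contribution really does drop out. The substantive work has already been absorbed into the derivations of (\ref{dlimmm}) and (\ref{hawklimm}), which rested on the expansions (\ref{volsrlamform}) and (\ref{thetakdev}), the asymptotic hypotheses encoded in Definition~\ref{AF}, and the specific rescaling $\varphi = 2\rq$ that renders the limit of $\mmvarphi$ finite. If anything, the point worth flagging is conceptual rather than computational: the cancellation of the $\a$ piece reveals that the discrepancy between the two limits is controlled solely by $\c$ weighted against the conformal factor $\rq^{-1} - \rq\K_{\qh}$, which vanishes identically precisely when $\rq^2\K_{\qh} \equiv 1$, that is, when $(\Sh,\qh)$ is a round sphere of radius $\rq$ and the foliation approaches large spheres. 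This is the setting of the Ludvigsen--Vickers \& Bergqvist hypothesis and explains why $\lim \mmlamp$ coincides with the Bondi energy in that regime.
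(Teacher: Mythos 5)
Your proposal is correct and is exactly the paper's own argument: the authors prove Proposition \ref{limits} simply by combining (\ref{dlimmm}) and (\ref{hawklimm}), and your sign bookkeeping (cancellation of the $\a$ terms and the collapse of the $\c$ coefficient to $\frac{1}{\rq}-\rq\K_{\qh}$) matches theirs. Your closing observation about when the residual integrand vanishes is also precisely the content of the paper's subsequent Remark \ref{coincidinglimits}.
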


\begin{remark}
\label{coincidinglimits} 

There are two  interesting cases where
the limit $M(\Siglam,\ell^{\star})$ agrees with
the limit of the Hawking energy along the foliation. The first one
occurs when $\qh$ has positive constant curvature, in which case
the area radius $\rq$ and the Gauss curvature are related by
${\cal K}_{\qh} = \frac{1}{\rq^2}$ and the integrand in 
the second term of (\ref{limmixt}) vanishes. Such foliations
are called ``approaching to large spheres'' because
the geometry of the leaves tends, after a suitable rescaling, to the round spherical metric.
This situation is particularly
relevant because then the limit of the Hawking energy is the Bondi
energy measured by the observer defined by the foliation $\{S_{\lambda} \}$
(see \cite{Bartnik2004,PenroseRindler,Sauter2008,MarsSoria2015} for details).

The other case corresponds to those foliations
satisfying $\c=constant$. In this case we have
\begin{equation} \int_{\Sh}\c\left( \frac{1}{\rq}-\rq \K_{\qh}
\right) \volsh=\c\int_{\Sh}\left( \frac{1}{\rq}-\rq \K_{\qh}
\right) \volsh=\c(4\pi\rq-4\pi\rq)=0,
\end{equation} where in the second equality we have used the
Gauss-Bonnet theorem. We devote the next section to study in detail
geodesic foliations with constant $\c$.
\end{remark}

\section{GAB foliations and a Penrose type inequality}

As discussed in the introduction, 
Ludvigsen and Vickers
\cite{LudvigsenVickers1983} and Bergqvist
\cite{Bergqvist1997} considered the Penrose inequality for
null hypersurfaces. A fundamental ingredient of their
work involved geodesic foliations for which $\c$ vanishes identically.
As we will discuss below, such foliations are closely
related to geodesic foliations with
$\c$ constant. We devote this section to study
such foliations. Our main result is a 
Penrose-type inequality valid in full generality and which reduces
to the Penrose inequality when the foliation approaches large
spheres. Besides its intrinsic interest, the general
Penrose-type inequality helps 
also putting
the result of Ludvigsen \& Vickers and Bergqvist into a broader
perspective and clarifies both its scope and its range of
validity.

We first need a lemma showing that, no matter which geodesic
foliation is taken, the leading term $\c$ is always strictly positive.
This may seem to contradict (\ref{LVBexpansion}), but this is not the
case because $\lambda=0$ corresponds to a cross section on $\Omega$, while
the corresponding  condition for $r$ was not assumed
(and in fact does not hold) in (\ref{LVBexpansion}).

\begin{Lem}
Let $\Omega$ be a past asymptotically flat null hypersurface with a choice of affinely
parame\-tri\-zed null generator $k$
and corresponding level set function $\lambda$. Assume that the spacetime satisfies the dominant
energy condition, then $\c>0$.
\end{Lem}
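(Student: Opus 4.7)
My approach is to integrate the Raychaudhuri equation along each null generator, use the DEC to derive a concavity property for a suitable conformal factor, and then match against the asymptotic expansion of $\thk$ to fix the sign of $\c$. Since $k$ is affine ($Q_k=0$) and $k(\lambda)=-1$, the Raychaudhuri equation (\ref{Ray}) reads along each generator
\begin{equation*}
\frac{d\thk}{d\lambda} = \frac{1}{2}\thk^2 + |\Pi^k|^2 + \Ricg(k,k).
\end{equation*}
Writing $\volSiglam = \alpha(\lambda)\volsh$ and setting $\beta := \sqrt{\alpha}>0$, the identity $k(\volSiglam) = \thk\volSiglam$ together with the Lie-constancy of $\volsh$ gives $\thk = -2\beta'/\beta$, where $'$ denotes $d/d\lambda$. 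Substituting back converts Raychaudhuri into
\begin{equation*}
-2\frac{\beta''}{\beta} = |\Pi^k|^2 + \Ricg(k,k) \geq 0,
\end{equation*}
where the non-negativity comes from the DEC via $\Ricg(k,k) = \Eing(k,k) \geq 0$ for $k$ null. Hence $\beta$ is concave along every generator.

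Next, I would insert the asymptotic data of Definition \ref{AF}. By (\ref{volsrlamform}) one has $\alpha(\lambda) = \lambda^2 + \c\lambda + o(\lambda)$, so $\beta(\lambda) = \lambda + \c/2 + o(1)$ as $\lambda\to\infty$. Multiplying the expansion (\ref{thetakdev}) by $-\beta/2$ then yields $\beta'(\lambda) = 1 + o(\lambda^{-1})$ at infinity. The concavity $\beta''\leq 0$ forces $\beta'$ to be monotonically non-increasing in $\lambda$, so the asymptotic limit $\beta'\to 1$ implies $\beta'(\lambda)\geq 1$ for every $\lambda\in[0,\infty)$.

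Consequently $\lambda\mapsto\beta(\lambda)-\lambda$ is non-decreasing on $[0,\infty)$ with limit $\c/2$, whence $\beta(\lambda)-\lambda \leq \c/2$ for every finite $\lambda$. Evaluating at the cross section (i.e., at $\lambda=0$) gives $\beta(0)\leq \c/2$; since $\volSig$ is a strictly positive volume form on $\Sig\cong\esfdos$, one has $\beta(0)>0$ pointwise along every generator, whence $\c>0$ as claimed (in fact one obtains the quantitative bound $\c\ge 2\beta(0)>0$). The only mildly delicate point is ensuring that the expansion of $\thk$ genuinely forces $\beta'(\lambda)\to 1$ and not merely the weaker $\beta/\lambda\to 1$; this is guaranteed by the sharp form (\ref{thetakdev}) derived from the asymptotic flatness hypotheses in Definition \ref{AF}, and the rest is an elementary ODE concavity argument.
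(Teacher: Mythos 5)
Your proof is correct, and it reaches the conclusion by a route that differs in its details from the paper's, although both rest on exactly the same two ingredients: the Raychaudhuri equation and the dominant energy condition. The paper works directly with the expansion, introducing $\rho(\lambda)=\lambda^{2}\thk+2\lambda$ and completing a square in the Raychaudhuri equation to obtain $\rho'\geq\tfrac{1}{2}(\lambda\thk+2)^{2}\geq 0$; comparing $\rho(0)=0$ with $\lim_{\lambda\to\infty}\rho=\c$ gives $\c\geq 0$, and strict positivity is then obtained by a separate contradiction argument ($\c=0$ along a generator would force $\rho\equiv 0$, hence $\thk=-2/\lambda$ there, which blows up as $\lambda\to 0^{+}$, contradicting smoothness at $S_0$). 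You instead use the classical concavity of the square root $\beta$ of the area element, deduce $\beta'\geq 1$ everywhere from the monotonicity of $\beta'$ together with $\beta'\to 1$, and compare $\beta-\lambda$ at $\lambda=0$ with its limit $\c/2$. The two monotone quantities are genuinely different, and your version buys a little more: strict positivity comes for free from the positivity of the initial volume form, and you get the quantitative pointwise bound $\c\geq 2\beta(0)>0$, where $\beta(0)^{2}$ is the ratio of $\volSigzero$ to $\volsh$ along each generator, which the paper's argument does not record. All the individual steps check out, in particular the identity $\thk=-2\beta'/\beta$, the computation $\beta'=-\thk\beta/2=1+o(\lambda^{-1})$ from (\ref{thetakdev}) and (\ref{volsrlamform}), and the propagation of the asymptotic slope back to finite $\lambda$ via concavity.
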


\begin{proof}
Let $\{S_{\lambda}\}$ the geodesic foliation defined by $\lambda$
and consider the function $\rho(\lambda)=\theta_k |_{S_{\lambda}}
\lambda^2+2\lambda$. Using the Raychaudhuri
equation (\ref{Ray}), which can be written as
$\frac{d\theta_k(\lambda)}{d\lambda}=\frac{\theta_k^2}{2}+W$ with $W\geq0$ under DEC, the derivative of $\rho$
satisfies
\begin{equation*}
\rho'(\lambda)=2\lambda\theta_k+\lambda^2\left(\frac{1}{2}\theta_k^2+W \right)+2 \geq \frac{(\lambda\theta_k+2)^2}{2}\geq 0.
\end{equation*}
Since $\rho$ vanishes at $\lambda=0$, it follows that $0\leq \rho(\lambda) \leq \underset{\lambda \to \infty}{\lim}\rho(\lambda)=\c$
where the last equality follows from the expansion (\ref{thetakdev}). 
To show the strict inequality $\c>0$ we argue by contradiction.
Assume that there is some null geodesic
$\gamma_p$ in  $\Omega$ where $\c=0$. Then $\rho(\lambda)$ necessarily vanishes on this curve
and 
\begin{equation*}
\theta_k |_{\gamma_p(\lambda)}\lambda^2+2\lambda 
 =0 \quad \Longrightarrow \quad 
\theta_k |_{\gamma_p(\lambda)} =\frac{-2}{\lambda} \quad \Longrightarrow \quad
\underset{\lambda \to 0^{+}}{\lim}\theta_k |_{\gamma_p(\lambda)}=-\infty , 
\end{equation*} 
which is a contradiction to the smoothness of $\Omega$ at $S_0$.
\end{proof}

The following result deals with the existence of foliations 
with constant $\c$.
\begin{Lem}
\label{exis}
Let $\Omega$ be a past asymptotically flat null hypersurface with a choice of affinely
parame\-tri\-zed null generator $k$
and corresponding level set function $\lambda$. There exists a Lie constant positive
function $f\in\mathcal{F}(\Omega)$ and a reparametrization $\lambda=\f\lambda'$ such that the corresponding asymptotic term $\c'$ of $\theta_{k'}$ is constant.
\end{Lem}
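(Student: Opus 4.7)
The plan is to look for the reparametrization among Lie-constant positive rescalings of $\lambda$. Write $\lambda = f\lambda'$ with $f$ a positive function on $\Omega$ satisfying $k(f)=0$, and let $k'$ be the future null generator with $k'(\lambda')=-1$. A direct computation using $k(f)=0$ gives $k' = fk$, and the same condition together with $\nabla_k k = 0$ yields $\nabla_{k'}k' = f^2\nabla_k k + f\, k(f)\, k = 0$, so $k'$ is again affinely parametrized. Consequently $\lambda'$ is the level set function of $k'$, and $\{S_\lambda\}$ and $\{S_{\lambda'}\}$ coincide as geometric foliations (only the labeling changes).

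Next I would compute how the asymptotic coefficient transforms. Since the null expansion scales linearly, $\theta_{k'} = f\theta_k$. Substituting (\ref{thetakdev}) and replacing $\lambda$ by $f\lambda'$ gives
\begin{equation*}
\theta_{k'} \;=\; f\left(-\frac{2}{f\lambda'} + \frac{c}{f^2 \lambda'^{2}} + o(\lambda'^{-2})\right)\;=\;-\frac{2}{\lambda'} + \frac{c/f}{\lambda'^{2}} + o(\lambda'^{-2}),
\end{equation*}
where the replacement $o(\lambda^{-2})=o(\lambda'^{-2})$ is legitimate because $f$, being a smooth positive Lie-constant function on the compact $S_0$, is bounded above and below by positive constants. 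Hence the asymptotic coefficient of $\theta_{k'}$ is $c' = c/f$.

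To finish, it suffices to choose $f$ so that $c/f$ is constant. Since by the previous lemma $c>0$ everywhere and (as noted after (\ref{thetakdev})) $c$ is itself Lie constant, the choice $f = c$ is an admissible reparametrization, yielding $c'\equiv 1$; more generally $f = c/c_0$ for any positive constant $c_0$ gives $c' = c_0$. There is no serious obstacle in this argument: the only point requiring mild care is verifying that the $o(\lambda^{-2})$ remainder in (\ref{thetakdev}) is preserved as $o(\lambda'^{-2})$ under the change of parameter, which follows from the uniform positive bounds on $f$ on the compact cross section.
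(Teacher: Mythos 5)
Your proposal is correct and follows essentially the same route as the paper: substitute $\lambda=f\lambda'$ into the expansion (\ref{thetakdev}), use $k'=fk$ and the linear scaling $\theta_{k'}=f\theta_k$ to read off $\c'=\c/f$, and choose $f=\c/\ct$, which is admissible because $\c>0$ (from the preceding lemma) and Lie constant. Your extra remark that the $o(\lambda^{-2})$ remainder survives the reparametrization because $f$ is uniformly bounded above and below is a small point the paper leaves implicit, but it does not change the argument.
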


\begin{proof}
It is well-known that
the null expansion $\theta_k|_p$ is a property of $\Omega$ at $p \in \Omega$, independent
of the cross section passing through $p$. We can thus transform
the expansion (\ref{thetakdev})  under the change of foliation $\lambda=\f\lambda'$ simply 
as
\begin{equation*}
\theta_k=\frac{-2}{\f}\frac{1}{\lambda'}+\frac{\c}{\f^2}\frac{1}{\lambda'^2}+o(\lambda'^{-2}).
\end{equation*}
The null generator associated to $\lambda'$ is $k'=\f k$ (because $k'(\lambda') = -1$) so that
\begin{equation}
\theta_{k'}=\frac{-2}{\lambda'}+\frac{\c}{\f}\frac{1}{\lambda'^2}+o(\lambda'^{-2}).
\label{thkprime}
\end{equation}
Since $\c>0$,  we can choose $\f=\frac{\c}{\ct}$ for any given constant $\ct >0$. The foliation $\{S_{\lambda'}\}$ has
$\c' =c$, as claimed.
\end{proof}

Note that, by construction, the foliation $\{ S_{\lambda'} \}$ in this lemma is also a geodesic foliation. Once $\c$ is constant, it can be
made zero by a constant shift of $\lambda$. Indeed, let $\lambda$ be an affine parameter and define $\lambda = \lambda' + \lambda_0$
with $\lambda_0$ constant. The null generator $k$ now remains unchanged and
\begin{equation*}
\theta_k=\frac{-2}{\lambda}+\frac{\c}{\lambda^2}+o(\lambda^{-2})
= \frac{-2}{\lambda'} + \frac{\c + 2 \lambda_0}{\lambda'^2} + +o(\lambda'^{-2}).
\end{equation*}
Thus, the coefficient $\c$ along a geodesic foliation can be made zero by a change of origin if and only if it is constant. As mentioned above,
Ludvigsen \& Vickers and Bergqvist considered foliations with vanishing $\c$. Such foliations arise naturally
in the context of conformal compactifications of null infinity and are related to the Bondi coordinates near null infinity. This motivates
the following definition.
\begin{Def}[\bf Geodesic Asymptotically Bondi Foliation associated to $S_0$]
Given a past asymptotically flat null hypersurface $\Omega$ with a choice of cross section $S_0$.
A geodesic foliation $\{S_\lambda\}$ is called  Geodesic Asymptotically Bondi (GAB) and associated to $S_0$ iff
\begin{itemize}
\item[(i)]  $S_{\lambda=0}=S_0$
\item[(ii)]  $\c$ is constant.
\end{itemize}
\end{Def}
In the following lemma we show that two GAB foliations associated to $S_0$ are necessarily related
by a constant rescaling of parameter, $\lambda = a \lambda'$ with $a \in \mathbb{R}^+$. Thus, the collection of surfaces
$\{ S_{\lambda} \}$ remain unchanged, and GAB foliations associated to a given $S_0$ are geometrically unique. Obviously,
when $S_0$ changes, the corresponding unique GAB foliation (which exists by Lemma \ref{exis}) also changes. 
\begin{Lem}[\bf  Uniqueness of GABs]
Let $\Omega$ be a past asymptotically flat null hypersurface and $S_0$ a cross section. Two GAB foliations
$\{ S_{\lambda} \}$ and $\{ S_{\lambda'} \}$  associated to $S_0$ are related by $\lambda = a \lambda'$
for some positive constant $a$.
\end{Lem}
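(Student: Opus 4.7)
The plan is to compare the two foliations generator by generator, then use constancy of $\c$ for both to promote the constant-along-generators relation into a globally constant rescaling. First I would record the meaning of the hypotheses: both $\{S_\lambda\}$ and $\{S_{\lambda'}\}$ are geodesic foliations of $\Omega$ whose zero level coincides with $S_0$, so the associated generators $k$ (with $k(\lambda)=-1$) and $k'$ (with $k'(\lambda')=-1$) are affinely parametrized future null vector fields tangent to the null generators of $\Omega$ on which the parameters $\lambda,\lambda'$ vanish at the intersection with $S_0$.

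Next I would restrict to a single null generator $\gamma_p$ of $\Omega$ through $p\in S_0$. Along $\gamma_p$ both $\lambda$ and $\lambda'$ are affine parameters vanishing at $p$; since any two such affine parameters on the same geodesic are related by a linear transformation, there exists $a(p)\in\mathbb{R}^+$ (positivity coming from both parameters being past-directed along the generator) with $\lambda|_{\gamma_p}=a(p)\lambda'|_{\gamma_p}$. Collecting these over all generators defines a function $f\in\mathcal{F}(\Omega)$, $f(q)=a(p)$ whenever $q$ lies on $\gamma_p$, so that $\lambda=f\lambda'$ globally on $\Omega$ with $f$ Lie-constant along $k$ (i.e.\ $k(f)=0$).

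Now I would transplant the calculation of Lemma \ref{exis} to this setting. Since $k(f)=0$, we have $k'=fk$, and the asymptotic expansion \eqref{thetakdev} of $\theta_k$ transforms exactly as in \eqref{thkprime}, giving
\begin{equation*}
\theta_{k'}=\frac{-2}{\lambda'}+\frac{\c}{f}\frac{1}{\lambda'^2}+o(\lambda'^{-2}),
\end{equation*}
so the asymptotic coefficient of $\{S_{\lambda'}\}$ is $\c'=\c/f$. By assumption both foliations are GAB, hence both $\c$ and $\c'$ are constants; by the previous lemma $\c>0$, so $f=\c/\c'$ is a (positive) constant $a$, and $\lambda=a\lambda'$ as claimed.

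The only subtle point is step two, where one must argue that the pointwise relation $\lambda=a(p)\lambda'$ assembles into a smooth function on $\Omega$ that is Lie-constant along $k$; this follows because both $\lambda$ and $\lambda'$ are smooth on $\Omega$ with $\lambda',\lambda$ nonvanishing off $S_0$, so $f=\lambda/\lambda'$ is smooth on $\Omega\setminus S_0$ and extends smoothly across $S_0$ by the linearity on each generator, while $k(f)=0$ follows from $k(\lambda)=-1$ and $k(\lambda')=\lambda'k(1/a)+(1/a)k(\lambda)$ being compatible only if $a$ is generator-constant. I do not expect any genuine obstacle beyond this bookkeeping.
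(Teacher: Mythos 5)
Your proposal is correct and follows essentially the same route as the paper: both arguments reduce to the observation that the two geodesic generators are related by $k'=fk$ with $f$ Lie constant, apply the transformation law (\ref{thkprime}) to get $\c'=\c/f$, and conclude from the constancy of $\c$ and $\c'$ (and the positivity $\c>0$) that $f$ is a positive constant. The extra detail you supply on assembling $f$ from the per-generator affine rescalings is just an unpacking of the paper's one-line assertion that such a Lie constant $f$ exists.
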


\begin{proof}
Let $k$ and $k'$ be the null generators of $\{ S_{\lambda} \}$ and $\{ S_{\lambda'} \}$. Since both are geodesic, there exists
a Lie constant positive function $\f$ such that $k' = \f k$. We have shown in (\ref{thkprime}) that
\begin{equation*}
\theta_{k'}=\frac{-2}{\lambda'}+\frac{\c}{\f}\frac{1}{\lambda'^2}+o(\lambda'^{-2})
= \frac{-2}{\lambda'}+\frac{\c'}{\lambda'^2}+o(\lambda'^{-2}).
\end{equation*}
By definition of GAB foliation, both $\c$ and $\c'$ are constant. Thus $\f$ is a positive constant (say $a$) 
and the affine parameters are related by $\lambda = \f \lambda' = a \lambda'$. 
\end{proof}

The main result in the work by Ludvigsen and Vickers and Bergqvist can be formulated in terms of GABs as follows.
\begin{Tma}[\bf Ludvigsen \& Vickers \cite{LudvigsenVickers1983}, Bergqvist
\cite{Bergqvist1997}]
\label{LV}
Let $\Omega$ be a past asymptotically flat null hypersurface $\Omega$ in a spacetime
satisfying the DEC. Assume that $\Omega$ admits a weakly outer trapped cross section $S_0$.
If $\Omega$ admits a GAB foliation $\{S_{\lambda}\}$ 
associated to $S_0$ and  approaching large spheres,
then the  Penrose inequality 
\begin{equation*}
E_B \geq \sqrt{\frac{|S_0|}{16\pi}}
\end{equation*}
holds, where $E_B$ is the Bondi energy associated to the observer at infinity 
defined by the foliation $\{\Siglam\}$.
\end{Tma}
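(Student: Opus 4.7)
The plan is to obtain Theorem \ref{LV} as a direct corollary of Theorem \ref{main1} combined with the identification of the limit of the Hawking energy with the Bondi energy for foliations approaching large spheres. The hypotheses match those of Theorem \ref{main1} exactly: $\Omega$ is past asymptotically flat, the dominant energy condition holds, $S_0$ is weakly outer trapped, and $\{S_\lambda\}$ is the GAB foliation associated to $S_0$ (which by the existence/uniqueness results established above is well defined from $S_0$). Applying Theorem \ref{main1} yields
\begin{equation*}
\sqrt{\frac{|S_0|}{16\pi}} \;\leq\; \lim_{\lambda\to\infty} m_H(S_\lambda),
\end{equation*}
so everything reduces to identifying the right-hand side with $E_B$.

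To perform that identification I would use the extra hypothesis of Theorem \ref{LV} that $\{S_\lambda\}$ approaches large spheres. By the definition recalled in Remark \ref{coincidinglimits}, this means that the transversal Lie constant metric $\qh$ of Definition \ref{AF}(i) has positive constant Gauss curvature, so $\K_{\qh}=1/\rq^{2}$. This is exactly the first scenario singled out in Remark \ref{coincidinglimits}: the correction integrand in (\ref{limmixt}) vanishes identically, and hence $\lim_{\lambda\to\infty} \mmlamp = \lim_{\lambda\to\infty} m_H(S_\lambda)$. The classical asymptotic analysis of the Hawking energy for foliations with round spherical limit (as reviewed, for instance, in the references cited in Remark \ref{coincidinglimits}) identifies this common limit with the Bondi energy $E_B$ at the cut of $\scri^{-}$ cut out by $\Omega$ and measured by the asymptotic observer singled out by $\{S_\lambda\}$. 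Chaining these equalities with the previous inequality gives $E_B \geq \sqrt{|S_0|/(16\pi)}$, as claimed.

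The real content of the statement therefore sits inside Theorem \ref{main1}, which has been stated but not yet proved at this point in the paper. The conceptual outline of that proof, as foreshadowed in the introduction, is: use the weakly-outer-trapped hypothesis to bound $\sqrt{|S_0|/(16\pi)} \leq M(S_0,\ell^\star)$; split $M(S_\lambda,\ell^\star) = \Mblamp + D(S_\lambda,\ell^\star)$; exploit DEC together with the variation formula (\ref{Var2}) to show that $\Mblamp$ is non-decreasing along the geodesic flow; and, crucially, invoke the GAB condition (constant $\c$) to bound $D(S_\lambda,\ell^\star)$ above by its limit at infinity via the monotonicity of an auxiliary functional $F(S_\lambda)$. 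I expect this upper bound on $D(S_\lambda,\ell^\star)$ to be the main obstacle; once it is in hand, the chain $\sqrt{|S_0|/(16\pi)} \leq M(S_0,\ell^\star) \leq \lim_{\lambda\to\infty} \mmlamp$ closes, and Theorem \ref{LV} follows from the identification of limits in the previous paragraph.
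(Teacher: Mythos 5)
Your proposal is correct and matches the paper's own route exactly: the paper likewise obtains Theorem \ref{LV} as a corollary of the Penrose-type inequality for GAB foliations (Theorem \ref{tmaprincipal}, i.e.\ Theorem \ref{main1}), identifying $\lim_{\lambda\to\infty} m_H(S_\lambda)$ with $E_B$ via the approaching-large-spheres condition and the references cited in Remark \ref{coincidinglimits}. Your sketch of the underlying proof of Theorem \ref{main1} (splitting $M=M_b+D$, monotonicity of $M_b$ under DEC, and bounding $D$ through the monotone functional $F(S_\lambda)$ for GAB foliations) also agrees with what the paper does in Lemmas \ref{limFlemma} and \ref{LemMonF}.
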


As mentioned in
the Introduction, the possibility that the foliation can be chosen to approach large spheres 
was been assumed implicitly in the work by Ludvigsen and Vickers. The necessity 
to add this restriction explicitly 
was noticed by Bergqvist. Since GAB foliations associated to a given $S_0$ are unique, the condition
of approaching large spheres is indeed a strong additional assumption, that will only be satisfied
in very special circumstances. It makes sense to study GAB foliations in detail dropping
the assumption of approaching large spheres. By doing this we will be able to obtain
an interesting Penrose-type inequality relating the area of $S_0$, not to the Bondi energy, but to the limit
of the Hawking energy along the foliation. Since the Hawking energy approaches the Bondi energy for asymptotically
spherical foliations, our result will automatically include Theorem \ref{LV} as a Corollary. In particular, this will
help to clarify
the role played by the asymptotically spherical condition in Theorem \ref{LV}.

We have shown in Proposition \ref{limits} (cf. Remark \ref{coincidinglimits})  that for GAB foliations,
the limit of the functional $M(S,\ell^{\star})$ is the same as the limit of the Hawking energy
at infinity. To obtain a Penrose-type inequality we need to relate the value of
$M(S,\ell^{\star})$ at the initial surface with its asymptotic value. The functional
$M(S_{\lambda}, \ell^{\star})$ is not monotonic, so this cannot be done straight away. However,
the computations in Lemma \ref{monot} suggest splitting $M(S,\lambda)$ in two terms, one of which
will be automatically monotonic. This is useful because we can then concentrate in studying the non-monotonic term.
Define
\begin{align*}
\D(S,\ellvarphi) \defi & \sqrt{\frac{|S|}{16 \pi}} - \frac{\varphi}{4}
\lambda, \\
M_b(S,\ellvarphi) \defi & \frac{\varphi}{4}
\lambda- \frac{1}{16 \pi} \int_{\Sig} \theta_{\ellvarphi} \volSig,
\end{align*}
so that $M(S,\ellvarphi) = \D(S,\ellvarphi) + M_b(S,\ellvarphi)$.  The computation in Lemma \ref{monot} implies that
for geodesic flows and $\varphi = \mbox{const}$ (recall that the cross sections of $\Omega$ are
topological spheres, so that $\chi(S)=2$)
\begin{align*}
\frac{d \D(\Siglam,\ellvarphi)}{d \lambda} & = 
\frac{1}{\sqrt{64\pi|\Siglam|}}
\int_{\Siglam} ( - \thk) \volSiglam - \frac{\varphi}{4}, 
\\
\frac{d M_b(\Siglam,\ellvarphi)}{d \lambda} 
& =  \frac{1}{16 \pi} 
\int_{\Siglam} \left ( \Eing(\ellvarphi,k) + \varphi |\s_{\ellvarphi}|^2_{\gamSiglam}\right )
\volSiglam \quad \quad (\geq 0 \quad \mbox{under DEC} ).
\end{align*}
A direct consequence of $M_b(S,\ellvarphi)$ being monotonically increasing is that its initial value 
is bounded above by its value at infinity. From (\ref{secterm}), this limit is given by
\begin{equation*}
\underset{\lambda \to \infty}{\lim} \M_b(S_{\lambda},\ellvarphi) =
- \frac{1}{16 \pi}
\int_{\hat{S}} \varphi \left( \K_{\qh}\c+ 
\frac{1}{2}\a      \right) \volsh,
\end{equation*}
which is finite irrespectively of the choice of $\ellvarphi$. On the other hand, $\D(\Siglam, \ellvarphi)$ is not
necessarily monotonic and its limit at infinity is finite only for the choice $\ell^{\star} = \rq \ell$ and given by
(see (\ref{firsttermmm}))
\begin{equation}
\underset{\lambda \to \infty}{\lim} \D(\Siglam,\ell^{\star}) = 
\frac{ \int_{\hat{S}}\c\volsh   }{16\pi\rq}.
\label{Dsaympdevelop}
\end{equation}
To bound $M(\Siglam,\ell^{\star})$ from above we need to find an upper bound for
$\D(\Siglam, \ell^{\star})$. In fact, we shall prove $\D(\Siglam,\ell^{\star}) \leq 
\underset{\lambda \to \infty}{\lim} \D(\Siglam,\ell^{\star})$ provided the foliation $\{\Siglam\}$ is
GAB. In the following lemma we introduce a functional that turns out to be monotonic
for GAB foliations.
\begin{Lem}
\label{limFlemma}
Let $\Omega$ be a past asymptotically flat null hypersurface with a choice of affinely
parame\-tri\-zed null generator $k$
and corresponding level set function $\lambda$. Assume that the spacetime satisfies the dominant energy condition. Consider the functional 
\begin{equation*}
F(\Siglam)=\frac{|S_\lambda|}{\left(8\pi\rq^2\lambda+\int_{\hat{S}}\c\volsh\right)^2}.
\end{equation*}
If $\{S_\lambda\}$ is the GAB foliation associated to $S_0$, then $F(\Siglam)$ is monotonically increasing.
\end{Lem}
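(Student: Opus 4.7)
My plan is to reduce monotonicity of $F(\Siglam)$ to a sharp pointwise bound on $-\thk$ along each null generator of $\Omega$. Since the foliation is GAB, $\c$ is a true constant on $\Omega$, so $\int_{\Sh}\c\,\volsh=4\pi\rq^{2}\c$ and the denominator of $F$ factorises as $(4\pi\rq^{2})^{2}(2\lambda+\c)^{2}$. Combined with $d|\Siglam|/d\lambda=\int_{\Siglam}(-\thk)\volSiglam$, a short differentiation of $F$ shows that $F'(\lambda)\ge 0$ is equivalent to
\begin{equation*}
(2\lambda+\c)\int_{\Siglam}(-\thk)\volSiglam\;\ge\;4\,|\Siglam|,
\end{equation*}
which I will derive by establishing the pointwise bound $-\thk(\lambda)\ge 4/(2\lambda+\c)$ on all of $\Omega$.

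To prove this bound along each null generator, set $u:=-\thk$. Raychaudhuri's equation (\ref{Ray}) with $Q_{k}=0$ combined with the dominant energy condition gives both $du/d\lambda\le -u^{2}/2$ and $d\thk/d\lambda\ge 0$. Thus $\thk$ is non-decreasing along each generator, while (\ref{thetakdev}) forces $\thk\to 0^{-}$ at infinity, so if $\thk(\lambda_{0})\ge 0$ at some $\lambda_{0}\ge 0$ monotonicity would give $\thk\ge 0$ for all $\lambda\ge\lambda_{0}$, contradicting the asymptotic behaviour. Hence $\thk<0$ strictly and $u>0$ on all of $\Omega$, so $v(\lambda):=1/u-\lambda/2$ is well defined. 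A direct computation yields
\begin{equation*}
v'(\lambda)=-\frac{u'}{u^{2}}-\frac{1}{2}\;\ge\;\frac{u^{2}/2}{u^{2}}-\frac{1}{2}=0,
\end{equation*}
so $v$ is non-decreasing along each generator. From the expansion (\ref{thetakdev}) one obtains $1/u(\lambda)=\lambda/2+\c/4+o(1)$, whence $\lim_{\lambda\to\infty}v(\lambda)=\c/4$. Combining monotonicity with this finite limit forces $v(\lambda)\le\c/4$ for all $\lambda\ge 0$, equivalently $u\ge 4/(2\lambda+\c)$.

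Integrating this pointwise bound over $\Siglam$, using crucially that $\c$ is a true constant under the GAB hypothesis so it pulls out of the surface integral, yields the integral inequality of the first paragraph and hence $F'\ge 0$. The main subtlety is establishing the strict positivity of $u$ on all of $\Omega$, needed to legitimately introduce $v=1/u-\lambda/2$; this step uses both Raychaudhuri under DEC and the asymptotic expansion of $\thk$. As a sanity check, the past light cone from a round sphere of radius $r_{0}$ in Minkowski has $-\thk=2/(r_{0}+\lambda)$ and $\c=2r_{0}$, so the pointwise inequality is saturated and $F$ is constant, consistent with the claim.
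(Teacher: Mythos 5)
Your proof is correct and follows essentially the same route as the paper: the heart of both arguments is the pointwise inequality $\frac{1}{-\theta_k}-\frac{\lambda}{2}\leq\frac{\c}{4}$ (equivalently $-\theta_k\geq 4/(2\lambda+\c)$), obtained from the Raychaudhuri equation under DEC together with the asymptotic expansion of $\theta_k$, and then integrated over $S_\lambda$ using the constancy of $\c$ for GAB foliations. The only differences are cosmetic (you substitute $\int_{\Sh}\c\,\volsh=4\pi\rq^2\c$ at the outset rather than at the end) plus a welcome extra sentence justifying the strict negativity of $\theta_k$, which the paper leaves implicit.
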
 
\begin{proof}

Writing $F(\Siglam)$ as
\begin{equation*}
F(\Siglam) =\int_{\Siglam}\frac{\volSiglam}{\left(8\pi\rq^2\lambda+\int_{\hat{S}}\c\volsh  \right)^2}
\end{equation*}
and using $\frac{d}{d\lambda}\volSiglam=-\theta_k \volSiglam$, the derivative of $F(\Siglam)$ is
\begin{eqnarray}
\frac{d}{d\lambda}F(\Siglam)&=&\int_{\Siglam}\left (\frac{-16\pi\rq^2}{\left(8\pi\rq^2\lambda+\int_{\hat{S}}\c\volsh  \right)^3}+\frac{-\theta_k}{\left(8\pi\rq^2\lambda+\int_{\hat{S}}\c\volsh  \right)^2}           \right)\volSiglam=  \nonumber \\
&&
\int_{\Siglam}\left( \frac{-16\pi\rq^2+(-\theta_k)\left(8\pi\rq^2\lambda+\int_{\hat{S}}\c\volsh  \right)}{\left(8\pi\rq^2\lambda+\int_{\hat{S}}\c\volsh  \right)^3} \right)\volSiglam. 
\label{FderivwithR}
\end{eqnarray}
This derivative is non-negative provided
\begin{equation}
\label{sufuno}
(-\theta_k)\left(8\pi\rq^2\lambda+\int_{\hat{S}}\c\volsh  \right)\geq 16\pi\rq^2 \quad \Longleftrightarrow 
\quad \frac{1}{-\theta_k}-\frac{\lambda}{2}\leq \frac{1}{16\pi\rq^2}\int_{\Sh}\c \volsh. 
\end{equation}
The Raichaudhuri equation (\ref{Ray}) implies that the function 
$ \frac{1}{-\theta_k}-\frac{\lambda}{2}$ has non-negative derivative (under DEC). Since its limit
at infinity is $\frac{\c}{4}$ it follows
\begin{equation}
\label{sufdos}
 \frac{1}{-\theta_k}-\frac{\lambda}{2}\leq \frac{\c}{4}.
\end{equation}
which holds true for any geodesic foliation. For GAB foliations we have, using $\int_{\Sh} \volsh = 4 \pi \rq^2$,
\begin{equation*}
\frac{\c}{4} = \frac{1}{16\pi \rq^2}\int_{\Sh}\c \volsh 
\end{equation*}
and (\ref{sufdos}) is exactly (\ref{sufuno}).
\end{proof}
 
The monotonicity of the functional $F(\Siglam)$ is useful to establish an upper bound for $D(\Siglam,\ell^{\star})$, irrespectively
of whether the foliation is GAB o not.
\begin{Lem}
\label{LemMonF}
Let $\{S_{\lambda} \}$ be a geodesic foliation with leading term  metric $\qh$. If the functional $F(\Siglam)$
is monotonically increasing, then
\begin{equation}
  D(S_{\lambda}, \ell^{\star}) \leq \underset{\lambda \to \infty}{\lim}D(S_{\lambda}, \ell^{\star}).
\label{RelDimD}
\end{equation}
\end{Lem}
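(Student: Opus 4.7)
\emph{Proof plan.} The plan is to exploit an elementary algebraic correspondence between $F(S_\lambda)$ and $D(S_\lambda,\ell^\star)$. Recall that the rescaling $\ell^\star = \rq\ell$ corresponds to $\varphi = 2\rq$, so that
\[
D(S_\lambda,\ell^\star) = \sqrt{\frac{|S_\lambda|}{16\pi}} - \frac{\rq}{2}\lambda.
\]
The strategy is to rearrange the monotonicity hypothesis $F(S_\lambda) \leq \lim_{\lambda\to\infty} F(S_\lambda)$ directly into the desired inequality on $D$.

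First I would compute $\lim_{\lambda\to\infty} F(S_\lambda)$. Squaring the expansion (\ref{firsttermmm}) gives $|S_\lambda| = 4\pi\rq^2\lambda^2 + (\int_{\hat S}\c\,\volsh)\,\lambda + o(\lambda)$, whereas the denominator in $F$ expands as $(8\pi\rq^2\lambda + \int_{\hat S}\c\,\volsh)^2 = 64\pi^2\rq^4\lambda^2 + 16\pi\rq^2(\int_{\hat S}\c\,\volsh)\lambda + O(1)$. The leading ratio gives $\lim_{\lambda\to\infty} F(S_\lambda) = 1/(16\pi\rq^2)$. Combined with the hypothesis that $F$ is monotonically increasing, this yields, for every $\lambda$,
\[
\frac{|S_\lambda|}{\bigl(8\pi\rq^2\lambda + \int_{\hat S}\c\,\volsh\bigr)^2} \leq \frac{1}{16\pi\rq^2}.
\]

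Second I would pass to the square root. By the preceding lemma $\c>0$, so the quantity $8\pi\rq^2\lambda + \int_{\hat S}\c\,\volsh$ is strictly positive for every $\lambda \geq 0$ and no sign ambiguity arises; extracting roots yields $4\sqrt{\pi}\,\rq\,\sqrt{|S_\lambda|} \leq 8\pi\rq^2\lambda + \int_{\hat S}\c\,\volsh$. Dividing by $16\pi\rq$ and rewriting $\sqrt{|S_\lambda|}/(4\sqrt{\pi}) = \sqrt{|S_\lambda|/(16\pi)}$ produces
\[
\sqrt{\frac{|S_\lambda|}{16\pi}} \leq \frac{\rq}{2}\lambda + \frac{\int_{\hat S}\c\,\volsh}{16\pi\rq}.
\]
Subtracting $(\rq/2)\lambda$ from both sides and invoking (\ref{Dsaympdevelop}) to identify the right-hand side with $\lim_{\lambda\to\infty} D(S_\lambda,\ell^\star)$ concludes the argument. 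There is no real obstacle: the proof is purely algebraic once one observes that $F$ is, up to an additive renormalization in its denominator, the square of the rescaled area quantity appearing in $D(S_\lambda,\ell^\star)$. The only minor subtlety, the positivity required to take square roots without reversing the inequality, is supplied by the strict inequality $\c > 0$ established in the preceding lemma.
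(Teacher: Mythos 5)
Your proposal is correct and follows essentially the same route as the paper: bound $F(S_\lambda)$ by its limit $1/(16\pi \rq^2)$, rearrange, take square roots, and identify the resulting constant with $\lim_{\lambda\to\infty} D(S_\lambda,\ell^\star)$ via (\ref{Dsaympdevelop}). Your explicit remark that $\c>0$ guarantees the positivity needed to extract square roots is a point the paper leaves implicit, but otherwise the two arguments coincide.
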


\begin{proof}
The monotonicity of the functional $F(\Siglam)$ along $\{\Siglam\}$ implies
\begin{equation}
\label{ineqlim}
F(\Siglam)\leq \underset{\lambda \to \infty}{\lim} F(\Siglam).
\end{equation}    
To compute this limit we use
\begin{equation*}
|S_{\lambda}|=\int_{S_{\lambda}}\bm{\eta_{S_{\lambda}}}=\int_{\Sh}(\lambda^2+\c \lambda+o(\lambda)){\bm{\eta_{\hat{q}}}}= 4\pi R_{\qh}^2 
\lambda^2+o(\lambda)
\end{equation*}
which follows from (\ref{volsrlamform})  and  $\int_{\hat{S}}\bm{\eta_{\hat{q}}} = 4 \pi \rq^2$. Hence
\begin{equation*}
 \underset{\lambda \to \infty}{\lim}F(\lambda)=\underset{\lambda \to \infty}{\lim} 
\frac{|S_{\lambda}|}{\left(8\pi R_{\qh}^2 \lambda+
\int_{\Sh}\c{\bm{\eta_{\hat{q}}}} \right)^2}
=\frac{1}{16\pi R_{\qh}^2}
\end{equation*}
and (\ref{ineqlim}) yields
\begin{equation}
\label{GABareabound}
\frac{|S_{\lambda}|}{\left(8\pi R_{\qh}^2 \lambda+
\int_{\Sh}\c{\bm{\eta_{\hat{q}}}} \right)^2}
\leq \frac{1}{16\pi R_{\qh}^2}
\quad \Longleftrightarrow \quad
\frac{|S_{\lambda}|}{16 \pi} \leq \left (  \frac{\rq}{2} \lambda 
+ \frac{1}{16 \pi \rq} \int_{\Sh}\c{\bm{\eta_{\hat{q}}}} \right)^2.
\end{equation}   
From the definition of $D(\Siglam, \ell^{\star})$  and using $\varphi = 2 \rq$
 we have
\begin{equation*}
D(S_{\lambda},\ell^{\star})=\sqrt{\frac{|S_{\lambda}|}{16\pi}}-\frac{R_{\qh}}{2}\lambda 
\leq 
\frac{ \int_{\hat{S}}\c\volsh   }{16\pi\rq}
\end{equation*}
after using (\ref{GABareabound}). Since the right-hand side is the limit of $\D(\Siglam)$ at infinity (\ref{Dsaympdevelop}),
we conclude  (\ref{RelDimD}).
\end{proof}
We can now establish our main result concerning GAB foliations.


\begin{Tma}[\bf A Penrose type inequality for GAB foliations]
\label{tmaprincipal}
Let $\Omega$ be a past asymptotically flat null hypersurface 
and $S_0$ a cross section. Assume that the spacetime satisfies the dominant energy condition. Then, the area $|S_0|$ satisfies the bound
\begin{equation}
\label{finalineq}
 \sqrt{\frac{|S_0|}{16\pi}} -\frac{1}{16\pi} \int_{S_0} \theta_{\ell^\star}
\bm{\eta_{S_0}}
\leq \underset{\lambda \to
\infty}{\lim} m_H(S_{\lambda}),
\end{equation}
where the limit is taken along the GAB foliation $\{S_{\lambda}\}$
associated to $S_0$. In particular, if $S_0$ is
a weakly outer trapped cross section, then
\begin{equation}
\label{mainineqa}
\sqrt{\frac{|S_0|}{16\pi}} \leq \underset{\lambda \to \infty}{\lim} m_H(S_{\lambda}).
\end{equation} 
\end{Tma}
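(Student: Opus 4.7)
The plan is to exploit the decomposition $M(S,\ell^\star) = D(S,\ell^\star) + M_b(S,\ell^\star)$ introduced before Lemma \ref{limFlemma} and bound each summand at $S_0$ by its limit along the GAB foliation $\{S_\lambda\}$ associated to $S_0$ (whose existence and geometric uniqueness are guaranteed by Lemma \ref{exis} and the subsequent uniqueness lemma).

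First I would invoke the monotonicity of $M_b(S_\lambda,\ell^\star)$ under DEC, which follows directly from the variation formula computed just after the split, since the integrand $\mathrm{Ein}^g(\ell^\star,k) + \varphi |s_{\ell^\star}|^2$ is non-negative for null $k$ under DEC. This immediately gives
\begin{equation*}
M_b(S_0,\ell^\star) \leq \lim_{\lambda\to\infty} M_b(S_\lambda,\ell^\star),
\end{equation*}
with the limit being finite (and explicitly computable) by the formula displayed just after Lemma \ref{limFlemma}. Next I would use the GAB assumption: since $c$ is constant along $\{S_\lambda\}$, Lemma \ref{limFlemma} ensures that $F(S_\lambda)$ is monotonically increasing, and then Lemma \ref{LemMonF} yields
\begin{equation*}
D(S_0,\ell^\star) \leq \lim_{\lambda\to\infty} D(S_\lambda,\ell^\star),
\end{equation*}
with the right-hand side given explicitly by \eqref{Dsaympdevelop}.

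Adding the two inequalities and using $M = D + M_b$ gives $M(S_0,\ell^\star) \leq \lim_{\lambda\to\infty} M(S_\lambda,\ell^\star)$. The last step is to identify this asymptotic value with the Hawking limit: by Proposition \ref{limits} the two limits differ by the integral $\frac{1}{16\pi}\int_{\hat S} c(\tfrac{1}{R_{\hat q}} - R_{\hat q}\mathcal{K}_{\hat q}) \bm{\eta_{\hat q}}$, and since the GAB property makes $c$ constant, this integral vanishes by the Gauss--Bonnet argument spelled out in Remark \ref{coincidinglimits}. Therefore
\begin{equation*}
M(S_0,\ell^\star) \leq \lim_{\lambda\to\infty} m_H(S_\lambda),
\end{equation*}
which upon unfolding the definition of $M(S_0,\ell^\star)$ is exactly \eqref{finalineq}. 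The weakly outer trapped case \eqref{mainineqa} follows because $\theta_{\ell^\star} \leq 0$ irrespectively of the scaling of $\ell$, so dropping the non-negative term $-\tfrac{1}{16\pi}\int_{S_0}\theta_{\ell^\star}\bm{\eta_{S_0}}$ only strengthens the inequality.

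The only non-routine step is the bound on $D(S_0,\ell^\star)$, and its proof is already encapsulated in Lemmas \ref{limFlemma} and \ref{LemMonF}; the crucial point there is that the GAB condition $c = \mathrm{const}$ converts the sharp Raychaudhuri-based estimate $\tfrac{1}{-\theta_k} - \tfrac{\lambda}{2} \leq \tfrac{c}{4}$ into exactly the pointwise inequality needed to make the integrand in the derivative of $F(S_\lambda)$ non-negative. Everything else in the argument above is bookkeeping with already-established limits and the trivial sign of $\theta_{\ell^\star}$ on weakly outer trapped surfaces.
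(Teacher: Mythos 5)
Your proposal is correct and follows essentially the same route as the paper's own proof: the splitting $M=D+M_b$, monotonicity of $M_b$ under DEC, Lemmas \ref{limFlemma} and \ref{LemMonF} for the bound on $D$ along the GAB foliation, Proposition \ref{limits} together with the Gauss--Bonnet cancellation for constant $\c$ to identify the limit with that of the Hawking energy, and the sign of $\theta_{\ell^\star}$ on weakly outer trapped surfaces for the final statement. The only difference is expository: you add the two inequalities for $D$ and $M_b$ explicitly, while the paper states the combined conclusion directly.
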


\begin{proof}
From Lemmas \ref{limFlemma} and \ref{LemMonF}, $D(\Siglam,\ell^{\star})$ is bounded above by its limit
at infinity. The monotonicity of $M_b(S_{\lambda},\ell^\star)$ then implies
\begin{equation*}
M(S_{\lambda},\ell^\star) \leq \underset{\lambda \to \infty}{\lim} M(S_{\lambda},\ell^\star)= \underset{\lambda \to \infty}{\lim} m_H(S_{\lambda}),
\end{equation*}
where the last equality follows from Proposition \ref{limits}, since 
$\{\Siglam\}$ is GAB.
In particular, for $\lambda=0$ we have (\ref{finalineq}).
For the last statement we simply use that 
$\theta_{\ell^{\star}} \leq 0$ for
weakly outer trapped surfaces.
%
\end{proof}

Inequality (\ref{mainineqa}) gives a completely general upper bound for the area
of weakly outer trapper surfaces $S_0$
in terms of an energy-type quantity evaluated at infinity along the outward past null hypersurface 
generated by $S_0$, provided the latter stays regular all the way to infinity. In combination
to the general analysis of the limit of the Hawking energy at infinity 
carried out in \cite{MarsSoria2015}, this provides a Penrose-type inequality with potentially
interesting consequences. Obviously, this inequality will
only correspond to the Penrose inequality whenever the limit of the Hawking energy
agrees with the Bondi energy of the cut at infinity defined by $\Omega$. 
As already mentioned,
this is known to occur 
\cite{Bartnik2004,PenroseRindler,Sauter2008,MarsSoria2015}
for foliations approaching large spheres.
When $\Omega$ admits
a GAB foliation approaching large spheres, then  the limit of the Hawking energy along
this foliation is the Bondi energy $E_B$ associated to this observer at infinity,
and the Penrose-type inequality in Theorem \ref{tmaprincipal}
becomes the standard Penrose inequality, thus
recovering the original result by Ludvigsen \& Vickers and Bergqvist quoted as
Theorem \ref{LV}.

\section{On the inequality 
$\bm{D(\Siglam,\ell) \leq 
\underset{\lambda \to \infty}{\lim} D(\Siglam, \ell)}$.
}

The key ingredient that allowed us to prove the Penrose-type
inequality (\ref{mainineqa}) is
$D(\Siglam,\ell^{\star}) \leq  \underset{\lambda \to 
\infty}{\lim} D(\Siglam, \ell^{\star})$. 
In fact, the argument in the proof of Theorem \ref{tmaprincipal} 
combined with Proposition \ref{limits}
shows that any surface $S_0$ satisfying the inequality
\begin{equation}
D(S_0, \ell^{\star}) \leq 
\underset{\lambda \to \infty}{\lim} D(\Siglam, \ell^{\star})
= \frac{1}{16 \pi \rq} \int_{\hat{S}} \c  \volsh 
\label{problem} 
\end{equation} 
along a geodesic foliation  $\{ \Siglam \}$ starting at $S_0$,  in a spacetime satisfying the dominant energy condition, 
automatically satisfies the inequality
\begin{equation}
\sqrt{\frac{|S_0|}{16 \pi}} - \frac{1}{16 \pi} 
\int_{S_0} \theta_{\ell^{\star}}
\bm{\eta_{S_0}} \leq 
\underset{\lambda \to \infty}{\lim} m_H(S_\lambda)+\frac{1}{16\pi}\int_{\Sh}
\c\left( \frac{1}{\rq}-\rq \K_{\qh}    \right) \volsh.
\label{mainineq}
\end{equation}
This is the Penrose inequality provided $S_0$
is a weakly outer trapped surface and 
the right hand is the Bondi energy $E_B$  along $\{ \Siglam\}$. 
For this it is sufficient that $\{ \Siglam\}$ approaches large spheres
and this will be the case we will be interested from now on.
However, we postpone making the assumption that 
$\qh$ is the round metric until subsection \ref{RAM} because Proposition
\ref{Prop2} below (which holds for arbitrary geodesic foliations) may
be of independent interest.

In the previous section the validity of (\ref{problem})
followed from the monotonicity of $F(\Siglam)$
along GAB foliations. As shown in Lemma
\ref{LemMonF} monotonicity
of $F(\Siglam)$ is sufficient to establish (\ref{problem})
for arbitrary geodesic foliations. Since the
derivative of 
(\ref{FderivwithR}) is
\begin{equation*}
\frac{d}{d\lambda}F(S_\lambda)=\frac{1}{\left( 8\pi\rq^2\lambda+\int_{\Sh}\c\volsh\right)^2} \left( \frac{d}{d\lambda}|S_\lambda|- \frac{16\pi \rq^2 |S_\lambda|}{8\pi\rq^2\lambda+\int_{\Sh}\c\volsh  }\right) 
\end{equation*}
we have established:
\begin{Prop}
\label{Prop2}
Let $\Omega$ be a past asymptotically flat null hypersurface
in a spacetime satisfying the dominant energy condition
and $\{ \Siglam \}$ a geodesic foliation. If  
\begin{equation}
\label{sufcondPenrose}
\frac{d}{d\lambda}|S_\lambda| \geq \frac{16\pi \rq^2 |S_\lambda|}{8\pi\rq^2\lambda+\int_{\Sh}\c\volsh  }
\end{equation}
holds for all $\lambda \geq 0$ then the
inequality (\ref{mainineq}) holds. In particular if $\{ \Siglam\}$
approaches large spheres and (\ref{sufcondPenrose}) is satisfied, then 
the Penrose inequality
$E_B \geq \sqrt{\frac{|S_0|}{16 \pi}}$ holds, where $E_B$
is the Bondi energy associated to the observer defined by $\{ \Siglam \}$.
\end{Prop}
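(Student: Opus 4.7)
The proof is essentially an assembly of the pieces already developed, since the derivative identity displayed immediately above the proposition does the main work for us. The plan is to recognise the hypothesis as a monotonicity statement for $F(S_\lambda)$, then invoke Lemma \ref{LemMonF}, the monotonicity of $M_b$, and Proposition \ref{limits} in sequence.

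First I would note that in the formula for $\frac{d}{d\lambda}F(S_\lambda)$ given just before the proposition, the prefactor $\left(8\pi\rq^2\lambda+\int_{\Sh}\c\volsh\right)^{-2}$ is strictly positive. Hence condition (\ref{sufcondPenrose}) is exactly equivalent to $\frac{d}{d\lambda}F(S_\lambda)\geq 0$ for all $\lambda\geq 0$, i.e.\ to the monotonicity of $F$ along the geodesic foliation $\{S_\lambda\}$. Lemma \ref{LemMonF} then applies directly and gives
\begin{equation*}
D(S_0,\ell^\star)\leq \underset{\lambda\to\infty}{\lim}D(S_\lambda,\ell^\star).
\end{equation*}

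Next I would invoke the splitting $M(S,\ell^\star)=D(S,\ell^\star)+M_b(S,\ell^\star)$. The variational formula for $M_b$ established right after Lemma \ref{monot} shows that, under the dominant energy condition and along a geodesic flow with constant $\varphi$, $\frac{d}{d\lambda}M_b(S_\lambda,\ell^\star)\geq 0$. Therefore $M_b(S_0,\ell^\star)\leq \underset{\lambda\to\infty}{\lim}M_b(S_\lambda,\ell^\star)$, and adding this to the bound on $D$ yields
\begin{equation*}
M(S_0,\ell^\star)\leq \underset{\lambda\to\infty}{\lim}M(S_\lambda,\ell^\star).
\end{equation*}
Proposition \ref{limits} rewrites the right-hand side as $\underset{\lambda\to\infty}{\lim}m_H(S_\lambda)+\frac{1}{16\pi}\int_{\Sh}\c\bigl(\frac{1}{\rq}-\rq\K_{\qh}\bigr)\volsh$, while unpacking $M(S_0,\ell^\star)$ from its definition produces the left-hand side of (\ref{mainineq}). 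This proves the first assertion.

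For the last sentence, if $\{S_\lambda\}$ approaches large spheres then $\qh$ has constant Gauss curvature $\K_{\qh}=1/\rq^2$, so by Remark \ref{coincidinglimits} the correction integral vanishes and, by the references cited there, $\underset{\lambda\to\infty}{\lim}m_H(S_\lambda)=E_B$. Moreover the standing hypothesis of the Penrose inequality is that $S_0$ is a weakly outer trapped surface, so $\theta_{\ell^\star}\leq 0$ and the integral term in (\ref{mainineq}) is non-negative. Dropping it gives $\sqrt{|S_0|/(16\pi)}\leq E_B$. The only non-trivial point in the entire argument is the derivation of Lemma \ref{LemMonF} itself, which has already been carried out; everything else here is bookkeeping, so I do not expect any genuine obstacle.
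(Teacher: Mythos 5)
Your proposal is correct and follows essentially the same route as the paper: the paper also reads condition (\ref{sufcondPenrose}) off the displayed formula for $\frac{d}{d\lambda}F(S_\lambda)$ as equivalent to $F'(S_\lambda)\geq 0$, applies Lemma \ref{LemMonF} to bound $D(S_\lambda,\ell^\star)$ by its limit, and then reuses the argument of Theorem \ref{tmaprincipal} (monotonicity of $M_b$ under DEC plus Proposition \ref{limits}) to obtain (\ref{mainineq}), specializing to large spheres and weakly outer trapped $S_0$ for the Penrose inequality. No gaps.
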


\begin{remark}
\label{Prop2bis}
Expanding the area as 
\begin{align}
|S_\lambda|=4\pi\rq^2\lambda^2+\left( \int_{\Sh}\c\volsh \right)\lambda+\hat{\Theta},
\label{AreaExp}
\end{align}
(\ref{sufcondPenrose}) becomes, after
some cancellations,
\begin{equation}
\label{firstarticleineqgeneral}
\left(\int_{\Sh}\c\volsh\right)^2+\left( 8\pi\rq^2\lambda+\int_{\Sh}\c\volsh\right)\frac{d\hat{\Theta}}{d\lambda} \geq 16\pi\rq^2\hat{\Theta}.
\end{equation}
This alternative form of Proposition \ref{Prop2} will be used
in Section \ref{RAMMink} below.
%

\end{remark}

GAB foliations have the property that (\ref{sufcondPenrose}) is always true.
It is natural to ask whether the constancy of $\c$ can be relaxed and
still obtain sufficiently general
conditions under which 
(\ref{sufcondPenrose}) holds. The issue, however, appears to be difficult.
In the next subsection we study the behaviour of the derivative of $F(S_\lambda)$   
near infinity and show that both cases of $F(\Siglam)$ being
monotonically increasing or monotonically
decreasing near infinity are possible.

%

\subsection{On the monotonicity of
$F(\Siglam)$ for large $\lambda$}

A necessary condition for (\ref{FderivwithR}) to be non-negative for all 
$\lambda$ is, of course, that its leading term at infinity is non-negative.
To determine the asymptotic behaviour at infinity requires
one extra term in the expansion of $\theta_k$ as compared to (\ref{thetakdev}).
To make sure this is possible we need a slightly stronger definition
of asymptotic flatness.

%

\begin{Def}[\bf Strong past asymptotic flatness]
\label{SAF}
A null hypersurface $\Omega$ in a
spacetime $(\mathcal{M}^4,g)$ 
is  {\bf strong past asymptotically flat} if it is past asymptotically flat
with $(i)$ in Definition \ref{AF} replaced by the stronger condition
\begin{itemize}
\item[(i)']  There exist symmetric 2-covariant transversal
and Lie constant  tensor fields $\qh$ (positive definite), $h$ and $\Psi_0$ 
such that $\gammatilde$ defined by
$\gamma= \lambda^2 \qh+\lambda h+ \Psi_0 +\gammatilde$ is
$\gammatilde=o_1(1)\cap o_2^X(1)$.
\end{itemize}
\end{Def}

\begin{remark}
In strong asymptotically flat null hypersurfaces, 
all geodesic foliations $\{ \Siglam \}$ automatically
satisfy item (i)' in the definition. Also, there always exist geodesic
foliations  $\{\Siglam\}$ for which the asymptotic metric
$\qh$ is the round metric of unit radius on $\mathbb{S}^2$ (see
\cite{MarsSoria2015} for a proof of both facts in the context
of asymptotically
flat null hypersurfaces, which carries over
immediately to the strong  asymptotically flat case).

\end{remark}

A first consequence of strong asymptotic flatness is that 
the function $\Theta$ defined by
\begin{equation}
\label{volSiglongexp}
\volSiglam= (\lambda^2+\c\lambda+\Theta)\volsh
\end{equation}
is of the form $\Theta = \Theta_0 + \tilde{\Theta}$ with
$\Theta_0$ Lie constant and $\tilde{\Theta} = o_1(1)$. A second consequence,
which follows from (\ref{Evolmetric}), is that 
$\theta_k$ admits the expansion
\begin{equation}
\label{thetaklongexp}
\theta_k=\frac{-2}{\lambda}+\frac{\c}{\lambda^2}+\frac{\cdos}{\lambda^3}+o(\lambda^{-3})
\end{equation}
with $\cdos$ Lie constant. The following proposition relates $\Theta_0$ with 
$\c$ and $\cdos$ and provides a universal bound for $\cdos$.
\begin{Prop}
\label{lemaprimero}
Let $\Omega$ be a strong past asymptotically flat null hypersurface 
and $\{ \Siglam \}$ a geodesic foliation. Then
\begin{align}
\label{twocondit}
\Theta_0 & :=\underset{\lambda \to \infty}{\lim}\Theta=\frac{1}{2}\left(\left(\c \right)^2+\cdos \right)
\end{align}
If in addition the spacetime satisfies the dominant energy condition then we will also have
\begin{equation}
\label{coefdos}
\cdos\leq -\frac{1}{2}\left( \c\right)^2\leq 0.
\end{equation}
\end{Prop}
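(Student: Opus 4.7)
The plan is to establish both assertions by matching asymptotic expansions of the relevant quantities along the geodesic generator $k$ (which satisfies $k(\lambda) = -1$, so $\pounds_k$ acts as $-\partial_\lambda$ along each generator), thus avoiding the need to differentiate any $o(\cdot)$ remainder.

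For (\ref{twocondit}), I would derive (\ref{thetaklongexp}) directly from (\ref{volSiglongexp}) by inverting the volume evolution $k(\volSiglam) = \thk\,\volSiglam$, i.e. $\partial_\lambda\volSiglam = -\thk\,\volSiglam$. Writing $V := \lambda^2 + \c\lambda + \Theta_0 + \tilde\Theta$, strong past asymptotic flatness provides $\tilde\Theta = o_1(1)$, hence $\partial_\lambda\tilde\Theta = o(\lambda^{-1})$. Using
\begin{equation*}
\frac{1}{V} = \frac{1}{\lambda^2}\left(1 - \frac{\c}{\lambda} + \frac{\c^2 - \Theta_0}{\lambda^2} + o(\lambda^{-2})\right), \qquad \partial_\lambda V = 2\lambda + \c + o(\lambda^{-1}),
\end{equation*}
a direct multiplication yields
\begin{equation*}
\thk = -\frac{\partial_\lambda V}{V} = -\frac{2}{\lambda} + \frac{\c}{\lambda^2} + \frac{2\Theta_0 - \c^2}{\lambda^3} + o(\lambda^{-3}).
\end{equation*}
Comparison with (\ref{thetaklongexp}) forces $\cdos = 2\Theta_0 - \c^2$, which is (\ref{twocondit}).

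For (\ref{coefdos}), the plan is to exploit the monotonicity of $\frac{1}{-\thk} - \frac{\lambda}{2}$ already established in the proof of the previous lemma; it follows from the Raychaudhuri equation (\ref{Ray}) with $Q_k=0$ and the non-negativity, under DEC, of $W := \Pi^k_{AB}\Pi^{k\,AB} + \Ricg(k,k)$ (the latter being $\Eing(k,k) \geq 0$ for null $k$). Together with the limit $\frac{1}{-\thk} - \frac{\lambda}{2} \to \frac{\c}{4}$ read off from (\ref{thetaklongexp}), this monotonicity gives the pointwise bound $\frac{1}{-\thk} - \frac{\lambda}{2} \leq \frac{\c}{4}$. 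Inverting (\ref{thetaklongexp}) one order further yields
\begin{equation*}
\frac{1}{-\thk} - \frac{\lambda}{2} = \frac{\c}{4} + \frac{2\cdos + \c^2}{8\lambda} + o(\lambda^{-1}),
\end{equation*}
so multiplying the bound by $8\lambda > 0$ and letting $\lambda \to \infty$ forces $2\cdos + \c^2 \leq 0$, i.e.\ (\ref{coefdos}).

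The only delicate point is bookkeeping of asymptotic orders: part (i) is purely algebraic once the evolution equation for $\volSiglam$ and $\partial_\lambda\tilde\Theta = o(\lambda^{-1})$ are combined, and part (ii) reduces entirely to the already-proved pointwise monotonicity of $\frac{1}{-\thk} - \frac{\lambda}{2}$ plus one further term of the expansion obtained by algebraic inversion of (\ref{thetaklongexp}). Strong past asymptotic flatness (Definition \ref{SAF}) is exactly what makes this third-order expansion of $\thk$ available.
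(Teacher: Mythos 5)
Your proposal is correct. Part (i) is essentially the paper's own proof: both arguments insert the expansions (\ref{volSiglongexp}) and (\ref{thetaklongexp}) into $\frac{d}{d\lambda}\volSiglam = -\thk\,\volSiglam$ and identify the $\lambda^{-1}$ coefficient using $\frac{d\tilde{\Theta}}{d\lambda}=o(\lambda^{-1})$; writing this as $\thk=-\partial_\lambda V/V$ is only a rearrangement. For part (ii) you use the same key input as the paper, namely the pointwise bound $\frac{1}{-\thk}-\frac{\lambda}{2}\leq\frac{\c}{4}$ of (\ref{sufdos}) (Raychaudhuri plus DEC plus the value of the limit at infinity), but you extract the conclusion in a slightly different way: you expand $\frac{1}{-\thk}-\frac{\lambda}{2}-\frac{\c}{4}=\frac{2\cdos+(\c)^2}{8\lambda}+o(\lambda^{-1})$, multiply by $8\lambda>0$ and let $\lambda\to\infty$, whereas the paper first converts (\ref{sufdos}) into monotonicity of $\hat{I}(\lambda)=\volSiglam/(\c+2\lambda)^2$, bounds $\hat{I}$ by its limit $\frac{1}{4}\volsh$, and reads off $\Theta_0\leq\frac{1}{4}(\c)^2$ from the $\lambda^{-2}$ coefficient. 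Your route is marginally more direct (it does not pass through $\Theta_0$, so your (ii) is independent of (i)); the paper's route introduces the monotone density $\volSiglam/(\c+2\lambda)^2$, a construction it echoes in its final section. The only points worth making explicit in your write-up are that $\thk<0$ everywhere (needed to divide by $-\thk$; it follows from the Raychaudhuri equation and $\thk\to 0^-$) and that the third-order expansion (\ref{thetaklongexp}) with $\cdos$ Lie constant is available, which, as you note, is precisely what strong past asymptotic flatness guarantees.
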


\begin{proof}
Inserting (\ref{volSiglongexp}) and (\ref{thetaklongexp})
into the evolution equation
\begin{equation}
\label{difvolform}
\frac{d}{d\lambda}\volSiglam=-\theta_k\volSiglam
\end{equation}
gives
%
\begin{align*}
(2\lambda+\c+\frac{d\tilde{\Theta}}{d\lambda})\volsh
& =-\left(\frac{-2}{\lambda}+\frac{\c}{\lambda^2}+\frac{\cdos}{\lambda^3}+o(\lambda^{-3}) \right)(\lambda^2+\c\lambda+\Theta_0+o(1))\volsh \\
& = \left ( 2\lambda+\c+\left(2\Theta_0-\left(\c\right)^2-\cdos \right)\frac{1}{\lambda}+o(\lambda^{-1}) \right ) \volsh.
\end{align*}
Since $\frac{d\tilde{\Theta}}{d\lambda}=o(\lambda^{-1})$ we conclude
$2\Theta_0-\left(\c\right)^2-\cdos=0$,  which proves (\ref{twocondit}).
For the universal bound (\ref{coefdos}), let us define
$\hat{I}(\lambda)=\frac{\volSiglam}{(\c+2\lambda)^2}$. Its derivative
is
\begin{equation*}
\frac{d}{d\lambda}\hat{I}(\lambda)=\frac{1}{(\c+2\lambda)^2}\left(-\theta_k-\frac{4}{\c+2\lambda}\right)\volSiglam \geq 0
\end{equation*}
where in the last inequality we used (\ref{sufdos}) (here is where
the DEC is used). $\hat{I}(\lambda)$
has limit at infinity $\frac{1}{4} \volsh$. In combination with 
the fact that $\hat{I}$ is monotonically increasing we conclude
\begin{equation*}
\hat{I}(\lambda)=\frac{\volSiglam}{(\c+2\lambda)^2}\leq 
\frac{1}{4} \volsh.
\end{equation*}
Inserting (\ref{volSiglongexp}), a direct computation gives
\begin{equation*}
\left(-\frac{1}{16} \left(\c\right)^2+ \frac{1}{4} \Theta_0 \right)\frac{1}{\lambda^2}+o(\lambda^{-2})\leq 0 \quad \quad \Longrightarrow \quad \quad 
\Theta_0\leq \frac{1}{4} \left(\c\right)^2,
\end{equation*}
which is simply (\ref{coefdos}) after using the explicit form of $\Theta_0$.
\end{proof}

Let us now find the asymptotic expansion of the right hand side of
(\ref{FderivwithR}). Plugging the
asymptotic expansion (\ref{thetaklongexp}) gives, after a straightforward computation,
\begin{equation}
\label{expansionderivF}
\frac{d}{d\lambda}F(\Siglam)=\frac{1}{(8 \pi\rq^2)^3}\left(  \left( \int_{\Sh}\c\volsh \right)^2-8\pi\rq^2\int_{\Sh}\left(\c\right)^2\volsh-8\pi\rq^2\int_{\Sh}\cdos\volsh  \right)\frac{1}{\lambda^3}+o(\lambda^{-3}).
\end{equation} 
The leading coefficient can be rewritten as
\begin{equation*}
\frac{F_{\infty}}{(8 \pi\rq^2)^3}  :=
\frac{1}{(8\pi\rq^2)^3}\left(
-4\pi\rq^2\int_{\Sh}\left( \left(\c \right)^2+2\cdos \right) \volsh +
\left [
  \left( \int_{\Sh}\c\volsh \right)^2-4\pi\rq^2\int_{\Sh}\left(\c\right)^2\volsh \right ] \right) 
\end{equation*}
which is a difference of positive quantities. Indeed, the first integral
is non-negative because of  (\ref{coefdos}), while the term
in brackets is non-positive because
\begin{equation*}
\Big ( \int_{\Sh}\c\volsh \Big )^2 \leq 4\pi\rq^2\int_{\Sh} \left (\c \right 
)^2\volsh 
\end{equation*}
by the H\"older inequality. Depending on which
term dominates, the functional $F(\Siglam)$ will be 
increasing or decreasing near infinity. Non-negativity of the
leading term (\ref{expansionderivF}) 
is obviously a necessary condition for
the hypothesis of Proposition \ref{Prop2} to hold. However, even
when (\ref{expansionderivF})  has the right sign, it is not at
all obvious how to ensure that $F(\Siglam)$ is monotonic
for all $\lambda$ when the foliation is, in addition, assumed to approach
large spheres. We have attempted (and failed) finding sufficient
condition ensuring 
$\frac{d^2}{d\lambda^2}F(\Siglam)\leq 0$,
as this would immediately imply that $F(S_\lambda)$ is increasing 
(because $F'(\Siglam) \rightarrow 0$ at infinity). Despite
the lack of success so far, approaching the null Penrose inequality
using the monotonic functional $F(\Siglam)$ remains an interesting
open problem, specially in view of the fact that $F(\Siglam)$
is {\it always} monotonic for GAB foliations.


\subsection{Renormalized Area Method for the Penrose inequality}

\label{RAM}

Monotonicity of $F(\Siglam)$  along geodesic foliations
approaching large spheres is an interesting sufficient
condition for the Penrose inequality along null hypersurfaces.
However, as discussed
in the previous subsection, it appears to be difficult 
to find general situations where $F'(\Siglam) \geq 0$ can be guaranteed.
In this subsection we consider an  a priori different  set up which 
implies the validity of (\ref{problem}) and hence
of the Penrose inequality whenever the
foliations also satisfies the restriction
of approaching large spheres. Let us assume from now on that $\qh$ is a
round metric on the sphere. Without loss of generality we can then
assume that $\qh$ is a round metric
of radius one, which we denote by $\q$.
Then $\rq=1$ and $\ell^{\star} = \ell$.
We want to investigate the condition
\begin{equation}
\label{Dcondition}
\frac{d}{d\lambda}D(S_\lambda,\ell^\star)\geq 0
\end{equation}
which indeed implies the validity of (\ref{problem})  and hence
the validity of the Penrose inequality.
Since $|\Siglam|$  diverges at infinity like $4\pi \lambda^2$, 
the functional $D(\Siglam,\ell)$ can be regarded as a renormalization of
the area functional, in order to make it bounded. We thus call
the approach to the null Penrose inequality via (\ref{Dcondition})
the {\bf renormalized area method}.  It is interesting that 
this method is, in fact, a subcase of the general setup 
involving monotonicity of $F(\Siglam)$.

\begin{Prop}
\label{dfimplication}
Let $\Omega$ be a strong past asymptotically flat null hypersurface
and $\{\Siglam\}$ a geodesic foliation approaching large spheres. Then
\begin{equation*}
\frac{d}{d\lambda}D(S_\lambda,\ell)\geq 0\quad \Longrightarrow 
\quad \frac{d}{d\lambda}F(S_\lambda)\geq 0.
\end{equation*}
\end{Prop}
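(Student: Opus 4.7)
The plan is to rewrite both derivative conditions as pointwise inequalities on $|S_\lambda|$ and its derivative $\frac{d|S_\lambda|}{d\lambda}$, and then to use the fact that $\frac{dD}{d\lambda}\geq 0$ \emph{on the whole range} $[0,\infty)$ provides, via integration and the known asymptotic limit of $D$, a global upper bound on $|S_\lambda|$ that is exactly what is needed.

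Since $\{S_\lambda\}$ approaches large spheres, $\qh=\q$, $\rq=1$ and $\ell^\star=\ell$ (so $\varphi=2$). From the definition, $D(S_\lambda,\ell)=\sqrt{|S_\lambda|/(16\pi)}-\lambda/2$, and since $\frac{d|S_\lambda|}{d\lambda}=\int_{S_\lambda}(-\theta_k)\bm{\eta_{S_\lambda}}$, the assumption $\frac{d}{d\lambda}D(S_\lambda,\ell)\ge 0$ at a given $\lambda$ is equivalent to
\begin{equation*}
\frac{d|S_\lambda|}{d\lambda}\ \ge\ 4\sqrt{\pi\,|S_\lambda|}.
\end{equation*}
Similarly, denoting $A:=\int_{\hat S}\c\,\volsh$, the identity for $\frac{d}{d\lambda}F(S_\lambda)$ displayed just before Proposition~\ref{Prop2} shows that $\frac{d}{d\lambda}F(S_\lambda)\ge 0$ at $\lambda$ is equivalent to
\begin{equation*}
\frac{d|S_\lambda|}{d\lambda}\ \ge\ \frac{16\pi\,|S_\lambda|}{8\pi\lambda+A}.
\end{equation*}

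The next step is to exploit the hypothesis globally. Since $\frac{d}{d\lambda}D(S_\lambda,\ell)\ge 0$ for all $\lambda\ge 0$, the function $D(S_\lambda,\ell)$ is non-decreasing, hence bounded above by its limit at infinity. By (\ref{Dsaympdevelop}) this limit equals $A/(16\pi)$, so for every $\lambda\ge 0$
\begin{equation*}
\sqrt{\frac{|S_\lambda|}{16\pi}}-\frac{\lambda}{2}\ \le\ \frac{A}{16\pi}
\quad\Longleftrightarrow\quad
4\sqrt{\pi\,|S_\lambda|}\ \le\ 8\pi\lambda+A.
\end{equation*}

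Finally I combine the two. At any $\lambda\ge 0$,
\begin{equation*}
\frac{d|S_\lambda|}{d\lambda}\ \ge\ 4\sqrt{\pi\,|S_\lambda|}
\ =\ \frac{16\pi\,|S_\lambda|}{4\sqrt{\pi\,|S_\lambda|}}
\ \ge\ \frac{16\pi\,|S_\lambda|}{8\pi\lambda+A},
\end{equation*}
where the first inequality is the pointwise hypothesis at $\lambda$ and the second is the global bound derived in the preceding step. This is exactly $\frac{d}{d\lambda}F(S_\lambda)\ge 0$.

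The only subtle point is that the pointwise inequality $4\sqrt{\pi|S_\lambda|}\le 8\pi\lambda+A$ at a single $\lambda$ is \emph{not} a consequence of $\frac{dD}{d\lambda}\ge 0$ at that same $\lambda$; one really needs to integrate and use the explicit asymptotic value of $D$ at infinity to get this globally. This is the one place where the assumption ``for all $\lambda$'' (and, implicitly, the asymptotic flatness controlling $\lim D$) is essential.
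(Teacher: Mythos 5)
Your proof is correct and is essentially the paper's argument in different clothing: both rest on the same two facts, namely that integrating $\frac{dD}{d\lambda}\geq 0$ gives $D(S_\lambda,\ell)\leq \lim_{\lambda\to\infty}D(S_\lambda,\ell)=\frac{1}{16\pi}\int_{\hat S}\c\,\bm{\eta_{\hat{q}}}$ (equivalently $4\sqrt{\pi|S_\lambda|}\leq 8\pi\lambda+\int_{\hat S}\c\,\bm{\eta_{\hat{q}}}$, which is exactly the paper's $f(\lambda)\leq 1$), and that the pointwise hypothesis controls $\frac{d|S_\lambda|}{d\lambda}$ from below. The paper packages this via the ratio $f=\sqrt{16\pi F}$ and the identity $f'(L+\tfrac{\lambda}{2})=\frac{dD}{d\lambda}+\tfrac12(1-f)$, whereas you chain the two inequalities directly on $\frac{d|S_\lambda|}{d\lambda}$; your closing remark correctly identifies the one genuinely global ingredient.
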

\begin{proof}
Let $L:=\frac{1}{16\pi}\int_{\esfdos}\c\volunitdos>0$ be the limit
of $D(S_\lambda,\ell)$ at infinity. Since $|\Siglam| = 
16 \pi \left ( D(\Siglam,\ell)
+ \frac{\lambda}{2} \right )^2$ we can rewrite $F(\Siglam)$ as
\begin{equation*}
F(S_\lambda)=\frac{|S_\lambda|}{\left(8\pi\lambda +\int_{\esfdos} \c\volunitdos\right)^2}=\frac{(\Dlam+\frac{\lambda}{2})^2}{16\pi(L+\frac{\lambda}{2})^2}
\end{equation*} 
where $\Dlam$ is a short-hand for $D(\Siglam,\ell)$. Let
\begin{equation*}
f(\lambda):=\sqrt{16\pi F(S_\lambda)}=\frac{\Dlam+\frac{\lambda}{2}}{L+\frac{\lambda}{2}}
\end{equation*}
so that 
\begin{equation}
\label{DandFequation}
f'(\lambda)\left(L+\frac{\lambda}{2}\right) =
\frac{d \Dlam}{d\lambda} +\frac{1}{2}(1 - f(\lambda)).
\end{equation}
If  $\frac{d \Dlam}{d\lambda} \geq 0$
it follows $\Dlam \leq\underset{\lambda \to \infty}{\lim}\Dlam=L$ so that
$f(\lambda)=\frac{D+\frac{\lambda}{2}}{ L+\frac{\lambda}{2}}\leq 1$
and we conclude from (\ref{DandFequation}) that $f'(\lambda) \geq 0$, which is 
is equivalent to $F'(\Siglam) \geq 0$.
\end{proof}

The derivative of $D(\Siglam,\ell)$ is
\begin{equation*}
\frac{d}{d\lambda}D(S_\lambda)
=\frac{1}{2\sqrt{16\pi\Siglam}}\left(\frac{d}{d\lambda}|\Siglam|-\sqrt{16\pi\Siglam}\right)=\frac{1}{2\sqrt{16\pi\Siglam}}\left(\int_{\Siglam}(-\theta_k)\volSiglam-\sqrt{16\pi\Siglam}\right).
\end{equation*}
Given that $\theta_k <0$, the inequality 
$\frac{d}{d\lambda}D(S_\lambda) \geq 0$ can be equivalently
written in a slightly more convenient from
as $G(\lambda)\geq0$, where
\begin{equation*}
G(\lambda):=\left( \int_{\Siglam}(-\theta_k)\volSiglam \right)^2-16\pi|\Siglam|.
\end{equation*}
We start by computing the limit of $G(\lambda)$ at infinity.
\begin{Prop}
\label{Prop5}
With the same assumptions as in Proposition \ref{dfimplication},
\begin{equation}
\label{glimit}
 \underset{\lambda \to \infty}{\lim}G(\lambda)=
F_{\infty} = \left( \int_{\esfdos}\c\volunitdos \right)^2-8\pi\int_{\esfdos}\left(\c\right)^2\volunitdos-8\pi\int_{\esfdos}\cdos\volunitdos. 
\end{equation}
\end{Prop}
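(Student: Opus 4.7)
The plan is a direct asymptotic expansion of $G(\lambda)$ to order $o(1)$. Substituting the expansion $\theta_k = -2\lambda^{-1} + \c\lambda^{-2} + \cdos\lambda^{-3} + o(\lambda^{-3})$ from \eqref{thetaklongexp} together with $\volSiglam = (\lambda^2 + \c\lambda + \Theta_0 + \tilde{\Theta})\volunitdos$ from \eqref{volSiglongexp}, where $\tilde{\Theta} = o_1(1)$, into the integrand $(-\theta_k)\volSiglam$ and collecting powers of $\lambda$, the coefficient of $\lambda^{-1}$ reads $2\Theta_0 - (\c)^2 - \cdos$. This quantity vanishes identically by \eqref{twocondit} of Proposition \ref{lemaprimero}. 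The remaining contribution of order $\lambda^{-1}$ is $2\tilde{\Theta}/\lambda$, which is $o(\lambda^{-1})$ pointwise, so after integrating on the fixed compact $\esfdos$ (with the uniform convergence built into the $o_1(1)$ definition) one obtains
\begin{equation*}
\int_{\Siglam}(-\theta_k)\volSiglam = 8\pi\lambda + \int_{\esfdos}\c\,\volunitdos + o(\lambda^{-1}).
\end{equation*}

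Squaring produces the leading $64\pi^2\lambda^2$, the linear term $16\pi\lambda\int_{\esfdos}\c\,\volunitdos$, the constant $\left(\int_{\esfdos}\c\,\volunitdos\right)^2$, and a cross term $2(8\pi\lambda)\cdot o(\lambda^{-1}) = o(1)$; the other squared terms are smaller still. In parallel, the area expands straightforwardly as $|\Siglam| = 4\pi\lambda^2 + \lambda\int_{\esfdos}\c\,\volunitdos + \int_{\esfdos}\Theta_0\,\volunitdos + o(1)$. Subtracting $16\pi|\Siglam|$ from the squared integral the quadratic and linear pieces cancel exactly and, invoking $2\Theta_0 = (\c)^2 + \cdos$ from Proposition \ref{lemaprimero} once again, the remaining constant becomes
\begin{equation*}
G(\lambda) = \left(\int_{\esfdos}\c\,\volunitdos\right)^2 - 8\pi\int_{\esfdos}(\c)^2\,\volunitdos - 8\pi\int_{\esfdos}\cdos\,\volunitdos + o(1),
\end{equation*}
and sending $\lambda \to \infty$ yields \eqref{glimit}.

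The only subtle point is making sure the remainder in $\int_{\Siglam}(-\theta_k)\volSiglam$ is $o(\lambda^{-1})$ and not merely $o(1)$: an $o(1)$ remainder would produce an uncontrolled $o(\lambda)$ error after squaring and spoil the finite limit. This is exactly why the identity $2\Theta_0 - (\c)^2 - \cdos = 0$ of Proposition \ref{lemaprimero} is indispensable, and why one needs the third-order expansion of $\theta_k$ and the corresponding expansion of $\volSiglam$ furnished by the strong past asymptotic flatness of Definition \ref{SAF} (as opposed to the weaker Definition \ref{AF}). Beyond this bookkeeping the proof is mechanical.
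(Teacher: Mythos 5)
Your proof is correct and follows essentially the same route as the paper: expand $(-\theta_k)\volSiglam$ and $|\Siglam|$, square, subtract, and invoke $2\Theta_0=\left(\c\right)^2+\cdos$ from Proposition \ref{lemaprimero} to identify the surviving constant. You are in fact slightly more careful than the paper's own proof, which quotes only the two-term expansion (\ref{thetakdev}) and records the remainder of $\theta_k\volSiglam$ as $o(1)$; as you correctly observe, the cross term produced by squaring requires the remainder to be $o(\lambda^{-1})$, and this is precisely what the three-term expansion (\ref{thetaklongexp}) combined with (\ref{twocondit}) guarantees.
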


\begin{proof}
We have shown in Proposition \ref{lemaprimero} that 
\begin{equation}
\label{volsiglamlongexp}
\volSiglam= \left (\lambda^2+\c\lambda+
\frac{1}{2}\left(\left(\c{} \right)^2+\cdos \right ) 
+o(1) \right )\volunitdos.
\end{equation}
From expansion (\ref{thetakdev}), we have $\theta_k \volSiglam =
- 2 \lambda - \c + o(1)$, so that
\begin{equation*}
\left(\int_{\Siglam}\theta_k \volSiglam\right)^2=64\pi^2\lambda^2+16\pi
\lambda \int_{\esfdos}\c\volunitdos + \left(\int_{\esfdos}\c\volunitdos \right)^2+o(1).
\end{equation*}
Also form (\ref{volsiglamlongexp}),
\begin{equation*}
|\Siglam| = 4 \pi \lambda^2 + \lambda \int_{\esfdos} \c \volunitdos
+ \int_{\esfdos} \frac{1}{2}\left(\left(\c \right)^2+\cdos \right)
\volunitdos+ o(1).
\end{equation*}
Inserting both into $G(\lambda)$ the divergent terms cancel out and we are left
with (\ref{glimit}).
\end{proof}

\begin{remark}
\label{EquivInfty}
The limit of $G(\lambda)$ is directly related to
the leading term in the asymptotic expansion of $F(\Siglam)$
so that the inequality ``at infinity'' $F_{\infty} \geq 0$
is necessary for both methods. Thus, for sufficiently large $\lambda$,
the renormalized area method does not only imply $F'(\Siglam) \geq 0$,
but it is in fact equivalent to it (possibly excluding the case
$F_{\infty}=0$ where higher order terms dominate). However, we do not expect this to 
be true for all $\lambda$, as it appears that $D'(\Siglam,\ell)
\geq 0$ should be a proper subset of $F'(\Siglam)\geq 0$.
\end{remark}

Assuming we are in the situation where 
$\underset{\lambda \to \infty}{\lim}G(\lambda) \geq 0$,
we can ensure 
$G(\lambda) \geq 0$ by the condition $G'(\lambda) \leq 0$.
This derivative is, from the
Raychaudhuri equation (\ref{Ray}),
\begin{align*}
G'(\lambda) & = 2\left( \int_{\Siglam}\theta_k\volSiglam\right)\left( \frac{d}{d\lambda}\left( \int_{\Siglam}\theta_k\volSiglam\right)+8\pi   \right) \nonumber \\
& = 2\left( \int_{\Siglam}\theta_k\volSiglam\right) \left( \int_{\Siglam}\left(\Ricg(k,k)-\frac{1}{2}\theta_k^2+\Pi^k_{AB}{\Pi^k}^{AB}\right)\volSiglam+8\pi\right).
\end{align*}
Since the first term is always negative, $G'(\lambda) \leq 0$
is equivalent to $H(\lambda) \geq0 $, where we have defined
\begin{equation*}
H(\lambda):=\int_{\Siglam}\left(\Ricg(k,k)-\frac{1}{2}{\theta_k}^2+\Pi^k_{AB}{\Pi^k}^{AB}\right)\volSiglam+8\pi.
\end{equation*}
We proceed with the computation
of the derivative of this function and of its
limit at infinity.
\begin{Prop}
\label{Prop6}
With the same assumptions as in Proposition \ref{dfimplication},
$\underset{\lambda \to \infty}{\lim}H(\lambda)=0$ and
the derivative of $H(\lambda)$ is
\begin{equation}
\label{derivh}
H'(\lambda)=\int_{\Siglam}\left( -2\theta_k\Ricg(k,k)+2(\Pi^k)^{AB}R_{AB}+\frac{d}{d\lambda}\Ricg(k,k)\right)\volSiglam,
\end{equation}
where $R_{AB}:=\Riemg(X_A,k,X_B,k)$.
\end{Prop}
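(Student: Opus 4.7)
The plan is to establish both statements by direct computation, using the asymptotic expansions from Proposition \ref{lemaprimero}, the Raychaudhuri equation, and the evolution equation for the shear $\Pi^k_{AB}$ along the geodesic flow. Recall throughout that $k(\lambda)=-1$, so $\frac{d}{d\lambda}=-k$ on scalar functions.

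For the limit, I would use the Raychaudhuri equation $(\ref{Ray})$ with $Q_k=0$ to rewrite the integrand in $H(\lambda)$ as
\begin{equation*}
\Ricg(k,k) - \tfrac{1}{2}\theta_k^2 + \Pi^k_{AB}(\Pi^k)^{AB} = -k(\theta_k) - \theta_k^2 = \tfrac{d\theta_k}{d\lambda} - \theta_k^2.
\end{equation*}
Substituting the expansions $(\ref{thetaklongexp})$ and $(\ref{volSiglongexp})$, the divergent and $O(\lambda^{-1})$ contributions cancel pointwise on $\hat{S}$, yielding $\tfrac{d\theta_k}{d\lambda}\,\volSiglam \to 2\,\volsh$ and $\theta_k^2\,\volSiglam \to 4\,\volsh$. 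Since $\int_{\hat{S}}\volsh = 4\pi$ (as $\rq=1$), the two integrals tend respectively to $8\pi$ and $16\pi$, so $\lim_{\lambda\to\infty} H(\lambda) = 8\pi - 16\pi + 8\pi = 0$.

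For the derivative, I would start from the identity
\begin{equation*}
\frac{d}{d\lambda}\int_{\Siglam} f\,\volSiglam = -\int_{\Siglam}\bigl(k(f) + f\,\theta_k\bigr)\volSiglam,
\end{equation*}
valid because $k(\volSiglam) = \theta_k\,\volSiglam$, applied to $f = \Ricg(k,k) - \tfrac{1}{2}\theta_k^2 + \Pi^k_{AB}(\Pi^k)^{AB}$. The $\theta_k^2$ contribution to $k(f)$ is handled by Raychaudhuri. The crucial input is the shear evolution equation, which arises by splitting the Riccati identity for $K^k_{AB}$ into trace and trace-free parts: $k(\Pi^k_{AB}) = \Pi^k_{CD}(\Pi^k)^{CD}\,\gamma_{AB} - \bigl(R_{AB} - \tfrac{1}{2}\Ricg(k,k)\,\gamma_{AB}\bigr)$. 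Contracting with $(\Pi^k)^{AB}$ and combining this with $k(\gamma^{AB}) = -2(K^k)^{AB}$, the trace-free property $(\Pi^k)^{AB}\gamma_{AB}=0$, and the two-dimensional algebraic identity $(\Pi^k)^{AB}(\Pi^k)_A{}^C(\Pi^k)_{BC}=0$ (a trace-free symmetric $2\times 2$ matrix $M$ satisfies $M^3=\tfrac{1}{2}\tr(M^2)M$, whose trace vanishes), I would derive the key identity
\begin{equation*}
k\bigl(\Pi^k_{AB}(\Pi^k)^{AB}\bigr) = -2\theta_k\,\Pi^k_{AB}(\Pi^k)^{AB} - 2(\Pi^k)^{AB}R_{AB}.
\end{equation*}
Inserting this together with Raychaudhuri into $k(f)+f\,\theta_k$, the terms cubic in $\theta_k$ and the terms proportional to $\theta_k\,\Pi^k_{AB}(\Pi^k)^{AB}$ cancel, leaving $k(f)+f\theta_k = k(\Ricg(k,k)) + 2\theta_k\,\Ricg(k,k) - 2(\Pi^k)^{AB}R_{AB}$. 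A final application of $\frac{d}{d\lambda}=-k$ yields $(\ref{derivh})$.

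The main technical obstacle is the bookkeeping for $k(\Pi^k_{AB}(\Pi^k)^{AB})$: the sign of the Riemann term in the Riccati equation must be fixed consistently with Raychaudhuri, and the two-dimensional identity killing the cubic shear contribution is essential for the cancellations to go through cleanly. A secondary point is verifying, via the decay condition in Definition $\ref{AF}(iv)$, that the curvature and shear terms decay fast enough to justify passing the limit inside the integral, which is what reduces the limit computation to the purely algebraic substitution used above.
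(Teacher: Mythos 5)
Your proposal is correct and follows essentially the same route as the paper: the derivative formula is obtained exactly as in the text, from the Riccati equation split into trace and trace-free parts together with the two-dimensional identity $\Pi^k_{AB}(\Pi^k)^B_{\phantom{B}C}=\frac{1}{2}\tr((\Pi^k)^2)\gamma_{AC}$ (equivalently, the vanishing of $\tr((\Pi^k)^3)$), followed by the same cancellations against Raychaudhuri and the volume-form evolution. The only cosmetic difference is in the limit, where the paper estimates the three terms $\int\Pi^k_{AB}(\Pi^k)^{AB}$, $\int\Ricg(k,k)$ and $\int(-\tfrac12\theta_k^2)+8\pi$ separately (showing the first two integrands are $O(\lambda^{-4})$), whereas you repackage the integrand via Raychaudhuri as $\tfrac{d\theta_k}{d\lambda}-\theta_k^2$ and substitute the asymptotic expansions directly — both versions rest on exactly the same asymptotic information.
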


\begin{proof}
For the limit, we split $H(\lambda)$ in three terms 
and show that each one tends to zero.  
We start with $\int_{\Siglam} \Pi^k_{AB}{\Pi^k}^{AB} \volSiglam$. 
From equation (\ref{Evolmetric})
and the expansion (i) in Definition \ref{SAF} for the metric $\gamma$,
it follows
\begin{equation}
\label{Kkexpansion}
K^k_{AB}=-\q_{AB}\lambda-\frac{1}{2}h_{AB}+o(1)
\end{equation}
so that its trace-free part is
$\Pi^k_{AB}=O(1)$.
Since $\gamma(\lambda)_{AB} = \lambda^2 \q_{AB} + o(\lambda)$, its inverse 
is 
\begin{equation}
\label{invmetricexpansion}
\gamma(\lambda)^{AB}=\frac{1}{\lambda^2}\q^{AB}+o(\lambda^{-2})
\end{equation}
and $\Pi^k_{AB} \Pi^k{}^{AB} = O(\lambda^{-4})$ so that
\begin{equation*}
\int_{\Siglam}\left(\Pi^k_{AB}{\Pi^k}^{AB}  \right)\volSiglam
\stackrel{\lambda \rightarrow \infty}{\longrightarrow} 0
\end{equation*}
as a consequence of $\volSiglam = \lambda^2 \volunitdos + O(\lambda)$.
Concerning the term
in $\Ricg(k,k)$, we note that inserting the expansion  (\ref{thetakdev})
into the Raychaudhuri equation (\ref{Ray})  yields
$\Pi^{k}_{AB} \Pi^{k}{}^{AB} +  \Ricg(k,k) = O(\lambda^{-4})$
which implies $\Ricg(k,k) = O(\lambda^{-4})$ and again
$\int_{\Siglam} \Ricg(k,k) \volSiglam
\stackrel{\lambda \rightarrow \infty}{\longrightarrow} 0$.
Finally, $\theta_k^2 \volSiglam = (4  + o(1) ) \volunitdos$ from which
\begin{equation*}
\int_{\Siglam}\left( -\frac{1}{2}{\theta_k}^2 \right)\volSiglam+8\pi
\stackrel{\lambda \rightarrow \infty}{\longrightarrow} 0.
\end{equation*}
We next compute the derivative of $H(\lambda)$. The extrinsic curvature
$K^k$ along a null hypersurface satisfies the Ricatti equation
\cite{Galloway}
\begin{equation}
\label{Ricattiequ}
\frac{d}{d\lambda}(K^k)^A_{\phantom{A}B}=(K^k)^A_{\phantom{A}C}(K^k)^C_{\phantom{C}B}+R^A_{\phantom{A}B}.
\end{equation}
The trace-free part of this equation is
\begin{equation*}
\frac{d}{d\lambda}{\Pi^k}^A_{\phantom{A}B}=\theta_k {\Pi^k}^A_{\phantom{A}B}+R^A_{\phantom{A}B}-\frac{1}{2}\Ricg(k,k)\del^A_{\phantom{A}B},
\end{equation*}
where we have used 
$\Pi^k_{AB}(\Pi^k)^B_{\phantom{B}C}=\frac{1}{2}\tr((\Pi^k)^2)\gamma_{AC}$, which is an algebraic property
of endomorphisms in two-dimensional vector spaces. Thus
\begin{equation*}
\frac{d}{d\lambda}\left({\Pi^k}^A_{\phantom{A}B} {\Pi^k}^B_{\phantom{B}A} \right)=2\theta_k \tr((\Pi^k)^2) + 2 R^{AB}\Pi^k_{AB}.
\end{equation*}  
Using this together with (\ref{difvolform})  
and the Raychaudhuri equation, the derivative (\ref{derivh}) is obtained
after a number of cancellations.
\end{proof}

We can combine the previous computations to find a set
of sufficient conditions under which the 
renormalized area method applies.

\begin{Tma}[\bf Sufficient conditions for the
renormalized area method]
\label{penroseineqtma}
Let $\Omega$ be a strong past asymptotically flat null hypersurface 
and $\{\Siglam\}$ a geodesic foliation approaching large spheres. Assume that the spacetime satisfies the dominant energy condition.
If the two conditions
\begin{itemize}
\item[(i)] $ 
\left( \int_{\esfdos}\c\volunitdos \right)^2-8\pi\int_{\esfdos}\left(\c\right)^2\volunitdos-8\pi\int_{\esfdos}\cdos\volunitdos \geq 0$
\item[(ii)] $
\int_{\Siglam}\left( -2\theta_k\Ricg(k,k)+2(\Pi^k)^{AB}R_{AB}+\frac{d}{d\lambda}\Ricg(k,k)\right)\volSiglam\leq 0, \quad \quad \forall \lambda \geq 0$
\end{itemize}
hold, then
\begin{equation}
\sqrt{\frac{|S_0|}{16 \pi}} - \frac{1}{16 \pi} 
\int_{S_0} \theta_{\ell} \,
\bm{\eta_{S_0}} \leq E_B
\label{PI}
\end{equation}
where $E_B$ is the Bondi energy associated to the foliation $\{ \Siglam\}$.
In particular, if $S_0$ is a weakly outer trapped surface then
the Penrose inequality $E_B \geq \sqrt{\frac{|S_0|}{16\pi}}$ holds.
\end{Tma}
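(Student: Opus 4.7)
The plan is to establish the bound (\ref{PI}) by showing that the functional $M(\Siglam,\ell) = D(\Siglam,\ell) + M_b(\Siglam,\ell)$ is non-decreasing in $\lambda$, from which the stated inequality will follow by evaluating $M$ at $\lambda=0$ and comparing with its limit at infinity. The piece $M_b$ is already known to be monotonically non-decreasing under DEC by Lemma \ref{monot}, so the whole argument reduces to proving the monotonicity of $D(\Siglam,\ell)$ on $[0,\infty)$. Once this is in hand, I would invoke Proposition \ref{limits} together with the first case of Remark \ref{coincidinglimits} --- valid here because $\{\Siglam\}$ approaches large spheres, $R_{\hat q}=1$, $\mathcal{K}_{\hat q}=1$ and hence $\ell^\star = \ell$ --- to identify $\lim_{\lambda\to\infty}M(\Siglam,\ell)$ with $\lim_{\lambda\to\infty}m_H(\Siglam) = E_B$. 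The weakly outer trapped hypothesis then yields $\int_{S_0}\theta_\ell \,\bm{\eta_{S_0}}\leq 0$ and the classical Penrose inequality drops out as a corollary.

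The monotonicity of $D(\Siglam,\ell)$ is equivalent to the non-negativity of the function
\[
G(\lambda) := \Bigl(\int_{\Siglam}(-\theta_k)\volSiglam\Bigr)^2 - 16\pi|\Siglam|,
\]
as recorded in the paper. By Proposition \ref{Prop5}, hypothesis (i) is exactly the statement $\lim_{\lambda\to\infty}G(\lambda)\geq 0$. Since $G$ has a non-negative limit at infinity, it suffices to prove that $G$ is non-increasing, i.e.\ $G'(\lambda)\leq 0$ for all $\lambda \geq 0$. I would compute $G'(\lambda)$ using the Raychaudhuri equation as done in the text, writing it as $G'(\lambda)=2\bigl(\int_{\Siglam}\theta_k\volSiglam\bigr)H(\lambda)$, where $H(\lambda)$ is the functional of Proposition \ref{Prop6}. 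The prefactor is strictly negative thanks to $\theta_k<0$ on $[0,\infty)$, which follows from the expansion (\ref{thetakdev}) and the smoothness of $\Omega$ at $S_0$, so $G'(\lambda)\leq 0$ reduces to $H(\lambda)\geq 0$.

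To close the chain, I would use Proposition \ref{Prop6} twice: its formula (\ref{derivh}) identifies the integrand whose sign is controlled by hypothesis (ii), so (ii) is precisely $H'(\lambda)\leq 0$ for every $\lambda\geq 0$, and the limit $\lim_{\lambda\to\infty}H(\lambda)=0$ combined with this monotonicity forces $H(\lambda)\geq 0$ on $[0,\infty)$ by integrating from $\lambda$ to $+\infty$. All that remains is to assemble the chain of implications $H\geq 0 \Rightarrow G'\leq 0 \Rightarrow G\geq 0 \Rightarrow D'(\Siglam,\ell)\geq 0 \Rightarrow D(S_0,\ell)\leq \lim_{\lambda\to\infty} D(\Siglam,\ell)$, and add the automatic monotonicity of $M_b$ to conclude. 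Since essentially all analytical work has been carried out in Propositions \ref{Prop5} and \ref{Prop6}, the main obstacle is careful sign bookkeeping: one must keep track of the fact that hypothesis (i) is doing double duty, serving as the initial condition at infinity for the implication $G'\leq 0 \Rightarrow G\geq 0$, while the large-spheres hypothesis is essential to identify the asymptotic value of $M$ with $E_B$ rather than with some other observer-dependent energy.
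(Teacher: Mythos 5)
Your proposal is correct and follows essentially the same route as the paper's proof: hypothesis (ii) gives $H'(\lambda)\leq 0$, hence $H(\lambda)\geq 0$ since $H\to 0$ at infinity, hence $G'(\lambda)\leq 0$; combined with hypothesis (i), which by Proposition \ref{Prop5} is exactly $\lim_{\lambda\to\infty}G(\lambda)\geq 0$, this yields $G\geq 0$, i.e.\ $D'(\Siglam,\ell)\geq 0$, and the conclusion then follows from the monotonicity of $M_b$ together with the identification of the asymptotic value with $E_B$ for foliations approaching large spheres. The only cosmetic difference is that you unfold the final step explicitly, whereas the paper compresses it into a citation of the already-established inequality (\ref{mainineq}).
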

\begin{proof}
From (ii)  we have $H'(\lambda) \leq 0$  which implies $H(\lambda) \geq 0$,
as this function tends to zero at infinity. Hence $G'(\lambda) \leq0$. From (i)
and Proposition \ref{Prop5} we have $\underset{\lambda \to \infty}{\lim}G(\lambda) \geq 0$ and we conclude $G(\lambda) \geq 0$, or
equivalently $D'(\Siglam,\ell) \geq 0$. The theorem
follows from (\ref{mainineq}) using the fact that $\{\Siglam\}$
approaches large spheres.
\end{proof}

It is remarkable that $H'(\lambda)$ only involves curvature terms. This makes
checking the validity of $H'(\lambda) \geq 0$ feasible, at least in some cases.
In the next two sections we explore the validity of conditions (i) and 
(ii) in two simple, but relevant  situations.

\section{Shear-free vacuum case}
\label{Shear-free}

In this section we consider whether the
functional $M(\Siglam,\ell)$ can be used to
prove the Penrose inequality
in the case of 
{\it shear-free} null hypersurfaces $\N$ (i.e.
satisfying
$K^k = \frac{1}{2} \thk \gamma$)
embedded in a vacuum spacetime. The Penrose inequality
in this setup was proven
by Sauter \cite{Sauter2008} in full
generality exploiting properties of the Hawking
energy. Our interest in analyzing the shear-free case 
is to gain insight
on the range of applicability and limitations of the methods
discussed above.

For instance, concerning the renormalized area
method in subsection \ref{RAM}, the vacuum and shear-free
conditions immediately imply that
$H'(\lambda)=0$, so condition (ii) in Theorem \ref{penroseineqtma}
is always satisfied. Thus $H(\lambda)$ vanishes identically,
which is equivalent to $G(\lambda)=\mbox{const}$. The method
works if and only if this constant is non-negative. It can be computed
from its limit at infinity in Proposition \ref{Prop5} as
\begin{align}
G(\lambda) = \underset{\lambda \to \infty}{\lim}G(\lambda)=
\left( \int_{\esfdos}\c\volunitdos \right)^2-8\pi\int_{\esfdos}\left(\c\right)^2\volunitdos-8\pi\int_{\esfdos}\cdos\volunitdos.
\label{SFcase}
\end{align}
In the shear-free vacuum case, the 
Raychaudhuri equation (\ref{Ray})
is simply $\frac{d \thk}{d \lambda} = - \frac{1}{2} \thk^2$,
which integrates to
\begin{align*}
\thk = - \frac{2}{\lambda + \alpha},
\end{align*}
where $\alpha>0$ (because $\thk <0$ all along $\N$) is a Lie constant
function. Expanding near infinity
\begin{align*}
\thk = - \frac{2}{\lambda} + \frac{2 \alpha}{\lambda}
- \frac{2 \alpha^2}{\lambda^2} + O (\lambda^{-3}) 
\quad \quad \Longrightarrow \quad \quad \thkone = 2 \alpha, \quad \thktwo = -2 \alpha^2,
\end{align*}
which inserted into (\ref{SFcase})  yields
\begin{align*}
\G(\lambda) =  4 \left (  \left ( \int_{\esfdos} \alpha \volunitdos 
\right )^2               -  4 \pi \int_{\esfdos} \alpha^2 \volunitdos \right ).
\end{align*}
By the H\"older inequality this constant is always non-positive and vanishes
only when $\alpha = \mbox{const}$ (i.e. when $\{\Siglam\}$ is a GAB foliation). Except
in this case (which corresponds in the present setup to
$\thk |_{S_0} = \mbox{const}$) we have $G(\lambda) <0$ and 
$\D(\Siglam,\ell)$ is strictly monotonically decreasing, which makes the
renormalized area method method fail. In fact, as discussed 
in Remark \ref{EquivInfty},
the function $F(\Siglam)$ is also monotonically
decreasing, at least in a neighbourhood of infinity, so the approach
discussed in Proposition \ref{Prop2} also fails in the present setup.

Despite all this, 
the method involving the functional
$M(\Siglam,\ell)$ {\it is capable} of establishing the 
Penrose inequality in the shear-free vacuum case. However, as
we shall see next, the argument is not based on 
the monotonicity of $M(\Siglam,\ell)$ (which fails
in general, see below) but via an integration of (\ref{Var2}), which in
turn relies on the fact that 
all the geometric information  along
$\Omega$ can be computed explicitly in the shear-free vacuum case. 
From the shear-free condition and the expression for $\thk$,
the metric $\gamSiglam$ can be obtained from (\ref{Evolmetric})
\begin{align*}
\frac{d \gamSiglam}{d \lambda} = -2 K^k = - \thk \gamSiglam
= \frac{2}{\lambda + \alpha} \gamSiglam \quad \quad
\Longleftrightarrow \quad \quad 
\gamSiglam = (\lambda + \alpha)^2 \q
\end{align*}
where we used the fact that the foliation $\{ \Siglam\}$
approaches large spheres. The volume form is 
$\volSiglam = (\lambda + \alpha)^2 
\volunitdos$. 
As shown in Lemma \ref{monot}, the derivative of $M_b(\Siglam,\ell)$ involves
the connection one-form $\s_{\ell}$. This object satisfies
the following well-known evolution equation along an
arbitrary foliation defined by a null generator $k$ (see e.g.
\cite{MarsSoria2015})
\begin{align*}
k(s_{\ell}(X)) = - X(Q_k) - s_{\ell}(X) \thk +
(\mbox{div}_{S_r} K^k) (X) - D_X \thk
- \Eing(k,X)
\end{align*}
where $X$ is tangent to $\Siglam$ and satisfies $[k,X] =0$. In the 
vacuum, geodesic and shear-free case this equation becomes
\begin{align*}
\frac{d \s_{\ell}(X)}{d \lambda} = - \frac{2}{\lambda + \alpha} \s_{\ell}(X)
+ \frac{1}{(\lambda + \alpha)^2} X(\alpha)
\end{align*}
after using the explicit form of $\thk$. This equation can be integrated
to
\begin{align}
\s_{\ell} = \frac{1}{(\lambda + \alpha)^2} \left ( \lambda d \alpha + 
\bm{\omega} \right )
\label{sell}
\end{align}
where $\bm{\omega}$ is a Lie constant transversal one-form. 
In order to investigate the monotonicity
of the functional $\mmlam$ we need to evaluate (\ref{Var2}) and
in particular  $|\s_{\ell}|^2_{\gamSiglam} \volSiglam$. Using
(\ref{sell})
and the form of $\gamSiglam$  we have
\begin{align*}
|\s_{\ell}|^2_{\gamSiglam} \volSiglam =
\frac{1}{(\lambda + \alpha)^4} |\lambda d \alpha+ \bm{\omega}|^2_{\q} \volunitdos
\end{align*}
and identity (\ref{Var2}) simplifies to
\begin{align*}
\frac{d \mmlam}{d \lambda} 
=  \frac{1}{8 \pi} \sqrt{ \frac{4 \pi}{\int_{\esfdos} (\lambda + \alpha)^2 \volunitdos}}
\int_{\esfdos} (\lambda + \alpha ) \volunitdos  - \frac{1}{2}
+  \frac{1}{8 \pi} 
\int_{\esfdos} \frac{1}{(\lambda + \alpha)^4} |\lambda d \alpha+ \bm{\omega}|^2_{\q}
\volunitdos.
\end{align*}
We want to bound this expression from below. The Lie constant
one-form $\bm{\omega}$ can be uniquely split into
\begin{align*}
\bm{\omega} = - \beta d \alpha + \bm{\omega}^{\perp}, \quad 
\la \bm{\omega}^{\perp}, d \alpha \ra_{\q} =0
\end{align*}
where $\beta$ is a Lie constant function on $\Omega$. Thus
\begin{align}
\frac{d \mmlam}{d \lambda} 
& =  
\frac{1}{2} \left ( 
\frac{1}{\sqrt{4 \pi \int_{\esfdos} (\lambda + \alpha)^2 \volunitdos}}
\int_{\esfdos} (\lambda + \alpha ) \volunitdos  - 1 \right )
+  \frac{1}{8 \pi} 
\int_{\esfdos} \frac{((\lambda - \beta)^2 |d\alpha|^2_{\q} + |\bm{\omega}^{\perp}|^2_{\q})}{(\lambda + \alpha)^4} 
\volunitdos. \label{dermmshearfree} 
\end{align}
The H\"older inequality implies that the term in parenthesis is non-positive
and strictly negative unless $\alpha$ constant
(which corresponds both to the GAB case and 
also to the $D'(\Siglam,\ell) \geq 0$ case in the present context).
Since $\beta$ may be positive
and constant and $\bm{\omega}^{\perp}$ is allowed to be zero,  it follows that
$\frac{d \mm}{d \lambda} |_{\lambda = \beta}$ may have either sign.
This shows that one cannot expect $\mmlam$ to be
a monotonic functional on all cases. Nevertheless, the right-hand side
in (\ref{dermmshearfree})  is an explicit function in $\lambda$ that
can be integrated explicitly
\begin{align*}
M(S_{\lambda_1},\ell)   - M(S_0,\ell) =  & \left . \left [  
\frac{1}{2} \left ( \sqrt{\frac{\int_{\esfdos} (\lambda + \alpha)^2 \volunitdos}{4 \pi}}
- \lambda \right ) \right . \right . \\
& \left . \left . 
+ \frac{1}{8 \pi}
\int_{\esfdos} \frac{\left [ - \frac{\alpha^2}{4} - \frac{1}{3} 
\left (\beta - \frac{\alpha}{2}  \right )^2    
+ \lambda (\beta - \alpha)- \lambda^2 \right ] |d\alpha|^2_{\q} - 
\frac{1}{3} |\bm{\omega}^{\perp}|^2_{\q} }{(\lambda + \alpha)^3} 
\volunitdos \right ] \right |^{\lambda_1}_{\lambda=0}.
\end{align*}
Sending $\lambda_1$ to infinity, evaluating at $\lambda=0$
and using that the flow approaches large
spheres
\begin{align}
E_B  =  & M(S_0, \ell) 
+ \frac{1}{8 \pi} \left ( \int_{\esfdos} \alpha \volunitdos
- \sqrt{ 4 \pi \int_{\esfdos}  \alpha^2 \volunitdos}
+ \int_{\esfdos} \left( \frac{|d\alpha|^2_\q}{4 \alpha}  + \frac{(\beta 
- \frac{\alpha}{2} )^2 |d\alpha|^2_\q + |\bm{\omega}^{\perp}|^2_{\q}
}{3 \alpha^3} \right ) \volunitdos \right ) 
\nonumber \\
 = & \sqrt{\frac{|S_0|}{16 \pi}} - \frac{1}{16 \pi} \int_{\Sig_0} \thl
\volSigzero \nonumber \\
& + \frac{1}{8 \pi} \bigg (  
\underbrace{
\int_{\esfdos} \left( \alpha + \frac{|d\alpha|^2_{\q}}{4 \alpha} 
\right ) \volunitdos - 
\sqrt{ 4 \pi \int_{\esfdos}  \alpha^2 \volunitdos}}_{\defi I_1}
+ \underbrace{\int_{\esfdos}
\frac{(\beta - \frac{\alpha}{2} )^2 |d\alpha|^2_\q  +
|\bm{\omega}^{\perp}|^2_{\q}}{3 \alpha^3}
\volunitdos}_{\defi I_2} \bigg ).  \label{PIShearFree}
\end{align}
This identity is valid for any spacelike cross
section $S_0$ embedded in a shear-free and vacuum $\N$. 
We now use a fundamental identity for arbitrary $C^1$
functions $F$ on $\esfdos$, known as
the Beckner inequality \cite{Beckner1993}, which reads 
\begin{equation*}
\int_{\esfdos} \left ( F^2 + |d F|_{\q}^2 \right ) \volunitdos 
\geq \sqrt{ 4 \pi \int_{\esfdos}  F^4 \volunitdos}
\end{equation*}
with equality only for the constant functions. Writing $F = \sqrt{\alpha}$ it follows
\begin{align*}
\int_{\esfdos} \left ( \alpha + \frac{ |d\alpha|^2_{\q}}{4\alpha} \right ) \volunitdos
\geq \sqrt{ 4 \pi \int_{\esfdos}  \alpha^2 \volunitdos}
\end{align*}
and $I_1$ is non-negative.  The Penrose inequality in this case follows
because $I_2$ is manifestly non-negative and on a weakly outer trapped
surface $\thl \leq 0$.

The proof by Sauter \cite{Sauter2008} of this inequality in the vacuum, shear-free
case involved computing the Hawking energy
for a foliation $\{ S_s \}$ with the property $\theta_k (S_s) = \frac{2}{s}$.
This is in general a different foliation to the one used before 
(they only agree when $\alpha$ is constant). A fundamental 
step in Sauter's argument was also the Beckner
inequality. 
Note also,
that the Penrose inequality in the shear-free case involves not only the gap given by the Beckner inequality, but a second gap given by $I_2$.
The stronger Penrose inquality (\ref{PIShearFree}) is obviously  sharp 
because if $S_0$ is a MOTS ($\theta_{\ell}=0$)
we have equality in (\ref{PIShearFree}).
It is an interesting question whether
one can give a physical
interpretation to each of the two positive terms in (\ref{PIShearFree}).
Note that 
\begin{align*}
\bm{\omega} = \alpha^2 \s_{\ell} |_{\Sig_0} , \quad \quad \quad
\alpha = - \frac{2}{\thk |_{S_0}}, \quad \quad \quad
\q = \frac{1}{\alpha^2} \gamSigzero
\end{align*}
so that $\beta$ and
$\bm{\omega}^{\perp}$ can be determined in terms of the data
on $S_0$ and both $I_1$ and $I_2$ can be written fully in terms of the
geometry of the initial surface.

\section{Renormalized Area Method for the shell-Penrose inequality in $\Mcal$ }
\label{RAMMink}

The original setup where the Penrose inequality was
conjectured \cite{Penrose1973} involved 
an incoming null shell of dust matter propagating in the Minkowski
spacetime. By exploiting the junction conditions across the shell,
the Penrose inequality becomes a geometric inequality for
surfaces in the Minkowski spacetime. More precisely, if $S_0$
is a closed, connected, spacelike surface embedded in
the Minkowski spacetime ans satisfying a suitable convexity condition (which 
corresponds to the condition that its outgoing past null cone extends smoothly to
infinity), then 
\begin{equation}
\label{B2}
\int_{S_0} \theta_{\ell} {\bm{\eta_{S_0}}} 
\geq \sqrt{ 16 \pi |S_0|}
\end{equation}
where $\ell$ is the future directed null normal transverse to
the $S_0$ satisfying $\la \ell,k
\ra = -2$ and $k$ is future  null, tangent to the
outgoing past null cone generated by $S_0$ and normalized
by $\la k ,\xi\ra= -1$, where $\xi$ is a unit generator
of time translations. We call this conjecture the shell-Penrose inequality
in Minkowski (it has
also been called {\it Gibbons-Penrose inequality} in the literature). 
Analyzing the validity of this inequality is much simpler
than the general null Penrose inequality but it is still a challenging problem
which has received considerable attention in the literature \cite{BrendleWang,Gibbons1973,Gibbons1997,Mars2009,MarsSoria2012,MarsSoria2013,Sauter2008,Tod1985}.

It is a natural question to try and apply the general results
concerning the null Penrose inequality discussed above, to
the shell-Penrose inequality (\ref{B2}) in the
Minkowski spacetime. In this section we consider
the renormalized area method and in Section \ref{GABMink} we study the
GAB foliation.

The renormalized area method is particularly well-suited 
to the Minkowski spacetime. Indeed, the curvature tensor
vanishes identically in this spacetime, so  from Proposition
\ref{Prop6} we have that $H(\lambda)$ is constant and hence zero, 
as its limit at infinity always vanishes. Thus, as in the shear-free
case, $G(\lambda)$ is constant and its sign can be decided by its
asymptotic value (\ref{glimit}). We need
to determine $\c$ and $\cdos$. In the Minkowski spacetime this is
simple because $R^A_{\phantom{A}B}=0$ makes the Ricatti
equation explicitly integrable. 

As it is well-known (and easy to verify), the 
solution of the metric evolution equation
(\ref{Evolmetric}) and the Ricatti equation (\ref{Ricattiequ}) 
in Minkowski is given by

\begin{eqnarray}
\label{RicattiSol1}
& &(K^k)^{A}_{\,\,\,B} \Big{|}_p =  
\left . (\Kkzero)^{A}_{\,\,\,C} \right |_{\pi(p)}  [(\Id - \lambda (p)  \bm{\Kkzero} |_{\pi(p)})^{-1}]^{C}_{\,\,\,B}   \\
\label{RicattiSol2}
& &(\gamma )_{AB} \Big{|}_p = \left . (\gamma )_{AC} \right |_{\pi(p)} [(\Id - \lambda(p) \bm{\Kkzero} |_{\pi(p)})^2]^{C}_{\,\,\,B},
\end{eqnarray}
where  $\pi(p)$ is the (unique) point on $S_0$ lying on the null geodesic
containing $p$ and tangent to $k|_p$. Here $\bm{\Kkzero}$ denotes the
endomorphism with components $(\Kkzero)^{A}_{\,\,\,B}$
and $\Kkzero{}_{AB}$ stands to the null second fundamental
form of $S_0$ along $k$. Taking the trace of (\ref{RicattiSol1}) we find
$\theta_k |_p   =   (\Kkzero)^{A}_{\,\,\,C}
[(\Id-\lambda \bm{\Kkzero})^{-1}]^{C}_{\,\,\,A}   |_{\pi(p)}$, which for the sake
of simplicity we write simply as
\begin{eqnarray*}
\theta_k(\lambda) =   \tr \left [ \bm{\Kkzero} \circ \left ( 
\Id - \lambda \bm{\Kkzero} \right )^{-1} \right ],  \label{thetak}
\end{eqnarray*}
dropping all reference to the point $p$. This expression can be
immediately expanded near infinity to give
\begin{eqnarray*}
\theta_k=\frac{-2}{\lambda}+\frac{-\tr\left((\bm{\Kkzero})^{-1} \right)}{\lambda^2}+\frac{-\tr\left(  (\bm{\Kkzero})^{-2} \right)}{\lambda^3}+o(\lambda^{-3}).
\end{eqnarray*}
Thus,
\begin{equation}
\label{udefinition}
\c=-\tr\left( ( \bm{\Kkzero} )^{-1} \right):=u, \quad \quad
\cdos=-\tr\left(  (\bm{\Kkzero})^{-2} \right).
\end{equation}
Any  $2\times2$ matrix $\bm{A}$ satisfies
\begin{equation*}
\tr(\bm{A}^2)=\tr(\bm{A})^2-2\det(\bm{A}),
\end{equation*}
which applied to $\bm{\Kkzero}$ gives
 $\cdos=2\det\left ( (\bm{\Kkzero})^{-1} \right)-u^2$. Inserting this into
(\ref{glimit}) yields
\begin{equation}
\label{GlimitMink}
F_{\infty} = \underset{\lambda \to \infty}{\lim}G(\lambda)=\left(\int_{\esfdos}u\volunitdos\right)^2-16 \pi\int_{\esfdos}\left(
\det\left( (\bm{\Kkzero})^{-1} \right) \right)\volunitdos.
\end{equation}
This expression can be related to the area of $|S_0|$ as follows. 
From the definition
\begin{equation}
\q=\underset{\lambda \to \infty}{\lim} \frac{\gamma(\lambda)}{\lambda^2}= \underset{\lambda \to \infty}{\lim} \frac{\gamma(\Id-\lambda\bm{\Kkzero})^2}{\lambda^2}=\gamma(\bm{\Kkzero})^2,
\label{round}
\end{equation}
we can relate the volume forms at $S_0$ and ``at infinity'' by
\begin{equation}
\label{volformrelationesf}
\bm{\eta_{S_0}}=\det((\bm{\Kkzero})^{-1})\volunitdos
\end{equation}
and (\ref{GlimitMink}) becomes 
\begin{equation*}
F_{\infty} = \underset{\lambda \to \infty}{\lim}G(\lambda)
=\left(\int_{\esfdos}u\volunitdos\right)^2-16\pi|S_0|.
\end{equation*}
Summarizing, in the Minkowski spacetime $G(\lambda) = F_{\infty}$
and $F_{\infty} \geq 0$ implies (cf. Theorem \ref{penroseineqtma}) 
the validity of (\ref{PI}), which 
is exactly (\ref{B2}) because the Bondi energy of the Minkowski spacetime
vanishes identically. We have thus proved that the shell-Penrose inequality
in Minkowski holds provided
\begin{equation*}
\left(\int_{\esfdos}u\volunitdos\right)^2 \geq 16\pi|S_0|.
\end{equation*} 
In terms of the support function $h$ of $S_0$
(see 
\cite{MarsSoria2012} for its definition in the present context), this inequality
can be rewritten (after some manipulations) in the form
\begin{equation*}
4\pi\int_{\mathbb{S}^2}\big{(}(\triangle_{\q} h)^2+2h\triangle_{\q}
h\big{)}\volunitdos\geq
4\pi\int_{\mathbb{S}^2}u^2\volunitdos- \left (\int_{\mathbb{S}^2}u\volunitdos \right  )^2  ,
\end{equation*}
which is precisely the sufficient condition for the shell-Penrose
inequality in Minkowski obtained in \cite{MarsSoria2012}. This is not
surprising since the method in \cite{MarsSoria2012} also involved
a monotonicity condition for $\sqrt{\frac{|\Siglam|}{16 \pi}} - 
\frac{1}{2}\lambda$.
However, the general framework developed here leads to the result
in a much more efficient way. In fact, there is an even
more direct way of reaching this conclusion as a consequence of 
Proposition \ref{Prop2}, or rather of its rewriting in Remark
\ref{Prop2bis}. Indeed, from (\ref{RicattiSol2}) and
(\ref{volformrelationesf}),
\begin{align}
\volSiglam & =\det\left(  \bm{(\Kkzero)^{-1}}-\lambda \Id  \right)\volunitdos
=\left(\lambda^2+\c\lambda+\det \left ( \bm{\Kkzero})^{-1} \right) \right )
\volunitdos\quad \quad
\Longrightarrow \nonumber \\
|S_{\lambda} | & = 4 \pi \lambda^2 + 
\left ( \int_{\mathbb{S}^2} \c \volunitdos \right ) \lambda 
+ \int_{\mathbb{S}^2} \det \left ( \bm{\Kkzero})^{-1} \right) 
\volunitdos, \label{area}
\end{align}
where  in the second equality we used the first expression in
(\ref{udefinition}).
Comparing with (\ref{AreaExp}) it follows
that $\hat{\Theta}$  is Lie constant and takes the value
$\hat{\Theta} = |S_0|$, so that
the necessary condition (\ref{firstarticleineqgeneral})
becomes precisely $F_{\infty} \geq 0$.


The following proposition summarizes the results for the shell-Penrose
inequality in Minkowski obtained so far and shows, in addition, that
in the Minkowski case monotonicity of $D(\Siglam)$
is in fact equivalent to the a priori more general
conditions (\ref{problem}), or  $F'(\Siglam)\geq 0$.

\begin{Prop}[\bf Equivalence of the monotonicity methods in $\Mcal$]
Let $\Omega$ be a past asymptotically flat null hypersurface
in  $\Mcal$
and $\{ \Siglam\}$ a geodesic foliation approaching large spheres.
The following conditions are equivalent:
\begin{itemize}
\item[(i)] $\left(\int_{\esfdos}u\volunitdos \right)^2\geq 16\pi|S_0|$,
\item[(ii)] $\frac{d}{d\lambda}D(S_\lambda)\geq 0$ (Renormalized area method),
\item[(iii)] $\frac{d}{d\lambda}|S_\lambda|\geq\frac{16\pi|S_\lambda|}{8\pi\lambda+\int_{\esfdos}u\volunitdos}$ ($F'(\Siglam)\geq 0$ method),
\item[(iv)] $D(S_\lambda)\leq  \underset{\lambda \to \infty}{\lim} D(S_\lambda)$,
\end{itemize}
where $u=-\tr\left( ( \bm{\Kkzero})^{-1} \right)$. The shell-Penrose
inequality for $S_0$ holds if one (and hence any) of these
conditions holds.
\end{Prop}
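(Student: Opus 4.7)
The plan. First, I would exploit that in $\Mcal$ the Riemann tensor vanishes identically, so the integrand of $H'(\lambda)$ in Proposition \ref{Prop6} is zero and $H'(\lambda) \equiv 0$. Combined with $\lim_{\lambda \to \infty} H(\lambda) = 0$ (also from Proposition \ref{Prop6}) this forces $H \equiv 0$, hence $G'(\lambda) \equiv 0$, so $G$ is the constant function equal to $F_{\infty}$ computed in Proposition \ref{Prop5}. Using the Ricatti-integrated expressions (\ref{RicattiSol1})--(\ref{RicattiSol2}) together with (\ref{round}) and the volume-form identification (\ref{volformrelationesf}), $F_{\infty}$ reduces (as already done in the text) to $\left(\int_{\esfdos}u\volunitdos\right)^2 - 16\pi|S_0|$. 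Since the derivation preceding the Proposition shows that $\tfrac{d}{d\lambda}D(S_\lambda,\ell)\ge 0 \Leftrightarrow G(\lambda) \ge 0$, the constancy of $G$ in Minkowski gives (i) $\Leftrightarrow$ (ii) directly.

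Second, for (i) $\Leftrightarrow$ (iii) I would appeal to the closed-form area formula (\ref{area}), which is a peculiarity of $\Mcal$: $|S_\lambda|=4\pi\lambda^2+(\int_{\esfdos}u\volunitdos)\lambda+|S_0|$. This means that in the notation of Remark \ref{Prop2bis} one has $\hat{\Theta}=|S_0|$, a Lie constant, so $\tfrac{d\hat{\Theta}}{d\lambda}=0$ and the rewriting (\ref{firstarticleineqgeneral}) of (iii) collapses to $\left(\int_{\esfdos}u\volunitdos\right)^{2} \geq 16\pi|S_0|$, which is exactly (i). Alternatively, substituting (\ref{area}) into the original form (\ref{sufcondPenrose}) of (iii) makes all $\lambda$-dependent terms cancel and yields the same reduction.

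Third, I would close the equivalence loop via (ii) $\Rightarrow$ (iv) $\Rightarrow$ (i). The first implication is immediate: monotonic increase of $D(S_\lambda,\ell)$ forces $D(S_\lambda,\ell)\le\lim_{\lambda\to\infty}D(S_\lambda,\ell)$. For (iv) $\Rightarrow$ (i) I would evaluate (iv) at $\lambda=0$, using $D(S_0,\ell)=\sqrt{|S_0|/(16\pi)}$ and $\lim_{\lambda\to\infty}D(S_\lambda,\ell) = \tfrac{1}{16\pi}\int_{\esfdos}u\volunitdos$ (from (\ref{Dsaympdevelop}) with $\rq=1$ and $\c=u$); squaring the resulting inequality produces (i). Finally, for the shell-Penrose inequality in $\Mcal$ I would combine (iv) with the general monotonicity of $M_b(S_\lambda,\ell)$, which holds in $\Mcal$ because DEC is satisfied vacuously. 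Summing, $M(S_0,\ell) \leq \lim_{\lambda\to\infty} M(S_\lambda,\ell)$, and this limit equals $\lim_{\lambda\to\infty} m_H(S_\lambda)$ by Proposition \ref{limits} (the foliation approaches large spheres, so the correction term vanishes), which is the Bondi energy of a cut of $\scri^{-}$ in $\Mcal$ and therefore zero. Unpacking $M(S_0,\ell) \leq 0$ is precisely (\ref{B2}). The main subtlety in the whole argument is the initial observation $H \equiv 0$; once one notices that vanishing Minkowski curvature upgrades the asymptotic vanishing of $H$ to vanishing everywhere, all four conditions collapse to the single algebraic inequality (i), and the shell-Penrose inequality is an essentially free consequence.
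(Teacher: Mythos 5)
Your proposal is correct and follows essentially the same route as the paper: the vanishing of the curvature gives $H\equiv 0$, hence $G\equiv F_{\infty}=\left(\int_{\mathbb{S}^2}u\,\bm{\eta_{\q}}\right)^2-16\pi|S_0|$, which yields (i)$\Leftrightarrow$(ii), while the explicit quadratic area formula with $\hat{\Theta}=|S_0|$ reduces (iii) to (i) via Remark \ref{Prop2bis}, and the loop is closed through (iv). The only cosmetic difference is that you establish (iv)$\Rightarrow$(i) by evaluating at $\lambda=0$ and squaring, whereas the paper shows (iv)$\Leftrightarrow$(iii) pointwise in $\lambda$ from the same area expansion; both steps are valid, as is your derivation of the shell-Penrose inequality from (iv) together with the monotonicity of $M_b$ and the vanishing of the Bondi energy.
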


\begin{proof}
The implications $(ii) \Longrightarrow (iii)$ 
and $(ii) \Longrightarrow (iv)$ are generally true. The equivalence
of $(i)$ and $(ii)$ is a consequence of $G(\lambda) = F_{\infty}$
and (\ref{GlimitMink}), as discussed above. We have also 
seen before that $(iii)$ is equivalent to $(i)$ 
as a consequence of Remark \ref{Prop2bis}. It only remains
to show that $(iv) \Longrightarrow (iii)$. Expression (\ref{area}) 
for the area $|\Siglam|$ yields
%
\begin{align*}
\frac{d}{d\lambda}|S_\lambda|\geq\frac{16\pi|S_\lambda|}{8\pi\lambda
+\int_{\esfdos}\c \volunitdos} 
\Longleftrightarrow  \left(8\pi\lambda+\int_{\esfdos} \c\volunitdos
\right)^2\geq 16\pi |S_\lambda| 
\Longleftrightarrow \sqrt{\frac{|S_\lambda|}{16\pi}}
-\frac{\lambda}{2}\leq \frac{1}{16\pi}\int_{\esfdos}\c\volunitdos,
\end{align*}
which establishes $(iv) \Longleftrightarrow (iii)$.

\end{proof}

\section{GAB foliations in $\Mcal$. Applications to the shell-Penrose inequality.}

\label{GABMink}
In the previous section we studied the renormalized area method
for the shell-Penrose inequality in Minkowski. In this section we
investigate in the same setting 
the consequences of the general Penrose-type inequality
obtained in Theorem  \ref{tmaprincipal}. To that aim we need
information on the limit of the Hawking energy along
GAB foliations. In \cite{MarsSoria2015} we have studied the limit
of the Hawking energy at infinity for a large class of foliations 
$\{ \Siglam \}$ along asymptotically flat 
null hypersurfaces. The results we need from that paper
can be summarized as follows:

Let $\{\Siglam\}$ be a geodesic background foliation approaching large
spheres and define $\c$, $\A$ and $\sone$ as in Definition \ref{AF}. 
Consider any other geodesic foliation $\{S_{\lambda'}\}$ starting on the same
cross-section $S_0$. The level-set functions $\lambda$ and 
$\lambda'$ are necessarily related by 
$\lambda=f\lambda'$, with $f>0$ and Lie constant on $\Omega$.
Then the limit of the Hawking energy along $\{ S_{\lambda'} \}$ is
\cite{MarsSoria2015}
\begin{equation}
\label{hawkingmasslimittwoone}
\underset{\lambda' \to \infty}{\lim}m_H(S_{\lambda'})=\frac{1}{8\pi\sqrt{16\pi}}\left(\sqrt{\int_{\mathbb{S}^2} f^2 \volunitdos}\right)\int_{\mathbb{S}^2}\left(\triangle_{\q}\c-(\c+\a)-4 \mathrm{div}_{\q} (\sone) \right) \frac{1}{f} \volunitdos.
\end{equation} 
In order to apply this result in the Minkowski context, we need
to compute $\c$, $\A$ and $\sone$ for the background foliation,
which we fix  as follows: choose a Minkowskian coordinate system $(t,x^i)$ and define 
the unit Killing $\xi=\partial_{t}$. The null generator $k$ of 
$\Omega$ is then uniquely selected by the condition
$\langle k,\xi\rangle=-1$ and $\{ \Siglam \}$  is defined to be
the level-set foliation
of $\lambda \in C^{\infty}(\Omega,\mathbb{R})$ defined by
$\lambda |_{S_0} =0$ and $k(\lambda)=-1$. It is immediate
to check that $\{ \Siglam\}$ approaches large spheres. 
%
%
%
%
The {\it time-height function} $\tau_{\lambda}$ of the level set
$S_{\lambda}$ is defined to be 
\begin{equation*}
\tau_\lambda:= t|_{S_\lambda}.
\end{equation*}
In particular $\tau_0=t|_{S_0}$ and, in fact, 
$\tau_\lambda |_p =\tau_0 |_{\pi(p)}-\lambda$ as a consequence 
of our choice of normalization for $k$. 


\begin{Lem}[\bf Asymptotic expansion at $\lambda=+\infty$]
\label{Lem7}
Let $\Omega$ be a past asymptotically flat null hypersurface
in $\Mcal$ and $\{ \Siglam \}$ a geodesic foliation associated
to a choice of Minkowskian coordinate system $\{ t, x^i \}$
as described above.
Let $\ell$ be orthogonal to $\{\Siglam\}$ and satisfying
 $\la \ell,k \ra = -2$. Then the following asymptotic
expansions hold
\begin{align}
\label{thetakexp}
\theta_{k}=&\frac{-2}{\lambda}+\frac{u}{\lambda^2}+o(\lambda^{-2}), \quad &
u& =-\tr\left( ( \bm{\Kkzero})^{-1} \right) \\
\label{thetaellexp}
\theta_{\ell}=&\frac{2}{\lambda}+\frac{-u+2\triangle_{\q}\tau_0}{\lambda^2}+ o(\lambda^{-2}) \quad &\tau_0 &:= t |_{S_0}\\
\label{sexp}
{s_{\ell}}_A=&\frac{-\esf_A\tau_0}{\lambda}+o(\lambda^{-1}).  &&
\end{align}
\end{Lem}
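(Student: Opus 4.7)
The expansion (\ref{thetakexp}) is already available: combine the general expansion (\ref{thetakdev}) with the identification $c = -\tr((\bm{\Kkzero})^{-1}) = u$ derived in Section \ref{RAMMink} from the Minkowski Ricatti equation (see (\ref{udefinition})). For (\ref{sexp}) and (\ref{thetaellexp}) the plan is to work in Cartesian Minkowski coordinates, parameterizing $\Omega$ by the map
\[
F(u,\lambda) = (\tau_0(u) - \lambda,\, \vec{x}_0(u) - \lambda\, \vec{n}(u)),
\]
where $\vec{n}:S_0 \to \esfdos$ is the unit spatial direction of $k$, so that $k = \partial_t + \vec{n}$. Two observations drive the computation: first, since $k$ depends only on $u$ and $\xi = \partial_t$ is Killing, the Minkowski covariant derivative reduces to the purely spatial vector $\nabla_{X_A} k = \partial_A \vec{n}$; second, the null normal $\ell$ with $\langle \ell, k\rangle = -2$ admits the decomposition $\ell = 2\xi - k + V + b\,k$, where $V$ is tangent to $S_\lambda$ and is uniquely fixed by $\ell \perp S_\lambda$ to have covector $V^\flat = 2\, d\tau_0$ (independent of $\lambda$), so $V^A = 2\gamma^{AB}\partial_B\tau_0 = O(\lambda^{-2})$, and $b = O(\lambda^{-2})$ is then determined by the null condition on $\ell$.

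For (\ref{sexp}) I would evaluate $(s_\ell)_A = \tfrac{1}{2}\langle \nabla_{X_A} k, \ell\rangle$ using this decomposition. The $\xi$ and $k$ pieces contribute zero ($\nabla_{X_A} k$ is spatial, and $\langle k, \nabla_{X_A} k\rangle = 0$), leaving only the $V$-term, which is handled via the exact formula $\langle \nabla_{X_A} k, X_B\rangle = (\Kkzero)_{AB} - \lambda \q_{AB}$. Inserting the asymptotics of $V$ gives $(s_\ell)_A = -\esf_A \tau_0/\lambda + o(\lambda^{-1})$, which is (\ref{sexp}).

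For (\ref{thetaellexp}) I would compute $K^\ell_{AB} = -\langle \ell, \partial_A \partial_B F\rangle$ using the same decomposition. The key ingredients are the Gauss formula for the standard embedding $\esfdos \hookrightarrow \mathbb{R}^3$,
\[
\partial_A \partial_B \vec{n} = \Gamma^C_{AB}(\q)\,\partial_C \vec{n} - \q_{AB}\,\vec{n},
\]
and the identity $(\Kkzero)_{AB} = \partial_A \vec{n} \cdot \partial_B \vec{x}_0$, which follows from $(K^k)_{AB} = \langle X_B, \nabla_{X_A} k\rangle$ evaluated at $S_0$. Together they yield $K^\ell_{AB} = \lambda\, \q_{AB} + 2\, \esf_A \esf_B \tau_0 - (\Kkzero)_{AB} + O(\lambda^{-1})$. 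The same identity also implies that the subleading term $h_{AB}$ in $\gamma_{AB} = \lambda^2 \q_{AB} + \lambda h_{AB} + o(\lambda)$ satisfies $h_{AB} = -2(\Kkzero)_{AB}$, so $\gamma^{AB} = \lambda^{-2}\q^{AB} + 2\lambda^{-3}\q^{AC}(\Kkzero)_{CD}\q^{DB} + O(\lambda^{-4})$. Contracting and using (\ref{round}) to compute $\tr_\q \Kkzero = \tr((\bm{\Kkzero})^{-1}) = -u$ produces $\theta_\ell = 2/\lambda + (-u + 2\triangle_\q \tau_0)/\lambda^2 + o(\lambda^{-2})$. The main obstacle will be the careful bookkeeping of subleading terms---in particular, ensuring that the Christoffel pieces of $\partial_A \partial_B \vec{n}$ combine with the partials of $\tau_0$ to assemble the Hessian $\esf\esf\tau_0$ and the Laplacian $\triangle_\q \tau_0$ exactly, and that the $O(\lambda^{-3})$ correction to $\gamma^{AB}$ coming from $h = -2\Kkzero$ contributes correctly at the $\lambda^{-2}$ order in $\theta_\ell$.
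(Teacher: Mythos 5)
Your proposal is correct. For (\ref{thetakexp}) and (\ref{sexp}) it essentially coincides with the paper's argument: the paper also reads $u$ off the explicit Minkowski Ricatti solution (cf. (\ref{udefinition})), and also obtains ${s_{\ell}}_A=\tau_\lambda^{B}K^k_{AB}$ by decomposing $\xi$ into normal and tangential parts to $\Siglam$ and discarding the $\xi$- and $k$-contributions to $\la \nabla_{X_A}k,\ell\ra$; your $\ell=2\xi-k+V+bk$ with $V^\flat=2d\tau_0$ is exactly the paper's $\ell=2\xi-(1+|D\tau_\lambda|^2_{\gamma})k+2\,\mathrm{grad}\,\tau_\lambda$. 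Where you genuinely diverge is (\ref{thetaellexp}): the paper never computes $K^{\ell}_{AB}$, but instead invokes the scalar identity $\theta_{\ell}+(1+|\nabla\tau|^2_{\gamma})\theta_{k}-2\triangle_{\gamma}\tau=0$ (its equation (\ref{relcurvature}), a consequence of $\xi$ being covariantly constant, imported from \cite{MarsSoria2012}), notes that $|\nabla\tau|^2_{\gamma}\theta_k=O(\lambda^{-3})$ and $\triangle_{\gamma}\tau_0=\triangle_{\q}\tau_0\,\lambda^{-2}+O(\lambda^{-3})$, and reads off $\a=-u+2\triangle_{\q}\tau_0$ with no Christoffel bookkeeping at all. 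Your direct route through the Cartesian embedding does close: from $\la k,X_B\ra=0$ one gets $\vec{n}\cdot\partial_B\vec{x}_0=\partial_B\tau_0$, hence $\vec{n}\cdot\partial_A\partial_B\vec{x}_0=\partial_A\partial_B\tau_0-\Kkzero{}_{AB}$, and the single Christoffel term $-2\Gamma^C_{AB}[\q]\partial_C\tau_0$ produced by $\la V,\partial_A\partial_BF\ra$ combines with the two bare second derivatives of $\tau_0$ to give exactly $2\esf_A\esf_B\tau_0$; moreover $-\q^{AB}\Kkzero{}_{AB}=u$ and the $h=-2\Kkzero$ correction to $\gamma^{AB}$ contributes $2\tr_{\q}\Kkzero=-2u$, summing to the required $-u$. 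So the bookkeeping you flagged as the main obstacle works out, at the price of more computation; what your version buys is the full tensor $K^{\ell}_{AB}$ to order $O(1)$ (not just its trace) and independence from the external identity (\ref{relcurvature}), while the paper's version is shorter and more robust against sign and Christoffel errors.
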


\begin{proof}
In the previous section we already proved (\ref{thetakexp}).
For $\theta_{\ell}$ we exploit the identity
\begin{equation}
\label{relcurvature}
\theta_{\ell}+(1+|\nabla\tau|^2_{\gamma})\theta_{k}-2\triangle_{\gamma}\tau=0,
\end{equation}
valid for any spacelike surface $S$ in Minkowski
whenever $\tau := t |_S$. This identity is a simple consequence
of the fact that $\xi$
is a covariantly constant vector field and it has been used several times
in the literature (we refer to 
\cite{MarsSoria2012} for a proof). We apply this identity to
$S_\lambda$ and expand for large $\lambda$ up to order
$\lambda^{-2}$. In particular, we can neglect all terms
of order $O(\lambda^{-3})$ or higher. Since 
$\tau_\lambda=\tau_0-\lambda$ and $\gamma_{S_{\lambda}}$ has the
expansion (\ref{invmetricexpansion}), the gradient term is 
 $|\nabla\tau|^2_{\gamma}= {\gamma_{S_\lambda}^{-1}}^{AB}{\tau_0}_{,A}{\tau_0}_{,B}
=O(\lambda^{-2})$ and the term $|\nabla\tau|^2_{\gamma}\theta_k$ is $O(\lambda^{-3})$
so that it can be ignored. Concerning the Laplacian term, since 
$\Delta_{\gamma} \tau = \Delta_{\gamma} \tau_0$ we have, in local coordinates
$\{ \lambda, y^A \}$ adapted to the foliation $\{ \Siglam \}$ (i.e.
such that $k = - \partial_{\lambda}$)
\begin{align*}
\triangle_{\gamma}\tau_{\lambda}
& =\frac{1}{\sqrt{\mathrm{det}(\gamma)}}\partial_A\left(\sqrt{\mathrm{det}(\gamma)}(\gamma^{-1})^{AB}\partial_B\tau \right)
\triangle_{\gamma}\tau_0 \\
& =\frac{1}{\sqrt{\det(\q)}}\partial_A\left(\sqrt{\det(\q)} (\q^{-1})^{AB} {\tau_0}_{,B}\right)\frac{1}{\lambda^2}+O(\lambda^{-3})=(\triangle_{\q}\tau_0)\frac{1}{\lambda^2}+O(\lambda^{-3})
\end{align*}
where we have used $\gamma(\lambda)=\q\lambda^2+O(\lambda)$. 
Inserting $\theta_{\ell} = \frac{2}{\lambda} + \frac{\a}{\lambda^2}+
o(\lambda^{-2})$ and (\ref{thetakexp}) 
into \eqref{relcurvature} and keeping only the
terms in $\lambda^{-2}$ we obtain
$\a + u - 2 \triangle_{\q}\tau_0 =0,$
which gives (\ref{thetaellexp}).

It only remains to compute $\sone$ in the expansion
$s_\ell=\frac{\sone}{\lambda}+o(\lambda^{-1})$. To that aim, we
decompose the Killing vector $\xi$ into normal and tangential
components to $\Siglam$ as
\begin{equation}
\xi=\frac{1}{2}\ell+\frac{(1+|D\tau_\lambda|^2_{\gamma_{\lambda}})}{2}k-
\mbox{grad} \, \tau_{\lambda}
\label{xikl}
\end{equation}  
where $\mbox{grad}$ is the gradient in $\Siglam$. This decomposition follows
directly from the normalization conditions and the definition of $\tau$
(an explicit derivation can be found in \cite{MarsSoria2012}, cf.
expression (16)). Solving for $\ell$ in (\ref{xikl}) and
inserting into the definition of $s_{\ell}$:
\begin{equation*}
{s_{\ell}}_A=\frac{1}{2}\langle\nabla_{X_A}k,\ell\rangle=\langle\nabla_{X_A}k, \xi-\frac{(1+|D\tau_\lambda|^2_{\gamma_{\lambda}})}{2}k+\mbox{grad} \tau_{\lambda}
\rangle= \langle \nabla_{X_A}k, \xi \rangle+\tau_\lambda^B K^k_{AB}.
\end{equation*}
Now, from $\langle k,\xi\rangle=-1$ we have
$\langle \nabla_{X_A}k, \xi \rangle = - \la k, \nabla_{X_A} \xi \ra =0$
because $\xi$ is covariantly constant. We conclude
\begin{equation*}
{s_{\ell}}_A=\tau_\lambda^B K^k_{AB},
\end{equation*}
from which the expansion (\ref{sexp}) follows directly after taking into account
(\ref{invmetricexpansion}) and (\ref{Kkexpansion}).

\end{proof}


Lemma \ref{Lem7} allows us to compute the limit of the Hawking energy along
very general foliations by exploiting the results in \cite{MarsSoria2015}.
For geodesic foliations $\lambda = f \lambda'$ we simply need to evaluate
(\ref{hawkingmasslimittwoone}), which becomes
\begin{equation}
\underset{\lambda' \to \infty}{\lim}m_H(S_{\lambda'})=\frac{1}{8\pi\sqrt{16\pi}}\left(\sqrt{\int_{\mathbb{S}^2} f^2 \volunitdos}\right)\int_{\mathbb{S}^2}\triangle_{\q}(u+2\tau_0) \frac{1}{f} \volunitdos.
\label{limitmhMink}
\end{equation} 
In particular, the GAB foliation associated to $S_0$ has rescaling
function $f\defi\frac{\c}{\ct}=\frac{u}{\ct}, \quad c>0$ so that,
along this GAB foliation, 
\begin{eqnarray*}
\underset{\lambda' \to \infty}{\lim}m_H(S_{\lambda'})&=&
\frac{1}{8\pi\sqrt{16\pi}}\left(\sqrt{\int_{\mathbb{S}^2} u^2 \volunitdos}\right)\int_{\mathbb{S}^2}\triangle_{\q}(u+2\tau_0) \frac{1}{u} \volunitdos.
\end{eqnarray*} 
Thus, the particularization of Theorem \ref{tmaprincipal} to the Minkowski setting
reads
\begin{Tma}
\label{IneqMink}
Let $\Omega$ be a past asymptotically flat null hypersurface 
in $\Mcal$ and $S_0$ a spacelike cross section of $\Omega$. Then the following inequality holds:
\begin{equation}
\label{Penroseineqtype}
\sqrt{\frac{|S_0|}{16\pi}} \leq  \frac{1}{16\pi} \int_{S_0} \theta_{\ell}\bm{\eta_{S_0}}+ \frac{1}{8\pi\sqrt{16\pi}}\left(\sqrt{\int_{\mathbb{S}^2} u^2 \volunitdos}\right)\int_{\mathbb{S}^2}\triangle_{\q}(u+2\tau_0) \frac{1}{u} \volunitdos ,
\end{equation}
where $u=-\tr\left( ( \bm{\Kkzero})^{-1} \right)$, $\tau_0 = t |_{S_0}$
with $t$ a Minkowskian time coordinate. The round asymptotic metric
 $\q$ is defined by (\ref{round}) and $\{ k, \ell\}$ are the future directed
null normals to $S_0$ with $k$ tangent to $\Omega$ and
satisfying $k(t)= 1$ and $\la k,\ell\ra = -2$.
\end{Tma}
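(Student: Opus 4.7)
The strategy is to apply Theorem \ref{tmaprincipal} to the GAB foliation associated to $S_0$ in $\Mcal$ and to evaluate the resulting limit of the Hawking energy via the formula (\ref{hawkingmasslimittwoone}), using the explicit Minkowski asymptotic data computed in Lemma \ref{Lem7}.

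First I would fix the background foliation $\{\Siglam\}$ from the paragraph preceding Lemma \ref{Lem7}: $k$ is the unique future null generator of $\Omega$ with $\langle k,\xi\rangle=-1$ (equivalently $k(t)=1$) and $\lambda$ is the level-set function vanishing on $S_0$. By construction $\{\Siglam\}$ approaches large spheres with $\qh=\q$, and Lemma \ref{Lem7} provides the Lie-constant data $\c=u$, $\A=-u+2\triangle_{\q}\tau_0$ and $\sone=-d\tau_0$. Lemma \ref{exis} then yields the GAB foliation $\{S_{\lambda'}\}$ associated to $S_0$ via the reparametrisation $\lambda=f\lambda'$ with $f=u/c$ for any constant $c>0$.

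Next, Theorem \ref{tmaprincipal} applied to this GAB gives
\[
\sqrt{\frac{|S_0|}{16\pi}}-\frac{1}{16\pi}\int_{S_0}\theta_{\ell^\star}\,\bm{\eta_{S_0}}\;\leq\;\underset{\lambda'\to\infty}{\lim}\, m_H(S_{\lambda'}),
\]
and I would evaluate the right-hand side through (\ref{hawkingmasslimittwoone}). The integrand there reduces, using Lemma \ref{Lem7}, to the clean expression
\[
\triangle_{\q}\c-(\c+\A)-4\,\mathrm{div}_{\q}\sone\;=\;\triangle_{\q}(u+2\tau_0),
\]
and on substitution of $f=u/c$ the arbitrary constant $c$ cancels between the prefactor $\sqrt{\int f^2\,\volunitdos}$ and the $1/f$ inside the integral, leaving precisely the asymptotic term displayed in (\ref{Penroseineqtype}).

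The remaining step, which I expect to require most care, is matching the $\theta_{\ell^\star}$ contribution of Theorem \ref{tmaprincipal} with the $(1/16\pi)\int_{S_0}\theta_\ell\,\bm{\eta_{S_0}}$ appearing in the statement, where $\ell$ is fixed by $\langle k,\ell\rangle=-2$ with the Minkowskian $k$. For the GAB generator $k'=fk$ the canonical GAB null normal is $\ell'=\ell/f$, and $\ell^\star=R'\ell'$ with $R'=\frac{1}{c}\sqrt{\tfrac{1}{4\pi}\int_{\esfdos}u^2\,\volunitdos}$ the area radius of the GAB asymptotic metric $\qh'=f^2\q$. Carefully tracking these rescalings on $S_0$, together with the pointwise identity $\bm{\eta_{S_0}}=\det((\bm{\Kkzero})^{-1})\volunitdos$ from (\ref{volformrelationesf}), should deliver the $\theta_\ell$-term in the form stated in (\ref{Penroseineqtype}) and close the argument.
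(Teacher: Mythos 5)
Your route is exactly the paper's: the theorem is obtained by particularizing Theorem \ref{tmaprincipal} to the GAB foliation associated to $S_0$ and evaluating the Hawking-energy limit through (\ref{hawkingmasslimittwoone}) with the data $\c=u$, $\A=-u+2\triangle_{\q}\tau_0$, $\sone=-d\tau_0$ supplied by Lemma \ref{Lem7}. Your reduction of the integrand to $\triangle_{\q}(u+2\tau_0)$ and the cancellation of the arbitrary constant $c$ between the prefactor $\sqrt{\int_{\esfdos} f^2\volunitdos}$ and the $1/f$ inside the integral are both correct and reproduce the displayed asymptotic term.

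The gap is in the step you defer. By your own formulas, on $S_0$ one has
$\ell^{\star}=\frac{R'}{f}\,\ell=\frac{1}{u}\sqrt{\frac{1}{4\pi}\int_{\esfdos}u^2\volunitdos}\;\ell$,
and therefore
\begin{equation*}
\frac{1}{16\pi}\int_{S_0}\theta_{\ell^{\star}}\,\bm{\eta_{S_0}}
=\frac{1}{16\pi}\sqrt{\frac{1}{4\pi}\int_{\esfdos}u^2\volunitdos}\;\int_{S_0}\frac{\theta_{\ell}}{u}\,\bm{\eta_{S_0}},
\end{equation*}
which is a $1/u$-weighted integral and does not equal $\frac{1}{16\pi}\int_{S_0}\theta_{\ell}\,\bm{\eta_{S_0}}$ unless $u$ is constant. ``Carefully tracking the rescalings'' therefore cannot, by itself, deliver the $\theta_{\ell}$-term in the form stated in (\ref{Penroseineqtype}): the literal particularization of Theorem \ref{tmaprincipal} yields the weighted integral above in place of $\frac{1}{16\pi}\int_{S_0}\theta_{\ell}\,\bm{\eta_{S_0}}$. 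Note also that the GAB foliation associated to $S_0$ has asymptotic metric $\qh'=(u/c)^2\q$, which is not round for non-constant $u$, so the identification $\ell^{\star}=\ell$ available for foliations approaching large spheres cannot be invoked here. The paper presents the theorem as an immediate particularization and does not carry out this conversion either, so you have located exactly the point that requires an argument; but as written your proposal proves the inequality with $\theta_{\ell^{\star}}$ (equivalently, with the $1/u$-weighted integral), not with $\theta_{\ell}$ normalized against the Minkowskian $k$, and the asserted matching is the missing --- and nontrivial --- content.
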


\begin{remark}
Whenever $\underset{\lambda' \to \infty}{\lim} m_H(S_{\lambda'})\leq 0$, 
the shell-Penrose inequality in $\Mcal$ (\ref{B2})
for $S_0$ follows.
\end{remark}

As we discussed in Section \ref{Shear-free}, the Penrose inequality in the shear-free
case relies on a highly non-trivial Sobolev type
inequality for functions on the
sphere due to Beckner \cite{Beckner1993}. This inequality plays a core role
both in the proof by Sauter \cite{Sauter2008} and in the
proof presented in Section \ref{Shear-free}. The shear-free case in Minkowski
corresponds to the case where $\Omega$ is the past null cone of a point.
In fact, the shell-Penrose inequality for cross section $S_0$ on such a 
past null cone was first proven by Tod \cite{Tod1985} using a
Sobolev inequality on Euclidean space applied to suitable radially
symmetric functions. One might think that Sobolev type inequalities
of some sort should lie behind any method of proving the
shell-Penrose inequality for surfaces lying in the
past null cone of a point. We find it most remarkable that 
Theorem \ref{tmaprincipal} is capable of proving the 
shell-Penrose inequality with {\it no reference whatsoever} to
any Sobolev type inequality.

\begin{Cor}[\bf Shell-Penrose inequality in $\Mcal$ with spherical symmetry]
Consider a point $p\in\Mcal$ and $\Omega_p$ the past null cone of $p$. Let $S_0$ be a closed spacelike surface embedded in $\Omega_p$. Then the shell-Penrose inequality for $S_0$ holds true as a consequence of Theorem \ref{tmaprincipal}.  
\end{Cor}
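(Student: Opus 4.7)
The plan is to apply Theorem \ref{IneqMink} (the Minkowski specialization of Theorem \ref{tmaprincipal}) to $\Omega=\Omega_p$ and $S_0\subset\Omega_p$, and to verify that the correction term on the right-hand side of (\ref{Penroseineqtype}) vanishes identically because the combination $u+2\tau_0$ is constant on the asymptotic $\mathbb{S}^2$. Once this is established, (\ref{Penroseineqtype}) collapses exactly to the shell-Penrose inequality (\ref{B2}).

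Concretely, I would fix Minkowskian coordinates $(t,x^i)$ in which $p=(t_p,\vec{x}_p)$ and parametrize $\Omega_p$ by its generators $(r,\vec{n})\mapsto(t_p-r,\vec{x}_p+r\vec{n})$ with $r>0$, $\vec{n}\in\mathbb{S}^2$. The unique future directed null vector field tangent to $\Omega_p$ and normalized by $k(t)=1$ is then $k=\partial_t-\vec{n}\!\cdot\!\partial_{\vec{x}}$, which satisfies $k(r)=-1$; so the affine level-set parameter $\lambda$ used in Theorem \ref{IneqMink} agrees with $r$ up to a Lie-constant shift pinning $\lambda=0$ on $S_0$. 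Writing $S_0$ as the graph $r=R(\vec{n})$ for a smooth positive function $R$, the identity $\vec{n}\!\cdot\! d\vec{n}=0$ (which encodes the null structure of the cone of a point) immediately yields that the induced metric on each level set $S_\lambda=\{r=R+\lambda\}$ is simply $(R+\lambda)^2\q$, with $\q$ the round unit metric on $\mathbb{S}^2$. In particular $\theta_k|_{S_\lambda}=-2/(R+\lambda)$, whose asymptotic expansion gives $\c=2R$ and hence, via (\ref{udefinition}), $u=\c=2R$.

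The crucial observation is now that $\tau_0:=t|_{S_0}=t_p-R$ by the parametrization, so
\begin{equation*}
u+2\tau_0=2R+2(t_p-R)=2t_p,
\end{equation*}
a constant function on $\mathbb{S}^2$. Therefore $\triangle_{\q}(u+2\tau_0)\equiv 0$, the second term on the right-hand side of (\ref{Penroseineqtype}) vanishes, and Theorem \ref{IneqMink} reduces exactly to (\ref{B2}). There is no genuine analytical difficulty: the only routine checks are that $\Omega_p$ (away from its tip) is past asymptotically flat and that the background foliation above is admissible, both of which are immediate from the explicit parametrization. The striking feature is purely geometric: on the null cone of a point the two Minkowski identities $u=2R$ and $\tau_0=t_p-R$ conspire to produce an $\vec{n}$-independent sum, and this is exactly what lets Theorem \ref{tmaprincipal} recover the shell-Penrose inequality in the spherically symmetric (from $p$) case without any appeal to a Sobolev-type estimate \emph{\`a la} Tod or Beckner.
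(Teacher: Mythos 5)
Your proposal is correct and follows essentially the same route as the paper: both apply the Minkowski specialization (Theorem \ref{IneqMink}) and observe that on the cone of a point one has $u=2(t(p)-\tau_0)$, so that $u+2\tau_0=2t(p)$ is constant and the Hawking-limit term $\triangle_{\q}(u+2\tau_0)$ vanishes identically. The only (inessential) difference is how $u$ is computed — you read it off from the explicit graph parametrization $r=R(\vec{n})$ and the induced metric $(R+\lambda)^2\q$, while the paper evaluates $K^k$ on a round cross section and uses that $K^k$ is a property of $\Omega$ itself.
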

\begin{proof}
The proof is immediate if we use the relation between $u$ and the support
function $h$, see \cite{MarsSoria2012}. We provide an 
alternative proof here for the sake of self-consistency.

Consider a Minkowskian time function $t$ and choose a value $t_0 < 
\inf_{S_0} \tau$, where $\tau = t |_{S_0}$.
Then the intersection of $\Omega_p$ with the hyperplane $\{t = t_0 \}$
is a round sphere $S_1$ of radius $t(p) - t_0$
and lying to the past to $S_0$. Let $k$ be
the null generator of $\Omega_p$ satisfying $k(t)=1$. Then the second
fundamental form along $k$ of $S_1$ is 
$(K^k)^{A}_{\,\,\,B} = - \frac{1}{t(p)-t_0} \delta^A_{\,\,\,B}$.
Since $K^k$ is both a property of $\Omega$ and of spacelike
surfaces embedded in $\Omega$ we conclude that  $(K^k|_q)^{A}_{\,\,\,B}=-\frac{1}{t(p)-t(q)}\delta^A_{\,\,\,B}$ 
for all $q\in \Omega$ and hence
$(\Kkzero)^A_{\,\,\,C} = - \frac{1}{t(p) - \tau_0} \delta^A_{\,\,\,B}$.
Thus $u = 2 (t(p) - \tau_0)$ which makes
the second term in the right-hand side of (\ref{Penroseineqtype})
identically zero.


\end{proof}

\begin{remark}
This argument proves, from (\ref{limitmhMink}), that 
the limit
of the Hawking energy at infinity on $\Omega_p$ vanishes for
all geodesic foliations. In fact, an explicit computation shows
that the Hawking energy
is {\it identically zero} for any cross section of $\Omega_p$.
\end{remark}


\section{An upper bound for the area of $S_\lambda$ 
along past asymptotically flat null hypersurface}

We close the paper returning to the general setup
of asymptotically flat null hypersurfaces in spacetimes satisfying the dominant energy condition. We also return to geodesic foliations not
necessarily approaching large spheres. In this section we provide
a general upper bound for the area $|\Siglam|$ in terms of asymptotic
quantities intrinsic to $\Omega$.
We find an inequality which is weaker than the inequality $D(\Siglam,\ell^\star)\leq \underset{\lambda \to \infty}{\lim}D(\Siglam,\ell^\star)$, the difference between both
being a H\"older inequality term. 

The general idea behind the inequality in the present section 
is the observation that one possible method to
approach the condition
$D(\Siglam,\ell^\star)\leq \underset{\lambda \to \infty}{\lim}D(\Siglam,\ell^\star)$
it to obtain an interpolating function $P(\lambda)$
satisfying $D(\Siglam,\ell^\star)\leq P(\lambda) 
\leq\underset{\lambda \to \infty}{\lim}D(\Siglam,\ell^\star)$. While this is hard 
(as finding such a $P(\lambda)$ would prove the Penrose inequality), we have 
been able to find a $P(\lambda)$ satisfying only the first inequality
$D(S_\lambda,\ell^\star) \leq P(\lambda)$, from which a general
inequality bounding $|S_0|$ from above in terms of asymptotic
quantities follows.
\begin{Prop}
Let $\Omega$ be a past asymptotically flat null hypersurface embedded in a spacetime that satisfies the dominant energy condition, $S_0$
a cross section and $\{\Siglam\}$ a geodesic foliation starting at $S_0$. Let $\c$ be the asymptotic coefficient
defined in (\ref{thetakdev}) and $\qh$ the asymptotic metric associated
to $\{ \Siglam \}$. Then,
\begin{equation}
|S_\lambda|\leq \frac{1}{4}\int_{\Sh}\left(\c+2\lambda\right)^2\volsh,
\label{BoungLast}
\end{equation}
and in particular $|S_0|  \leq \frac{1}{4}\int_{\Sh} (\c)^2 \volsh$.
\end{Prop}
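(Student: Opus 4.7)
The inequality is essentially a corollary of the monotonicity computation already carried out inside the proof of Proposition \ref{lemaprimero}. The key auxiliary object is the transversal density $\hat I(\lambda) := \volSiglam/(\c+2\lambda)^2$ which, via the Lie-propagated identification of leaves with $\hat S$, may be viewed as a two-form on $\hat S$ depending on $\lambda$. First I would recall that under DEC the Raychaudhuri equation gives inequality (\ref{sufdos}), equivalently $(-\theta_k)(\c+2\lambda)\geq 4$. Combined with $\frac{d}{d\lambda}\volSiglam = -\theta_k\volSiglam$ this yields
\[
\frac{d\hat I}{d\lambda} \;=\; \frac{1}{(\c+2\lambda)^2}\!\left(-\theta_k-\frac{4}{\c+2\lambda}\right)\volSiglam\;\geq\; 0,
\]
so $\hat I$ is pointwise non-decreasing along each null generator.

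Next I would compute $\lim_{\lambda\to\infty}\hat I(\lambda)=\tfrac{1}{4}\volsh$ from the asymptotic expansion $\volSiglam=(\lambda^2+\c\lambda+o(\lambda))\volsh$ given in (\ref{volsrlamform}). Monotonicity then produces the pointwise bound $\volSiglam\leq \tfrac{1}{4}(\c+2\lambda)^2\volsh$. Integrating over $\hat S$, and using that $\c$ and $\volsh$ are Lie constant so that the right-hand side is a plain function of $\lambda$, gives precisely (\ref{BoungLast}); setting $\lambda=0$ yields the particular case $|S_0|\leq \tfrac{1}{4}\int_{\Sh}(\c)^2\volsh$.

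To make explicit the relationship with the opening remark of the section, I would compare (\ref{BoungLast}) with the condition $D(\Siglam,\ell^\star)\leq\lim_{\lambda\to\infty}D(\Siglam,\ell^\star)$. Using $\int_{\Sh}\volsh=4\pi\rq^2$ and (\ref{Dsaympdevelop}), the latter is equivalent to
\[
|S_\lambda|\;\leq\; \frac{1}{16\pi\rq^2}\Bigl(\int_{\Sh}(\c+2\lambda)\volsh\Bigr)^2,
\]
which by the H\"older inequality is no larger than $\tfrac{1}{4}\int_{\Sh}(\c+2\lambda)^2\volsh$. Hence (\ref{BoungLast}) is strictly weaker than $D\leq \lim D$, the gap being exactly this H\"older term, as announced at the start of the section. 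No genuine obstacle is expected, since every ingredient is already available in the paper: the only step worth emphasising is the passage from the pointwise inequality $\hat I\leq \tfrac{1}{4}\volsh$ to its integrated form, which simply exploits the Lie-constancy of $\c$ and $\volsh$.
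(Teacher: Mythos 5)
Your proof is correct, and it reaches the bound by a cleaner route than the paper's own argument, although both ultimately rest on the same two ingredients: the DEC consequence (\ref{sufdos}) of the Raychaudhuri equation, and the normalization $\volSiglam=(\lambda^2+\c\lambda+o(\lambda))\volsh$ at infinity. The paper does not reuse the monotone density $\hat{I}(\lambda)=\volSiglam/(\c+2\lambda)^2$ from the proof of Proposition \ref{lemaprimero}; instead it introduces an auxiliary parameter $\lambda_0>0$, writes $\volSiglam/(\lambda+\lambda_0)^2=\hat{f}(\lambda)\volsh$, integrates the resulting ODE explicitly to express $\hat{f}(\lambda)$ as an exponential of $\int_\lambda^\infty(\theta_k+\tfrac{2}{s+\lambda_0})\,ds$ (fixing the constant of integration from the behaviour at infinity), and only then invokes (\ref{sufdos}) to bound that integral by $\log\bigl(\tfrac{2\lambda+\c}{2(\lambda+\lambda_0)}\bigr)^2$; the factors of $\lambda_0$ cancel at the end. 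Your version short-circuits this: the monotonicity of $\hat{I}$ plus its limit $\tfrac{1}{4}\volsh$ give the pointwise bound $\volSiglam\leq\tfrac{1}{4}(\c+2\lambda)^2\volsh$ in one step, and integration over $\Sh$ finishes the proof. One point worth making explicit if you cite Proposition \ref{lemaprimero}: that proposition is stated under \emph{strong} past asymptotic flatness, whereas the present statement assumes only past asymptotic flatness; however, the two facts you actually need — $\hat{I}'\geq 0$ via (\ref{sufdos}) and $\lim_{\lambda\to\infty}\hat{I}=\tfrac{1}{4}\volsh$ via (\ref{volsrlamform}) — require only the weaker hypothesis, so your argument is legitimate as written (and indeed you re-derive both steps rather than merely citing them). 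Your closing comparison with $D(\Siglam,\ell^\star)\leq\lim_{\lambda\to\infty}D(\Siglam,\ell^\star)$ via H\"older reproduces the paper's own remark and is accurate.
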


\begin{proof}
Let us fix any $\lambda_0>0$ and consider the volume form on
$\Siglam$ ($\lambda \geq 0$) defined by
\begin{equation*}
\bm{\hat{\eta}_{\Siglam}}
\defi \frac{1}{(\lambda+\lambda_0)^2}\volSiglam.
\end{equation*} 
Using the evolution equation
$\frac{d}{d\lambda}\bm{\eta_{S_\lambda}}=-\theta_k\bm{\eta_{S_\lambda}}$, the derivative
of $\bm{\hat{\eta}_{\Siglam}}$
is 
\begin{equation}
\label{diffSiglamhat}
\frac{d}{d\lambda}(\bm{\hat{\eta}_{\Siglam}})=-\left(\theta_k+\frac{2}{\lambda+\lambda_0}\right)\bm{\hat{\eta}_{\Siglam}}.
\end{equation}
Writing $\bm{\hat{\eta}_{\Siglam}}=\hat{f}(\lambda)\volsh$,
(\ref{diffSiglamhat}) becomes a differential equation for $\hat{f}$,
which can be integrated as
\begin{equation*}
\hat{f}(\lambda)=\hat{f}(0)e^{-\int_0^\lambda\left(\theta_k+\frac{2}{s+\lambda_0} \right)ds}.
\end{equation*} 
The initial value $\hat{f}(0)$ can be computed ``at infinity'' as a consequence
of $\bm{\hat{\eta}_{\Siglam}} \longrightarrow \volsh$ when $\lambda \rightarrow
\infty$. Thus
$\hat{f}(0)=e^{\int_0^\infty\left(\theta_k+\frac{2}{s+\lambda_0} \right)ds}$ and therefore
\begin{equation*}
\hat{f}(\lambda)=e^{\int_\lambda^\infty\left(\theta_k+\frac{2}{s+\lambda_0} \right)ds}.
\end{equation*}
We aim at finding an upper bound for
$\hat{f}(\lambda)$. We use the inequality (\ref{sufdos}), which implies
 $\theta_k+\frac{2}{\lambda+\lambda_0}\leq\frac{-4}{\c+2\lambda}+\frac{2}{\lambda+\lambda_0}$ and then 
\begin{equation*}
\int_\lambda^\infty\left(\theta_k+\frac{2}{s+\lambda_0} \right)ds\leq \int_\lambda^\infty\left( \frac{-4}{2s+\c}+\frac{2}{s+\lambda_0} \right)ds=
\log\left( \frac{2\lambda+\c}{2 (\lambda+\lambda_0)}\right)^2 .
\end{equation*}
Finally,
\begin{eqnarray*}
 &&|S_\lambda|=\int_{\Siglam}\volSiglam=\int_{\Sh}(\lambda+\lambda_0)^2\hat{f}(\lambda)\volsh=\int_{\Sh}(\lambda+\lambda_0)^2e^{\int_\lambda^\infty\left(\theta_k+\frac{2}{s+\lambda_0} \right)ds}\volsh\leq  \\
&&(\lambda+\lambda_0)^2\int_{\Sh}e^{\log\left( \frac{2\lambda+\c}{2(\lambda+\lambda_0)}\right)^2 }\volsh=\frac{1}{4}\int_{\Sh}(\c+2\lambda)^2\volsh.
\end{eqnarray*}

\end{proof}

\begin{remark}
The condition $D(S_\lambda,\ell^\star)\leq \underset{\lambda \to \infty}{\lim}D(S_\lambda,\ell^\star) $, namely
\begin{equation*}
\sqrt{\frac{|\Siglam|}{16\pi}}-\frac{\rq}{2}\lambda \leq \frac{1}{16\pi\rq}\int_{\Sh}\c\volsh,
\end{equation*}
is equivalent to
\begin{equation*}
|S_\lambda|\leq \frac{1}{16\pi\rq^2} \left( \int_{\Sh}(\c+2\lambda)\volsh\right)^2.
\end{equation*}
As mentioned above, this inequality is stronger than (\ref{BoungLast}), the
difference being a H\"older inequality term. Indeed, a direct
application of the H\"older inequality yields
\begin{equation*}
|S_\lambda|\leq \frac{1}{16\pi\rq^2} \left( \int_{\Sh}(\c+2\lambda)\volsh\right)^2 \leq \frac{1}{4}\int_{\Sh}(\c+2\lambda)^2\volsh
\end{equation*}
\end{remark}



\section*{Acknowledgments}
Financial support under the project  FIS2012-30926 (MICINN)
is acknowledged. A.S. acknowledges the Ph.D. grant AP2009-0063 (MEC).

\end{document}